\providecommand{\U}[1]{\protect\rule{.1in}{.1in}}
\newtheorem{theorem}{Theorem}
\newtheorem{definition}[theorem]{Definition}
\newtheorem{lemma}[theorem]{Lemma}
\newtheorem{remark}[theorem]{Remark}
\newenvironment{proof}[1][Proof]{\noindent\textbf{#1.} }{\ \rule{0.5em}{0.5em}}
\begin{document}

\title{On single-copy maximization of measured $f$-divergence between a given pair of
quantum states}
\author{Keiji Matsumoto\\National Institute of Informatics, \\2-1-2, Hitotsubashi, Chiyoda-ku, Tokyo 101-8430\\keiji@nii.ac.jp}
\maketitle

\begin{abstract}

\end{abstract}

This paper deals with (single-copy) maximization of classical $f$-divergence,
which is a generalization of Kullback-Leibler divergence and Renyi-type
relative entropy, of the distributions of measurement outputs of a given pair
of quantum states. So far, there had been significant development concerning
asymptotic maximization, or maximizing rate of the quantity when collective
measurements are performed on large number of copies of states. On the other
hand, however, relatively little has been done about single-copy maximization,
and the question is solved only for very restricted examples of $f$. The
pourpose of the present paper is to push forward the research by investigating
the properties of the maximized $f$-divergence, and rewriting the maximization
problem to more tractable form. The consequences of these efforts include an
expression of the maximized quantity by "non-commutative Radon-Nikodym
derivative", and closed formulas of the quantity in some special cases, e.g.,
when the first argument is a pure state,or $f$ is in some specific form.

\section{Introduction and summary of results}

This paper deals with (single-copy) maximization of classical $f$-divergence
between the distributions of a measurement outputs of a given pair of quantum
states. $f$-divergence $D_{f}$ between the probability density functions
$p_{1}$ and $p_{2}$ over a discrete set is defined as
\[
D_{f}\left(  p_{1}||p_{2}\right)  :=\sum_{x}p_{2}\left(  x\right)  f\left(
\frac{p_{1}\left(  x\right)  }{p_{2}\left(  x\right)  }\right)  ,
\]
if $p_{2}\left(  x\right)  >0$ for all $x$. (The definition for the general
case is given later.)

This problem is import since $D_{f}$ has good operational meanings. For
example, $f_{\mathrm{KL}}\left(  \lambda\right)  :=\lambda\ln\lambda$
correspond to Kullback-Leibler divergence. Also,
\begin{equation}
f_{\alpha}\left(  \lambda\right)  :=\mathrm{sign}\left(  \left(
\alpha-1\right)  \alpha\right)  \cdot\lambda^{\alpha}.\label{Renyi}%
\end{equation}
is essentially Rnyi-type relative entropy. (Here $\mathrm{sign}\left(
c\right)  $ is either $1$ or $-1$ depending on $c$ is non-negative or
negative, $\mathrm{sign}\left(  c\right)  =c/\left\vert c\right\vert $ .)They
play key role in the theory of large deviation, and thus extensively used in
asymptotic analysis of error probability of decoding, hypothesis test, and so on.

Other $f$-divergences than these have at least one operational meaning. If $f$
is a proper lower semicontinuous convex function whose domain contains
positive half-line, $D_{f}\left(  p_{1}||p_{2}\right)  $ is the optimal gain
of a certain Bayes decision problem. In other words, for each $f$ satisfying
above mentioned conditions, there is a pair of real valued functions $w_{1}$
and $w_{2}$ on decision space representing a gain of decision $d$, with
\begin{equation}
D_{f}\left(  p_{1}||p_{2}\right)  =\sup_{d\left(  \cdot\right)  }\sum
_{x}\left(  w_{1}\left(  d\left(  x\right)  \right)  p_{1}\left(  x\right)
+w_{2}\left(  d\left(  x\right)  \right)  p_{2}\left(  x\right)  \right)
.\label{sup-w-0}%
\end{equation}
Conversely, for each $\vec{w}\left(  \cdot\right)  =\left(  w_{1}\left(
\cdot\right)  ,w_{2}\left(  \cdot\right)  \right)  $, there is a proper lower
semicontinuous convex function $f$ satisfying the above identity. Also, by
(\ref{sup-w-0}) and the celebrated randomization criterion \cite{Strasser}%
\cite{Torgersen}, there is a Markov map which sends $\left(  p,q\right)  $ to
$\left(  p^{\prime},q^{\prime}\right)  $ if and only if $D_{f}\left(
p||q\right)  \geq D_{f}\left(  p^{\prime}||q^{\prime}\right)  $ holds for any
convex function $f$ .

In quantum information and quantum statistics, finding the optimal measured
$f$-divergence $D_{f}^{\min}$ is of interest:
\[
D_{f}^{\min}\left(  \rho_{1}||\rho_{2}\right)  :=\sup_{M\text{:POVM}}%
D_{f}\left(  P_{\rho_{1}}^{M}||P_{\rho_{2}}^{M}\right)  ,
\]
where POVM is the short for positive operator valued measures, and
$P_{\rho_{\theta}}^{M}$ is the probability distribution obtained by
application of the measurement $M$ to the state $\rho_{\theta}$ ($\theta
\in\left\{  1,2\right\}  $). (Underlying Hilbert space is finite dimensional
throughout the paper.) However, calculation of $D_{f}^{\min}\left(  \rho
_{1}||\rho_{2}\right)  $, being maximization of non-linear functional of POVM,
is far from tractable. Thus, so far, most of the works are devoted to
asymptotic analysis: for example, in their celebrated paper \cite{HiaiPetz},
Hiai and Petz showed that the asymptotic limit $\lim_{n\rightarrow\infty}%
\frac{1}{n}D_{f_{\mathrm{KL}}}^{\min}\left(  \rho_{1}^{\otimes n}||\rho
_{2}^{\otimes n}\right)  $ equals Umegaki-von Neumann type relative entropy.
In \cite{MosonyiOgawa}\cite{HayashiTomamichel}, they computed similar quantity
for $f_{\alpha}$.

However, if we turn to single-copy optimization, the problem had been so far
solved only for $f_{1/2}$, and $f\left(  \lambda\right)  =\left\vert
1-\lambda\right\vert $, which corresponds to fidelity and statistical
distance, respectively. Single-copy maximization is aso important, firstly
because collective measurements are technologically highly demanding, and
secondly because comparison between signle-conpy maximum and asymptotic one
signifies the effectivenss of collective measurements. Also, detailed
knowledge of single copy maximum may give deeper understanding of asymptotic theory.

A purpose of the present paper is to advance the study of exact (not
asymptotic) optimization, by rewriting the maximization problem to more
tractable form, and giving closed formulas of the quantity in some special
cases. Our main mathematical tool is convex analysis, and we exploit the
observation (\ref{sup-w-0}), or that $f$-divergence is supremum of linear
functional. 

One of the two main results in this direction is Theorem\thinspace
\ref{th:max-T}: if the convex conjugate $f^{\ast}$ of $f$ is operator convex,
$D_{f}^{\min}\left(  \rho_{1}||\rho_{2}\right)  $ is written as the supremum
of a concave function of a Hermitian operator \textit{without} extending the
underlying Hilbert space, namely,%
\[
D_{f}^{\min}\left(  \rho_{1}||\rho_{2}\right)  =\sup\left\{  \mathrm{tr}%
\,\rho_{1}T-\mathrm{tr}\,\rho_{2}f^{\ast}\left(  T\right)  ;\mathrm{spec}%
\,T\subset\mathrm{dom}\,f^{\ast}\right\}  .
\]
With some additional assumptions on $f$, \ this further can be rewritten as
\[
D_{f}^{\min}\left(  \rho_{1}||\rho_{2}\right)  =\mathrm{tr}\,\rho_{2}f\left(
f^{\ast\prime}\left(  T_{0}\right)  \right)  ,
\]
where, denoting the Frechet derivative of $f^{\ast}$ by $\mathrm{D}f^{\ast}$,
$T_{0}$ is a solution to the matrix equation
\[
\rho_{1}=\mathrm{D}f^{\ast}\left(  T_{0}\right)  \left(  \rho_{2}\right)  .
\]
This equation is straight forwardly solved if $f=$ $f_{2}$ and $f_{-1}$,
resulting in closed formla of the quantity in each case.

The other main result is Theorem \ref{th:with-kernel}, which simplifies the
optimization when $\rho_{1}$ has non-trivial kernel: if $f^{\ast}$ is operator
convex, $\mathrm{dom}\,f^{\ast}$ is unbounded from below, one can reduce the
optimization problem to the one in $\mathrm{supp\,}\rho_{1}$, namely,
\[
D_{f}^{\min}\left(  \rho_{1}||\rho_{2}\right)  =D_{f}^{\min}\left(  \rho
_{1}||\pi_{\rho_{1}}\rho_{2}\pi_{\rho_{1}}\right)  +f\left(  0\right)  \left(
1-\mathrm{tr}\,\rho_{2}\pi_{\rho_{1}}\right)  ,
\]
where $\pi_{\rho_{1}}$ is the projection onto $\mathrm{supp\,}\rho_{1}$. If
$\rho_{1}\,$\ is a pure state, $\rho_{1}=\left\vert \varphi_{1}\right\rangle
\left\langle \varphi_{1}\right\vert $, \ this is gives closed formula,
\[
D_{f}^{\min}\left(  \left\vert \varphi_{1}\right\rangle \left\langle
\varphi_{1}\right\vert \,||\rho_{2}\right)  =\left\langle \varphi
_{1}\right\vert \rho_{2}\left\vert \varphi_{1}\right\rangle f\left(  \frac
{1}{\left\langle \varphi_{1}\right\vert \rho_{2}\left\vert \varphi
_{1}\right\rangle }\right)  +f\left(  0\right)  \left(  1-\left\langle
\varphi_{1}\right\vert \rho_{2}\left\vert \varphi_{1}\right\rangle \right)  .
\]

Using above results, we analyze $D_{f}^{\min}$ of infinitesimally close two
staes, up to the second order of the distance between them:
\[
\lim_{\eta^{\prime}\rightarrow\eta}\frac{1}{\left(  \eta^{\prime}-\eta\right)
^{2}}D_{f}^{\min}\left(  \rho_{\eta}||\rho_{\eta^{\prime}}\right)  ,
\]
where $\left\{  \rho_{\eta}\right\}  _{\eta\in\mathbb{R}}$ is a family of
parameterized states and $f$ has good properties. It have been a folklore that
this limit equals the constant multiple of SLD (symmetric logarithmic
derivative) Fisher information, which plays an important role in the
asymptotic theory of statistics. We prove the folklore in the case where
$\rho_{\eta}$ 's rank is either full or one, and disprove it in other cases,
giving the alternative correct formula (\ref{D=JS-2}).

\section{Classical $f$-divergence}

\label{sec:classical}

In this section, we summarize known facts about classical $f$-divergence and
convex functions. As in \cite{Rockafellar}, we suppose that $f$ is a map from
$\mathbb{R}^{n}$ $\ $to $\mathbb{R\cup}\left\{  \pm\infty\right\}  $. Instead
of saying that $f$ is not defined on a certain set, we say that $f\left(
\lambda\right)  =\infty$ on that set. 

\begin{definition}
The effective domain of $f$, denoted by $\mathrm{dom}\,f$ ,\ is the set of all
$\lambda$'s with $f\left(  \lambda\right)  \mathbb{<\infty}$. $f$ is a
\textit{convex function} if and only if its \textit{epigraph}, or the set
\[
\mathrm{epi}f:=\left\{  \left(  \lambda_{1},\lambda_{2}\right)  ;\lambda
_{2}\geq f\left(  \lambda_{1}\right)  \right\}
\]
is convex. A function $f$ is proper if and only if $f$ is nowhere $-\infty$
and not $\infty$ everywhere, and is lower semi-continuous if and only if the
set $\left\{  \lambda_{1};\lambda_{2}\geq f\left(  \lambda_{1}\right)
\right\}  $ is closed for any $\lambda_{2}$. \ Given a convex functionion $f$,
its \textit{lower semi-continuous hull} is a greatest lower semi-continuous
function (not necessarily finite) majorized by $f$. 
\end{definition}

\begin{remark}
An improper convex function, being necessarily infinite except perhaps at
relative boundary points of its effective doimain (Theorem\thinspace7.2 of
\cite{Rockafellar}), rarely appears in application.
\end{remark}

\begin{lemma}
(Theorem\thinspace7.1 of \cite{Rockafellar}) A proper convex function $f$ is
lower semi-continuous if and only if its epigraph is closed. Given a convex
functionion $f$, its lower semi-continuous hull always exists, and coincide
with $f$ except perhaps at the relative boundary points of its effective
domain.\ The epigraph of the lower semi-continuous hull is the closure of the
epigraph of $f$.
\end{lemma}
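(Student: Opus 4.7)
The plan is to verify the three assertions in sequence, using only elementary properties of convex sets, the definition of an epigraph, and the standard fact that a proper convex function is continuous on the relative interior of its effective domain.

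For the equivalence of lower semi-continuity and closedness of $\mathrm{epi}\,f$, I would first show that if $f$ is lsc then $\mathrm{epi}\,f$ is sequentially closed: given $(\lambda_{1,n},\lambda_{2,n})\to(\lambda_1,\lambda_2)$ with $\lambda_{2,n}\ge f(\lambda_{1,n})$, one has $f(\lambda_1)\le\liminf_n f(\lambda_{1,n})\le\liminf_n\lambda_{2,n}=\lambda_2$. Conversely, if $\mathrm{epi}\,f$ is closed, then for each $\lambda_2\in\mathbb{R}$ the sublevel set $\{\lambda_1:f(\lambda_1)\le\lambda_2\}$ equals the projection of the closed slice $\mathrm{epi}\,f\cap(\mathbb{R}^n\times\{\lambda_2\})$ onto the first factor, hence is closed, which is exactly the definition of lower semi-continuity recalled in the paper. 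To construct the lsc hull, I would take $\bar f$ to be the pointwise supremum of all lsc functions $g\le f$; this supremum is itself lsc, and by construction is the greatest lsc minorant of $f$.

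For the identification $\mathrm{epi}\,\bar f=\overline{\mathrm{epi}\,f}$, I would first check that the closure of a convex epigraph is again a convex epigraph: convexity is preserved under closure, and the vertical-ray property $(\lambda_1,\lambda_2)\in C\Rightarrow(\lambda_1,\lambda_2')\in C$ for $\lambda_2'\ge\lambda_2$ passes to the closure by continuity of the upward translation. So $\overline{\mathrm{epi}\,f}=\mathrm{epi}\,g$ for some convex function $g$, which is lsc by the first assertion and satisfies $g\le f$ since $\overline{\mathrm{epi}\,f}\supseteq\mathrm{epi}\,f$; hence $g\le\bar f$. Conversely, $\mathrm{epi}\,\bar f$ is closed (by the first assertion applied to $\bar f$) and contains $\mathrm{epi}\,f$, so $\mathrm{epi}\,\bar f\supseteq\overline{\mathrm{epi}\,f}=\mathrm{epi}\,g$ and $\bar f\le g$, giving equality.

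Finally, to show $\bar f=f$ off the relative boundary of $\mathrm{dom}\,f$, I would invoke the standard Rockafellar result (Theorem 10.1) that a proper convex function is continuous relative to $\mathrm{ri}\,\mathrm{dom}\,f$, so $f$ itself is lsc there and $\bar f=f$, while outside $\mathrm{cl}\,\mathrm{dom}\,f$ both are $+\infty$. The only step that is not purely formal is the preservation of the vertical-ray property under closure; no serious obstacle is anticipated since the statement is a standard result from convex analysis quoted verbatim.
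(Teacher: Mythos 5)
The paper offers no proof of this lemma to compare against: it is quoted verbatim as Theorem\thinspace7.1 (together with, in effect, Theorem\thinspace7.4) of Rockafellar's \textit{Convex Analysis}, so your self-contained argument is necessarily a different route, and it is essentially correct and standard. Two places deserve one more line each. First, in the direction ``lsc $\Rightarrow$ epigraph closed'' you invoke the sequential $\liminf$ characterization of lower semi-continuity, whereas the paper's stated definition is closedness of all sublevel sets (which is what you use in the converse direction); the equivalence of the two characterizations is elementary but should be recorded, since otherwise the two halves of your first paragraph are proving slightly different statements. Second, and more substantively, the final inference ``$f$ is continuous relative to $\mathrm{ri}\,\mathrm{dom}\,f$, hence $f$ is lsc at each point of $\mathrm{ri}\,\mathrm{dom}\,f$, hence $\bar{f}=f$ there'' is not valid for arbitrary functions: continuity relative to a set says nothing about sequences approaching from outside that set. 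It is rescued here by convexity of the domain: if $x\in\mathrm{ri}\,\mathrm{dom}\,f$ and $y_{n}\rightarrow x$ with $f\left(  y_{n}\right)  <\infty$, then $y_{n}\in\mathrm{aff}\,\mathrm{dom}\,f$, so $y_{n}$ eventually lies in the relatively open set $B\left(  x,\varepsilon\right)  \cap\mathrm{aff}\,\mathrm{dom}\,f\subset\mathrm{ri}\,\mathrm{dom}\,f$, where relative continuity applies, while sequences exiting $\mathrm{dom}\,f$ contribute $+\infty$; one also needs the (standard, but unstated) fact that the greatest lsc minorant agrees with $f$ at every point where $f$ is lsc, e.g.\ via the identification of $\bar{f}$ with the $\liminf$-regularization of $f$. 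By contrast, the step you single out as the only non-formal one --- preservation of the vertical-ray property under closure --- is indeed fine as written; just note that the vertical sections of the closure are closed up-sets, which is what guarantees the closure is literally the epigraph of a function, so that your sandwich argument $g\leq\bar{f}\leq g$ closes correctly.
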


\begin{lemma}
\label{lem:p-sup}(Corollary 13.5.1 of \cite{Rockafellar}) If $f$ is proper and
lower semi-continuous and convex, it is the pointwise supremum of linear functions.
\end{lemma}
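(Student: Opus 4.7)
The plan is to establish the identity $f(\lambda)=\sup\{\ell(\lambda):\ell\text{ affine on }\mathbb{R}^n,\,\ell\le f\}$ via separation arguments applied to $\mathrm{epi}\,f$, which is a non-empty closed convex subset of $\mathbb{R}^{n+1}$ by the previous lemma. The inequality $f\ge\sup$ being trivial, I would fix $\lambda_0$ and $t_0<f(\lambda_0)$ and exhibit an affine minorant $\ell\le f$ with $\ell(\lambda_0)>t_0$; letting $t_0\uparrow f(\lambda_0)$ would then finish.

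First I would apply the strong separation theorem to the point $(\lambda_0,t_0)\notin\mathrm{epi}\,f$, producing $(a,b)\in\mathbb{R}^n\times\mathbb{R}$ and $c\in\mathbb{R}$ with $a\cdot\lambda+bt\le c<a\cdot\lambda_0+bt_0$ for every $(\lambda,t)\in\mathrm{epi}\,f$. Since $t$ is unbounded above on the epigraph, this forces $b\le 0$. In the generic case $b<0$, the affine function $\ell(\lambda):=(a\cdot\lambda-c)/|b|$ is immediately seen to satisfy $\ell\le f$ everywhere and $\ell(\lambda_0)>t_0$, completing the argument.

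The main obstacle is the borderline case $b=0$, where the separating hyperplane is vertical and only isolates $\lambda_0$ from $\overline{\mathrm{dom}\,f}$ rather than bounding $f$ from below. To resolve it, I would first produce a single affine minorant $\ell_0\le f$ by invoking the existence of a subgradient at some $\lambda_1$ in the relative interior of $\mathrm{dom}\,f$ (non-empty because $f$ is proper). Then, because $a\cdot\lambda\le c$ throughout $\mathrm{dom}\,f$, the tilted family $\ell_k(\lambda):=\ell_0(\lambda)+k(a\cdot\lambda-c)$ remains a minorant of $f$ for every $k\ge 0$ (the inequality is automatic off $\mathrm{dom}\,f$, where $f=\infty$). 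Since $\ell_k(\lambda_0)=\ell_0(\lambda_0)+k(a\cdot\lambda_0-c)\to\infty$ as $k\to\infty$, taking $k$ large yields the required affine minorant exceeding $t_0$ at $\lambda_0$, and the contradiction closes the proof.
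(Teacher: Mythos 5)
The paper never proves this lemma internally --- it is quoted verbatim as Corollary 13.5.1 of \cite{Rockafellar} --- so the only question is whether your separation argument stands on its own, and it does: it is essentially the textbook proof of the cited result (Rockafellar's Theorem 12.1 and its corollary). Both branches are handled correctly: for $b<0$ the normalized hyperplane gives the affine minorant exceeding $t_{0}$ at $\lambda_{0}$, and for the vertical case $b=0$ your tilting family $\ell_{k}=\ell_{0}+k\left(  a\cdot\lambda-c\right)  $ is exactly the standard repair, valid because $a\cdot\lambda-c\leq0$ on $\mathrm{dom}\,f$ while $a\cdot\lambda_{0}-c>0$. Three small remarks. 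First, the word ``linear'' in the statement must be read as ``affine,'' as you implicitly do; the literally linear claim is false in general (e.g.\ $f\equiv-1$ has no linear minorants), though for the positively homogeneous function $g$ to which the paper applies the lemma, every affine minorant is dominated by its linear part, which is why (\ref{g=supW}) can be written with linear functionals. Second, your appeal to subgradient existence at a relative-interior point of $\mathrm{dom}\,f$ is correct and non-circular (that theorem is itself proved by separation, not via the present lemma), but it is heavier than necessary: separating any point $\left(  \lambda_{1},t_{1}\right)  $ with $\lambda_{1}\in\mathrm{dom}\,f$ and $t_{1}<f\left(  \lambda_{1}\right)  $ automatically forces $b<0$, since $b=0$ would yield $a\cdot\lambda_{1}\leq c<a\cdot\lambda_{1}$; this produces the needed $\ell_{0}$ with no extra machinery. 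Third, your closing phrase ``the contradiction closes the proof'' is a slip of wording --- the argument is a direct construction, not a reductio --- but this does not affect correctness.
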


From here, unless otherwise mentioned, functions denoted by $f$, $f_{\alpha}$
etc are a proper lower semi-continuous convex function on $\mathbb{R}$, and
its effective domain contains $\left(  0,\infty\right)  $. For a given $f$,
$f$-divergence $D_{f}\left(  P_{1}||P_{2}\right)  $ between positive finite
measures $P_{1}$ and $P_{2}$ is defined as follows. 

\begin{definition}
Let $p_{1}$ and $p_{2}$ be density function of $P_{1}$ and $P_{2}$ with
respect to a measure $\mu$ which dominates them. (Such a measure $\mu$ always
exists. For example, one may choose $\mu=P_{1}+P_{2}$.)  When $p_{1}$ and
$p_{2}$ have the common support (e.g. the set where they are positive),
\[
D_{f}\left(  P_{1}||P_{2}\right)  :=\int_{\mathrm{supp}\,p_{2}}p_{2}\left(
x\right)  f\left(  \frac{p_{1}\left(  x\right)  }{p_{2}\left(  x\right)
}\right)  \mathrm{d}\mu\left(  x\right)  .
\]
(This definition seemingly depends on the choice of $\mu$, but in fact it does
not.) When their supports are not identical, we extend $D_{f}$ so that the
function $\left(  P_{1},P_{2}\right)  \rightarrow D_{f}\left(  P_{1}%
||P_{2}\right)  $ is the pointwise supremum of linear functions:
\[
D_{f}\left(  P_{1}||P_{2}\right)  :=\int g\left(  p_{1}\left(  x\right)
,p_{2}\left(  x\right)  \right)  \mathrm{d}\mu\left(  x\right)  ,
\]
where $g\left(  \lambda_{1},\lambda_{2}\right)  $ is the lower semi-continuous
hull of $\lambda_{2}f\left(  \frac{\lambda_{1}}{\lambda_{2}}\right)  $, or
more explicitly ( see p.\thinspace35 and p.67 of \cite{Rockafellar} ),
\[
g\left(  \lambda_{1},\lambda_{2}\right)  :=\left\{
\begin{array}
[c]{cc}%
\lambda_{2}f\left(  \frac{\lambda_{1}}{\lambda_{2}}\right)  , & \text{if
}\lambda_{1}\in\mathrm{dom}\,f,\lambda_{2}>0\\
\lim_{\lambda_{2}\downarrow0}\lambda_{2}f\left(  \frac{\lambda_{1}}%
{\lambda_{2}}\right)  , & \text{if }\lambda_{1}\,\in\mathrm{dom}%
\,f,\,\lambda_{2}=0,\\
0, & \text{if }\lambda_{1}=\lambda_{2}=0,\\
\infty, & \text{if }\lambda_{1}\not \in \mathrm{dom}\,f\text{\ \ or }%
\lambda_{2}<0.
\end{array}
\right.
\]

\end{definition}

That the function $\left(  P_{1},P_{2}\right)  \rightarrow D_{f}\left(
P_{1}||P_{2}\right)  $ defined above is the pointwise supremum of linear
functions is proved as follows. Since $g$ is proper and lower semi-continuous
and convex, by Lemma\thinspace\ref{lem:p-sup}, it is the pointwise supremum of
linear functions,%
\begin{equation}
g\left(  \lambda_{1},\lambda_{2}\right)  =\sup_{\vec{w}\in\mathcal{W}}%
\sum_{\theta\in\left\{  1,2\right\}  }w_{\theta}\lambda_{\theta}%
,\label{g=supW}%
\end{equation}
where \
\[
\mathcal{W}_{f}:\mathcal{=}\left\{  \vec{w}=\left(  w_{\theta}\right)
_{\theta\in\left\{  1,2\right\}  };w_{1}\leq w_{1}^{\prime},\,w_{2}%
\leq-f^{\ast}\left(  w_{1}^{\prime}\right)  ,\exists w_{1}^{\prime}%
\in\mathrm{dom}\,f^{\ast}\right\}  ,
\]
and $f^{\ast}$ is convex conjugate of $f$,
\[
f^{\ast}\left(  t\right)  :=\sup_{\lambda\in\mathbb{R}}\left(  t\lambda
-f\left(  \lambda\right)  \right)  .
\]

Therefore, as will be explained below, \
\begin{align}
&  D_{f}\left(  P_{1}||P_{2}\right)  \nonumber\\
&  =\int\left(  \sup_{\vec{w}\in\mathcal{W}}\sum_{\theta\in\left\{
1,2\right\}  }w_{\theta}p_{\theta}\left(  x\right)  \right)  \mathrm{d}%
\mu\left(  x\right)  \nonumber\\
&  =\sup\left\{  \int\sum_{\theta\in\left\{  1,2\right\}  }w_{\theta}\left(
x\right)  p_{\theta}\left(  x\right)  \mathrm{d}\mu\left(  x\right)  ;\vec
{w}\left(  \cdot\right)  \text{: bounded, measurable, into }\mathcal{W}%
_{f}\text{ }\right\}  .\label{D=sup-w}%
\end{align}
Thus we have the assertion. Here, the second identity above holds since the
function
\[
x\rightarrow\sup_{\vec{w}\in\mathcal{W}}\sum_{\theta\in\left\{  1,2\right\}
}w_{\theta}p_{\theta}\left(  x\right)
\]
is meaureable, and any measureable function is aribtrarily approximated by a
simple function, which is bounded.

\begin{remark}
Since $g$ is in addition positively homogeneous, or
\[
\forall a\geq0,\,\,g\left(  a\lambda_{1},a\lambda_{2}\right)  =ag\left(
\lambda_{1},\lambda_{2}\right)  ,\,
\]
it is obvious that the value of $D_{f}$ does not depends on the choice of
$\mu$, despite its apparent dependency on $\mu$.
\end{remark}

\begin{remark}
(\ref{D=sup-w}) indicates (\ref{sup-w-0}).To see this, use $\mathcal{W}_{f}$
as a decision space, and let the the gain of the decision $\vec{w}$ be
$w_{\theta}$ when true probability distribution is $P_{\theta}$.
\end{remark}

\begin{remark}
If $p_{2}\left(  x\right)  =0$ and $p_{1}\left(  x\right)  >0$,
\begin{align*}
g\left(  p_{1}\left(  x\right)  ,p_{2}\left(  x\right)  \right)   &
=\lim_{\lambda_{2}\downarrow0}\lambda_{2}f\left(  \frac{p_{1}\left(  x\right)
}{\lambda_{2}}\right)  ,\\
&  =p_{1}\left(  x\right)  \lim_{\lambda_{2}^{\prime}\downarrow0}\lambda
_{2}^{\prime}f\left(  \frac{1}{\lambda_{2}^{\prime}}\right)  ,
\end{align*}
and if $p_{1}\left(  x\right)  =0$ and $p_{2}\left(  x\right)  =0$, $g\left(
p_{1}\left(  x\right)  ,p_{2}\left(  x\right)  \right)  =0$. Thus,
\[
D_{f}\left(  P_{1}||P_{2}\right)  =\int_{\mathrm{supp}\,p_{2}}p_{2}\left(
x\right)  f\left(  \frac{p_{1}\left(  x\right)  }{p_{2}\left(  x\right)
}\right)  \mathrm{d}\mu\left(  x\right)  +P_{1}\left(  \left\{  x;p_{2}\left(
x\right)  =0,p_{1}\left(  x\right)  \neq0\right\}  \right)  \lim_{\lambda
_{2}\downarrow0}\lambda_{2}f\left(  \frac{1}{\lambda_{2}}\right)  .
\]

\end{remark}

The correspondence between $D_{f}$ and $\mathcal{W}_{f}$ is one-to-one, but
the one between $\mathcal{W}_{f}$ and $f$ is not, since only monotone
increasing part of $f^{\ast}$ is relavant in the definition of $\mathcal{W}%
_{f}$, or equivalently, the values of $f$ in the negative half line does not
affect the value of $D_{f}$.

\begin{definition}
A convex function $f$ on $\mathbb{R}$ is \textit{canonical} if $f^{\ast}$ is
strictly monotone increasing on $\mathrm{dom}\,f^{\ast}$. If $f^{\ast}$ is not
canonical, we define its "canonicalization" $\ f_{0}$ by
\[
f_{0}^{\ast}\left(  t\right)  :=\left\{
\begin{array}
[c]{cc}%
f^{\ast}\left(  t\right)  , & \text{if }\,\forall t^{\prime}>t,\,\,f^{\ast
}\left(  t^{\prime}\right)  >f^{\ast}\left(  t\right)  ,\,\\
+\infty & \text{otherwise.}%
\end{array}
\right.
\]

\end{definition}

Recall that $f$ used in the definition of $f$-divergence is not always
$+\infty$ on the positive half line. Therefore, $\mathrm{dom}\,f_{0}^{\ast}$
is not empty.

Denoting by $f_{+}^{\prime}$ the right derivative of $f$, in fact we have
\begin{equation}
f_{0}\left(  \lambda\right)  =\left\{
\begin{array}
[c]{cc}%
f\left(  \lambda\right)  , & \text{if }\lambda\geq0,\\
f_{+}^{\prime}\left(  0\right)  t+f\left(  0\right)  , & \text{if }\lambda<0.
\end{array}
\right.  \label{f_0-1}%
\end{equation}
if $f_{+}^{\prime}\left(  0\right)  $ is finite, and
\begin{equation}
f_{0}\left(  \lambda\right)  =\left\{
\begin{array}
[c]{cc}%
f\left(  \lambda\right)  , & \text{if }\lambda\geq0,\\
+\infty, & \text{if }\lambda<0,
\end{array}
\right.  \label{f_0-2}%
\end{equation}
otherwise. In addition, $t_{0}$ is the largest lower bound to $\mathrm{dom}%
\,f_{0}^{\ast}$
\begin{equation}
t_{0}=f_{+}^{\prime}\left(  0\right)  ,\label{t0=f'}%
\end{equation}
where
\[
t_{0}:=\inf\left\{  t;t\in\mathrm{dom}\,f_{0}^{\ast}\right\}  .
\]
(If $\mathrm{dom}\,f_{0}^{\ast}$ is not bounded from below, $t_{0}:=-\infty$.)
By definition of $f_{0}^{\ast}$,
\begin{equation}
\inf_{t}\,f\left(  t\right)  =\inf_{t}\,f^{\ast}\left(  t\right)  =f\left(
t_{0}\right)  .\label{t0=inf}%
\end{equation}
The proof of (\ref{f_0-1}), (\ref{f_0-2}) and (\ref{t0=f'}) are given in
Appendix\thinspace\ref{sec:proof-f_0}.

For example, both of
\begin{align*}
f_{\mathrm{TV}}\left(  \lambda\right)   &  :=\left\vert 1-\lambda\right\vert
,\\
\widetilde{f}_{\mathrm{TV}}\left(  \lambda\right)   &  :=\left\{
\begin{array}
[c]{cc}%
\left\vert 1-\lambda\right\vert , & \lambda\geq0,\\
\infty, & \lambda<0,
\end{array}
\right.
\end{align*}
correspond to the total variation distance,
\[
D_{f_{\mathrm{TV}}}\left(  P||Q\right)  =D_{\widetilde{f}_{\mathrm{TV}}%
}\left(  P||Q\right)  =\left\Vert P-Q\right\Vert _{1}.
\]
The former is canonical but the latter is not,%
\begin{align*}
f_{\mathrm{TV}}^{\ast}\left(  t\right)   &  =\left\{
\begin{array}
[c]{cc}%
\infty, & t<-1,\\
t, & -1\leq t\leq1,\\
\infty, & t>1,
\end{array}
\right.  ,\\
\widetilde{f}_{\mathrm{TV}}^{\ast}\left(  t\right)   &  =\left\{
\begin{array}
[c]{cc}%
-1, & t<-1,\\
t, & -1\leq t\leq1,\\
\infty, & t>1,
\end{array}
\right.  .
\end{align*}

\begin{definition}
Define
\end{definition}%

\[
\hat{f}\left(  \lambda\right)  :=g\left(  1,\lambda\right)  ,\,\lambda\geq0,
\]
and canonically extend it to the negative half line. Also, define $\hat{g}$ by
in a parallel manner as the definition of $g$, replacing $f$ by $\hat{f}$. 

By definition,
\begin{equation}
\hat{g}\left(  \lambda_{1},\lambda_{2}\right)  =g\left(  \lambda_{2}%
,\lambda_{1}\right)  ,\,\label{g=g}%
\end{equation}
holds for all $\lambda_{1}\geq0$: if $\lambda_{1}\in\mathrm{dom}\,\hat{f}$ and
$\lambda_{2}>0$, the relation is checked by easy computation. If $\lambda
_{1}=\lambda_{2}=0$, the both ends of (\ref{g=g}) is $0$. Finally, if
$\ \lambda_{1}=0\notin\mathrm{dom}\,\hat{f}$ and $\lambda_{2}>0$,%
\[
\hat{g}\left(  0,\lambda_{2}\right)  =\infty=\hat{f}\left(  0\right)
=\lambda_{2}\hat{f}\left(  0\right)  =\lambda_{2}g\left(  1,0\right)
=g\left(  \lambda_{2},0\right)  ,
\]
and the relation is checked.(\ref{g=g}) means
\[
D_{f}\left(  P_{1}||P_{2}\right)  =D_{\hat{f}}\left(  P_{2}||P_{1}\right)  .
\]

\section{Expressions of \ \ $D_{f}^{\min}$ (I)}

\label{sec:expressions-1}

Consider density operators $\left\{  \rho_{\theta}\right\}  _{\theta
\in\left\{  1,2\right\}  }$ over finite dimensional Hilbert space
$\mathcal{H}$ ( throughout the paper, Hilbert spaces are always finite
dimensional) .

\begin{definition}
\begin{equation}
D_{f}^{\min}\left(  \rho_{1}||\rho_{2}\right)  :=\sup_{M\text{:POVM}}%
D_{f}\left(  P_{\rho_{1}}^{M}||P_{\rho_{2}}^{M}\right)  ,\label{Dfmin}%
\end{equation}
where POVM is the short for positive operator valued measures, and
$P_{\rho_{\theta}}^{M}$ is the probability distribution resulting from
application of the measurement $M$ to the state $\rho_{\theta}$ ($\theta
\in\left\{  1,2\right\}  $).
\end{definition}

The notation $D_{f}^{\min}$ comes from the following fact. If $D_{f}^{Q}$ is a
real valued function of $\left\{  \rho_{\theta}\right\}  _{\theta\in\left\{
1,2\right\}  }$ which coincide with $D_{f}$ on any commutative subalgebra and
is monotone non-increasing by application of completely positive trace
preserving (CPTP) maps, then%

\begin{equation}
D_{f}^{\min}\left(  \rho_{1}||\rho_{2}\right)  \leq D_{f}^{Q}\left(  \rho
_{1}||\rho_{2}\right)  .\label{Dfmin<DfQ}%
\end{equation}
since, by definition of $D_{f}^{\min}$, for any $\varepsilon>0$, there is a
measurement $M$ such that
\begin{align*}
D_{f}^{\min}\left(  \rho_{1}||\rho_{2}\right)  -\varepsilon & \leq
D_{f}\left(  P_{\rho_{1}}^{M}||P_{\rho_{2}}^{M}\right)  \\
& =D_{f}^{Q}\left(  P_{\rho_{1}}^{M}||P_{\rho_{2}}^{M}\right)  \\
& \leq D_{f}^{Q}\left(  \rho_{1}||\rho_{2}\right)  .
\end{align*}

For the sake of notational simplicity, we extend $D_{f}$ and $D_{f}^{\min}$ to
all the positive finite measures and all the positive operators.

Denote also by $p_{\rho_{\theta}}^{M}$ the density of $P_{\rho_{1}}^{M}$ with
respect to an underlying measure $\mu^{M}$, which may be taken as $P_{\rho
_{1}}^{M}+P_{\rho_{2}}^{M}$. Using (\ref{D=sup-w}),%
\begin{align*}
D_{f}^{\min}\left(  \rho_{1}||\rho_{2}\right)   &  =\sup_{\vec{w}\left(
\cdot\right)  ,M}\int\sum_{\theta\in\left\{  1,2\right\}  }w_{\theta}\left(
x\right)  p_{\rho_{\theta}}^{M}\left(  x\right)  \mathrm{d}\mu^{M}\left(
x\right)  \\
&  =\text{ }\sup_{\vec{w}\left(  \cdot\right)  ,M}\int\sum_{\theta\in\left\{
1,2\right\}  }w_{\theta}\left(  x\right)  \mathrm{tr}\,\rho_{\theta}M\left(
\mathrm{d}x\right)
\end{align*}
where $\vec{w}\left(  \cdot\right)  $ moves all over the simple functions into
$\mathcal{W}_{f}$. \ By defining a new POVM $M^{\prime}\left(  B\right)
:=M\left(  \vec{w}^{-1}\left(  B\right)  \right)  $,  \
\[
D_{f}^{\min}\left(  \rho_{1}||\rho_{2}\right)  =\sup_{M}\int_{\mathcal{W}_{f}%
}\sum_{\theta\in\left\{  1,2\right\}  }w_{\theta}\mathrm{tr}\,\rho_{\theta
}M\left(  \mathrm{d}\vec{w}\right)  .
\]

Since the functional
\[
M\rightarrow\int_{\mathcal{W}_{f}}\sum_{\theta\in\left\{  1,2\right\}
}w_{\theta}\mathrm{tr}\,\rho_{\theta}M\left(  \mathrm{d}\vec{w}\right)
\]
is affine, by Caratheodory's theorem, the support of $M$ can be reduced to a
finite set without changing the supremum. (The cardinality of $\mathrm{supp}%
\,M$ may be less than or equal to $\left(  \dim\mathcal{H}\right)  ^{2}+2$.)
Thus taking $\mu$ as the counting measure, we have
\begin{equation}
D_{f}^{\min}\left(  \rho_{1}||\rho_{2}\right)  =\sup_{M}\sum_{\vec{w}%
\in\mathrm{supp}\,M}\sum_{\theta\in\left\{  1,2\right\}  }w_{\theta
}\,\mathrm{tr}\,\rho_{\theta}M_{\vec{w}}.\label{D=sup-w-finite}%
\end{equation}
In taking supremum, without loss of generality, we may restrict the support of
$M$ to the boundary of $\mathcal{W}_{f}$ . Thus,
\begin{align}
D_{f}^{\min}\left(  \rho_{1}||\rho_{2}\right)   &  =\sup_{M}\sum
_{t\in\mathrm{supp}\,M}\left\{  t\,\mathrm{tr}\,\rho_{1}M_{t}-f^{\ast}\left(
t\right)  \mathrm{tr}\,\rho_{2}M_{t}\right\}  ,\label{D=sup-t}\\
&  =\sup_{M}\sum_{s\in\mathrm{supp}\,M}\left\{  -\hat{f}^{\ast}\left(
s\right)  \,\mathrm{tr}\,\rho_{1}M_{s}+s\mathrm{tr}\,\rho_{2}M_{s}\right\}
,\label{D=sup-s}%
\end{align}
where $\mathrm{supp}\,M$ in (\ref{D=sup-t}) and (\ref{D=sup-s}) is a finite
subset of $\mathrm{dom}\,f^{\ast}$ and $\mathrm{dom}\,\hat{f}^{\ast}$, respectively.

\section{Necessary and sufficient conditions for $D_{f}^{\min}<\infty$}

\label{sec:Df<infty}

In this section we determine the case where $D_{f}^{\min}$ stays finite. Note
this result also give, due to (\ref{Dfmin<DfQ}), the necessary (and in fact ,
sufficient) condition that all the quantum versions of $f$-divergence becomes finite.

\begin{definition}
\label{def:b-t}Define \ $b_{\ast}$ (,$b_{\ast}^{\prime}$, \ resp.) as the
smallest number with $\rho_{1}-b_{\ast}\rho_{2}\leq0$ (, the largest number
with $\rho_{1}-b_{\ast}^{\prime}\rho_{2}\geq0$). When such $b_{\ast}$ does not
exist, we let $b_{\ast}:=\infty$. Let $t_{\ast}$ ($t_{\ast}^{\prime}$, resp.)
be the largest (smallest, resp.) number such that $b_{\ast}$ ($b_{\ast
}^{\prime}$, resp.) is a subgradient of $f^{\ast}$ at $t_{\ast}$ ($t_{\ast
}^{\prime}$, resp.).  If $b_{\ast}=\infty$ ($b_{\ast}^{\prime}=0$, resp.), we
define $t_{\ast}$ as the smallest upper bound to $\mathrm{dom}\,f^{\ast}$
($t_{\ast}^{\prime}$ as the largest lower bound to $\mathrm{dom}\,f^{\ast}$,
resp.). 
\end{definition}

If $b_{\ast}<\infty$ ($b_{\ast}>0$, resp.), such $t_{\ast}$ ($t_{\ast}%
^{\prime}$, resp.) exists, since the effective domain of $f^{\ast\ast}=f$ , or
equivalently the range of the subgradient, contains $\left(  0,\infty\right)
$ ($\left(  0,\infty\right)  \subset\mathrm{dom}\,f$ is supposed throughout
the paper, as stated in Section\thinspace\ref{sec:classical}).\ By definition,
\ $b_{\ast}\geq b_{\ast}^{\prime}$ and $t_{\ast}\geq t_{\ast}^{\prime}$.

Observe, by definition of the subgradient and convexity of $f^{\ast}$,
\[
t\leq t_{\ast}^{\prime}\Rightarrow f^{\ast}\left(  t_{\ast}^{\prime}\right)
-f^{\ast}\left(  t\right)  \leq b_{\ast}^{\prime}\left(  t_{\ast}^{\prime
}-t\right)  ,
\]
and
\begin{align*}
t  & \geq t_{\ast}\Rightarrow f^{\ast}\left(  t\right)  -f^{\ast}\left(
t_{\ast}\right)  \geq b_{\ast}\left(  t-t_{\ast}\right)  \\
& \Leftrightarrow f^{\ast}\left(  t_{\ast}\right)  -f^{\ast}\left(  t\right)
\leq b_{\ast}\left(  t_{\ast}-t\right)  .
\end{align*}
Therefore, \ for all $t\leq t_{\ast}^{\prime}$%

\begin{align*}
& \left\{  t\,_{\ast}^{\prime}\rho_{1}\,-f^{\ast}\left(  t_{\ast}^{\prime
}\right)  \rho_{2}\right\}  -\left\{  t\rho_{1}-f^{\ast}\left(  t\right)
\rho_{2}\right\}  \\
& \geq\left(  t_{\ast}^{\prime}-t\right)  \left\{  \rho_{1}-b_{\ast}^{\prime
}\rho_{2}\right\}  \geq0,\text{ }%
\end{align*}
and for all $t\geq t_{\ast}$,w \
\begin{align*}
& \left\{  t\,_{\ast}\rho_{1}\,-f^{\ast}\left(  t_{\ast}\right)  \rho
_{2}\right\}  -\left\{  t\rho_{1}-f^{\ast}\left(  t\right)  \rho_{2}\right\}
\\
& \geq\left(  t_{\ast}-t\right)  \left\{  \rho_{1}-b_{\ast}\rho_{2}\right\}
\geq0.
\end{align*}
Therefore, concentration of the support of $M$ on $\left[  t_{\ast}^{\prime
},t_{\ast}\right]  $ does not decrease the sum%

\[
\sum_{t\in\mathrm{supp}\,M}\mathrm{tr}\left\{  t\,\rho_{1}-f^{\ast}\left(
t\right)  \rho_{2}\right\}  M_{t}.
\]
In other words, we can suppose
\begin{equation}
\mathrm{supp}\,M\subset\left[  t_{\ast}^{\prime},t_{\ast}\right]
.\label{suppM}%
\end{equation}

First suppose $\mathrm{supp}\,\rho_{1}=\mathrm{supp}\,\rho_{2}$. In this case,
$b_{\ast}<\infty$ and $b_{\ast}^{\prime}>0$. Thus, $t_{\ast}$ and $t_{\ast
}^{\prime}$ are finite. Therefore,
\[
D_{f}^{\min}\left(  \rho_{1}||\rho_{2}\right)  \leq t_{\ast}-f^{\ast}\left(
t_{\ast}^{\prime}\right)  <\infty.
\]

Next, we study the case where $\mathrm{supp}\rho_{1}\subset\mathrm{supp}%
\,\rho_{2}$ and $\ker\rho_{1}\neq\left\{  0\right\}  $. Let $\pi$ be the
projection onto $\ker\rho_{1}$. Then by (\ref{D=sup-w-finite}),
\begin{align*}
D_{f}^{\min}\left(  \rho_{1}||\rho_{2}\right)   &  \geq w_{1}\mathrm{tr}%
\,\rho_{1}\pi+w_{2}\mathrm{tr}\,\rho_{2}\pi+w_{1}^{\prime}\mathrm{tr}%
\,\rho_{1}\left(  \mathbf{1}-\pi\right)  +w_{2}^{\prime}\mathrm{tr}\,\rho
_{2}\left(  \mathbf{1}-\pi\right) \\
&  =w_{2}\mathrm{tr}\,\rho_{2}\pi+w_{1}^{\prime}\mathrm{tr}\,\rho_{1}\left(
\mathbf{1}-\pi\right)  +w_{2}^{\prime}\mathrm{tr}\,\rho_{2}\left(
\mathbf{1}-\pi\right)
\end{align*}
holds for any any $\vec{w}$, $\vec{w}^{\prime}\in$ $\mathcal{W}_{f}$.
Therefore, $D_{f}^{\min}\left(  \rho_{1}||\rho_{2}\right)  $ can be finite
only if $w_{2}$ stays finite, i.e.,
\begin{equation}
\mathcal{W}_{f}\subset\mathbb{R}\times(-\infty,a_{2}]. \label{Wf-bounded-1}%
\end{equation}

On the other hand, suppose the above inclusion is true. Since $\mathrm{supp}%
\rho_{1}\subset\mathrm{supp}\,\rho_{2}$, $b_{\ast}>0$ and $t_{\ast}$ is
finite. Therefore, by (\ref{suppM}) and (\ref{Wf-bounded-1}),
\[
D_{f}^{\min}\left(  \rho_{1}||\rho_{2}\right)  \leq t_{\ast}+a_{2}<\infty.
\]

Exchanging the role of $w_{1}$ and $w_{2}$, thus replacing $f^{\ast}$ and
(\ref{Wf-bounded-1}) by $\hat{f}^{\ast}$ and
\begin{equation}
\mathcal{W}_{f}\subset(-\infty,a_{1}]\times\mathbb{R}, \label{Wf-bounded-2}%
\end{equation}
respectively, the case where $\mathrm{supp}\rho_{2}\subset\mathrm{supp}%
\,\rho_{1}$ and $\ker\rho_{2}\neq\left\{  0\right\}  $ is almost analogously analyzed.

Finally, suppose $\mathrm{supp}\,\rho_{1}$ $\not \subset \mathrm{supp}%
\,\,\rho_{2}$ and $\mathrm{supp}\,\rho_{2}$ $\not \subset \mathrm{supp}%
\,\,\rho_{1}$. Then by the argument almost parallel to the one used to show
\[
D_{f}^{\min}<\infty\Rightarrow\text{(\ref{Wf-bounded-1}),}%
\]
$D_{f}^{\min}$ is finite only if there are finite numbers $a_{1}$ and $a_{2}$
with
\begin{equation}
\mathcal{W}_{f}\subset(-\infty,a_{1}]\times(-\infty,a_{2}].\label{Wf-bounded}%
\end{equation}
(This is the case if $\mathrm{dom}\,f^{\ast}$ is a finite interval.) On the
other hand, if this condition is true, by (\ref{D=sup-w}),
\[
D_{f}\left(  P_{1}||P_{2}\right)  \leq\max\left\{  a_{1},a_{2}\right\}
<\infty.
\]

To summarize:

\begin{theorem}
\label{th:Df-finite}Suppose $f$ is a proper lower semicontinuous convex
function with $\mathrm{dom}\,f\,\supset(0,\infty)$. Then, \ $D_{f}^{\min
}\left(  \rho_{1}||\rho_{2}\right)  <\infty$ \ holds if $\mathrm{supp}%
\,\rho_{1}$ $=\mathrm{supp}\,\,\rho_{2}$. Also, $D_{f}^{\min}\left(  \rho
_{1}||\rho_{2}\right)  <\infty$ is equivalent to (\ref{Wf-bounded}) if
$\mathrm{supp}\,\rho_{1}$ $\not \subset \mathrm{supp}\,\,\rho_{2}$ and
$\mathrm{supp}\,\rho_{2}$ $\not \subset \mathrm{supp}\,\,\rho_{1}$ hold, to
(\ref{Wf-bounded-1}) if $\mathrm{supp}\rho_{1}\subset\mathrm{supp}\,\rho_{2}$
and $\ker\rho_{1}\neq\left\{  0\right\}  $ hold, and to (\ref{Wf-bounded-2}%
)\ \ if $\mathrm{supp}\rho_{2}\subset\mathrm{supp}\,\rho_{1}$ and $\ker
\rho_{2}\neq\left\{  0\right\}  $ hold.
\end{theorem}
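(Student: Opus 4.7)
The plan is to treat the four support regimes in the theorem separately, in each case combining three ingredients that are already prepared in the discussion preceding the statement: the finite-support reduction (\ref{D=sup-w-finite}) together with the supremum form (\ref{D=sup-t}); the restriction (\ref{suppM}) of $\mathrm{supp}\,M$ to $[t_*', t_*]$ coming from the subgradient inequalities on $f^*$; and the characterization, visible from Definition~\ref{def:b-t} and the remarks following it, that $b_* < \infty$ is equivalent to $\mathrm{supp}\,\rho_1 \subset \mathrm{supp}\,\rho_2$ and $b_*' > 0$ is equivalent to $\mathrm{supp}\,\rho_2 \subset \mathrm{supp}\,\rho_1$, with these inequalities forcing $t_*$ (resp.\ $t_*'$) to lie in $\mathrm{dom}\,f^*$ since the range of the subgradient of $f^*$ contains $(0, \infty)$.

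When $\mathrm{supp}\,\rho_1 = \mathrm{supp}\,\rho_2$, both $t_*'$ and $t_*$ are finite elements of $\mathrm{dom}\,f^*$, $f^*$ is bounded on $[t_*', t_*]$, and substituting into (\ref{D=sup-t}) together with $\sum_t \mathrm{tr}\,\rho_\theta M_t = 1$ yields a finite bound of the form $t_* - \inf_{t \in [t_*', t_*]} f^*(t)$. When $\mathrm{supp}\,\rho_1 \subsetneq \mathrm{supp}\,\rho_2$, necessity of (\ref{Wf-bounded-1}) is the lower bound displayed just before the theorem: feeding the two-outcome POVM $\{\pi, \mathbf{1} - \pi\}$ with $\pi$ the projection onto $\ker \rho_1$ into (\ref{D=sup-w-finite}) leaves the contribution $w_2\,\mathrm{tr}\,\rho_2 \pi$ with $\mathrm{tr}\,\rho_2 \pi > 0$, forcing $w_2$ to be uniformly bounded above on $\mathcal{W}_f$; conversely, $\mathrm{supp}\,\rho_1 \subset \mathrm{supp}\,\rho_2$ still gives $b_* > 0$ and hence finite $t_*$, so (\ref{suppM}), (\ref{D=sup-t}), and the bound $w_2 \le a_2$ combine to give $D_f^{\min} \le t_* + a_2$. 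The symmetric regime $\mathrm{supp}\,\rho_2 \subsetneq \mathrm{supp}\,\rho_1$ falls out by the duality $D_f^{\min}(\rho_1 \| \rho_2) = D_{\hat f}^{\min}(\rho_2 \| \rho_1)$ from (\ref{g=g}), which swaps the two coordinates of $\mathcal{W}$ and turns (\ref{Wf-bounded-1}) into (\ref{Wf-bounded-2}). Finally, when neither support is contained in the other, applying the kernel-projection argument of the second case to both $\ker \rho_1$ and $\ker \rho_2$ forces (\ref{Wf-bounded}) as necessary; sufficiency is immediate from (\ref{D=sup-w}), since under (\ref{Wf-bounded}) the integrand is pointwise dominated by $\max\{a_1, a_2\}$.

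The main bookkeeping hurdle I expect is the handling of the edge cases $b_* = \infty$ and $b_*' = 0$ of Definition~\ref{def:b-t}, in which $t_*$ or $t_*'$ is defined as an endpoint of $\mathrm{dom}\,f^*$ and may itself lie outside $\mathrm{dom}\,f^*$. This is in fact what drives the equivalences in the last three regimes: the failure of $b_* < \infty$ (respectively $b_*' > 0$) is precisely what lets $\mathcal{W}_f$ be unbounded in its $w_2$ (respectively $w_1$) coordinate, so the shape of the bounded region of $\mathcal{W}_f$ matches the pattern of which support contains which. Once this correspondence is set up cleanly, the four cases partition all possibilities and the theorem follows.
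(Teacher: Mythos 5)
This is correct and essentially the paper's own argument: the same four-way split on supports, the restriction (\ref{suppM}) combined with (\ref{D=sup-t}) for the upper bounds, the kernel-projection lower bound in (\ref{D=sup-w-finite}) for the necessity of (\ref{Wf-bounded-1}), the $w_{1}\leftrightarrow w_{2}$ (equivalently $f\leftrightarrow\hat{f}$) swap for the regime $\mathrm{supp}\,\rho_{2}\subsetneq\mathrm{supp}\,\rho_{1}$, and pointwise domination under (\ref{Wf-bounded}) via (\ref{D=sup-w}) for the incomparable-support case. One correction to your closing paragraph, whose pairing is inverted: failure of $b_{\ast}<\infty$ (i.e.\ $\mathrm{supp}\,\rho_{1}\not\subset\mathrm{supp}\,\rho_{2}$) is what forces the $w_{1}$ coordinate to be bounded, namely (\ref{Wf-bounded-2}), while failure of $b_{\ast}^{\prime}>0$ is what forces $w_{2}$ to be bounded, namely (\ref{Wf-bounded-1}) --- exactly as your own case analysis in the second paragraph implements, so the slip is confined to the commentary and does not affect the proof.
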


Note (\ref{Wf-bounded-1}) and (\ref{Wf-bounded-2}) is equivalent to
\[
f^{\ast}\left(  -\infty\right)  >-\infty
\]
and
\[
\mathrm{dom}\,f^{\ast}\subset(-\infty,a_{1}],
\]
respectively.

As proved in Appendix\thinspace\ref{sec:proof-th-finite}, one can in fact show:

\begin{theorem}
\label{th:DfQ-finite}Suppose $f$ is a proper lower semicontinuous convex
function with $\mathrm{dom}\,f\,\supset(0,\infty)$. Suppose also $D_{f}^{Q}$
is a real valued function of $\left\{  \rho_{\theta}\right\}  _{\theta
\in\left\{  1,2\right\}  }$ which coincide with $D_{f}$ on any commutative
subalgebra and is monotone non-increasing by application of CPTP maps. Then,
\ $D_{f}^{Q}\left(  \rho_{1}||\rho_{2}\right)  <\infty$ \ holds if
$\mathrm{supp}\,\rho_{1}$ $=\mathrm{supp}\,\,\rho_{2}$. Also, $D_{f}%
^{Q}\left(  \rho_{1}||\rho_{2}\right)  <\infty$ is equivalent to
(\ref{Wf-bounded}) if $\mathrm{supp}\,\rho_{1}$ $\not \subset \mathrm{supp}%
\,\,\rho_{2}$ and $\mathrm{supp}\,\rho_{2}$ $\not \subset \mathrm{supp}%
\,\,\rho_{1}$ hold, to (\ref{Wf-bounded-1}) if $\mathrm{supp}\rho_{1}%
\subset\mathrm{supp}\,\rho_{2}$ and $\ker\rho_{1}\neq\left\{  0\right\}  $
hold, and to (\ref{Wf-bounded-2})\ \ if $\mathrm{supp}\rho_{2}\subset
\mathrm{supp}\,\rho_{1}$ and $\ker\rho_{2}\neq\left\{  0\right\}  $ hold.
\end{theorem}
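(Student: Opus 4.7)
The plan is to handle the two directions of the equivalences separately: necessity is inherited from the classical analogue Theorem~\ref{th:Df-finite} via the trivial bound $D_f^Q\geq D_f^{\min}$ from (\ref{Dfmin<DfQ}), whereas sufficiency is proved by exhibiting, for each support configuration, a finite classical experiment that can be mapped to $(\rho_1,\rho_2)$ by a CPTP map and has finite classical $f$-divergence.

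For the ``only if'' directions, I would simply observe that if the listed condition fails in the relevant support case, Theorem~\ref{th:Df-finite} already yields $D_f^{\min}(\rho_1||\rho_2)=\infty$, and hence $D_f^Q(\rho_1||\rho_2)\geq D_f^{\min}(\rho_1||\rho_2)=\infty$ by (\ref{Dfmin<DfQ}).

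For the sufficiency part (which includes the unconditional statement in the case $\mathrm{supp}\,\rho_1=\mathrm{supp}\,\rho_2$), the idea is to construct, in each support configuration, a finite index set $\{1,\ldots,N\}$, density operators $\sigma_1,\ldots,\sigma_N$ on $\mathcal{H}$, and probability distributions $p_1,p_2$ on $\{1,\ldots,N\}$ such that the CPTP map $\Phi(|i\rangle\langle i|):=\sigma_i$ satisfies $\Phi(p_\theta)=\rho_\theta$ and $D_f(p_1||p_2)<\infty$; the hypotheses on $D_f^Q$ then yield
\[
D_f^Q(\rho_1||\rho_2)\leq D_f^Q(p_1||p_2)=D_f(p_1||p_2)<\infty.
\]
The constructions go case by case. (i) When $\mathrm{supp}\,\rho_1=\mathrm{supp}\,\rho_2$ (non-trivially, so $0<b_*'<1<b_*<\infty$), set $\sigma:=(\rho_1-b_*'\rho_2)/(1-b_*')$ and $\tau:=(b_*\rho_2-\rho_1)/(b_*-1)$; both are density operators by Definition~\ref{def:b-t}, and $\rho_\theta$ lies in the relative interior of the segment $[\sigma,\tau]$, yielding $p_1,p_2$ on $\{0,1\}$ of common full support so that $D_f(p_1||p_2)$ involves only values of $f$ on $(0,\infty)$ and is therefore finite regardless of $f(0)$ and $\hat f(0)$. (ii) When $\mathrm{supp}\,\rho_1\subsetneq\mathrm{supp}\,\rho_2$ (so $b_*>1$) and (\ref{Wf-bounded-1}) holds (i.e.\ $f(0)<\infty$), take $\Phi(|0\rangle\langle 0|):=\rho_1$, $\Phi(|1\rangle\langle 1|):=(b_*\rho_2-\rho_1)/(b_*-1)$, $p_1:=(1,0)$, $p_2:=(1/b_*,1-1/b_*)$ to obtain $D_f(p_1||p_2)=(1/b_*)f(b_*)+(1-1/b_*)f(0)<\infty$. (iii) The case $\mathrm{supp}\,\rho_2\subsetneq\mathrm{supp}\,\rho_1$ is symmetric under $\theta\leftrightarrow 3-\theta$ and uses $\hat f$. (iv) When neither support is contained in the other and (\ref{Wf-bounded}) holds, the minimal construction $\Phi(|0\rangle\langle 0|):=\rho_1$, $\Phi(|1\rangle\langle 1|):=\rho_2$, $p_1:=(1,0)$, $p_2:=(0,1)$ suffices: $D_f(p_1||p_2)=\hat f(0)+f(0)<\infty$.

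The main obstacle is bookkeeping rather than anything conceptual: one must verify positivity and normalization of the auxiliary densities $\sigma$, $\tau$, $(b_*\rho_2-\rho_1)/(b_*-1)$ from Definition~\ref{def:b-t}, check the strict inequalities $b_*'<1<b_*$ whenever non-triviality is invoked (they follow from $\mathrm{tr}\,\rho_1=\mathrm{tr}\,\rho_2=1$ together with $\rho_1\neq\rho_2$), and use the equivalence between (\ref{Wf-bounded-1}), (\ref{Wf-bounded-2}), (\ref{Wf-bounded}) and the finiteness of $f(0)$, $\hat f(0)$ noted immediately after Theorem~\ref{th:Df-finite}.
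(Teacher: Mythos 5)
Your proposal is correct, and its overall architecture is exactly the paper's: necessity follows from $D_{f}^{\min}\leq D_{f}^{Q}$ (inequality (\ref{Dfmin<DfQ})) together with Theorem\thinspace\ref{th:Df-finite}, and sufficiency is proved by exhibiting a two-point classical experiment $(p_{1},p_{2})$ and a CPTP map $\Phi$ with $\Phi(p_{\theta})=\rho_{\theta}$, so that monotonicity plus agreement on commutative subalgebras gives $D_{f}^{Q}(\rho_{1}||\rho_{2})\leq D_{f}(p_{1}||p_{2})$. Your cases (ii), (iii), (iv) coincide with the paper's constructions (the paper writes $\rho_{2}=\lambda\rho_{1}+(1-\lambda)\rho_{3}$, which is your $\lambda=1/b_{\ast}$, $\rho_{3}=(b_{\ast}\rho_{2}-\rho_{1})/(b_{\ast}-1)$, and uses $P_{1}=\delta_{+}$, $P_{2}=\delta_{-}$ for incomparable supports). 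The genuine difference is your case (i). The paper does not treat $\mathrm{supp}\,\rho_{1}=\mathrm{supp}\,\rho_{2}$ separately but subsumes it under the containment case, whose classical model $P_{1}=(1,0)$, $P_{2}=(\lambda,1-\lambda)$ has unequal supports, so that $D_{f}(P_{1}||P_{2})=\lambda f(1/\lambda)+(1-\lambda)f(0)$ is finite only under (\ref{Wf-bounded-1}), i.e.\ $f(0)<\infty$ --- a hypothesis \emph{not} available in the equal-support claim, which is asserted unconditionally. Your two-sided decomposition through $\sigma=(\rho_{1}-b_{\ast}^{\prime}\rho_{2})/(1-b_{\ast}^{\prime})$ and $\tau=(b_{\ast}\rho_{2}-\rho_{1})/(b_{\ast}-1)$ places both states strictly inside the segment $[\sigma,\tau]$ (one checks $\rho_{1}=p\sigma+(1-p)\tau$, $\rho_{2}=q\sigma+(1-q)\tau$ with $p=b_{\ast}(1-b_{\ast}^{\prime})/(b_{\ast}-b_{\ast}^{\prime})$, $q=(1-b_{\ast}^{\prime})/(b_{\ast}-b_{\ast}^{\prime})$, both in $(0,1)$, and the likelihood ratios are exactly $b_{\ast}$ and $b_{\ast}^{\prime}$), so the resulting classical divergence only evaluates $f$ on $(0,\infty)$ and is finite with no condition on $f(0)$. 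In other words, your proof does not merely match the paper's: it repairs a gap in it. The only microscopic omission is the degenerate case $\rho_{1}=\rho_{2}$, where $b_{\ast}^{\prime}<1<b_{\ast}$ fails; there one maps a single point mass to $\rho_{1}$ and gets $D_{f}^{Q}(\rho_{1}||\rho_{1})=f(1)<\infty$, a one-line fix you implicitly acknowledge by the word ``non-trivially.''
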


\section{Continuity of $D_{f}^{\min}$ and other versions of quantum
$f$-divergence}

\label{sec:continity}

Since $D_{f}^{\min}$ is jointly convex almost by definition, it is continuous
in the interior of $\mathrm{dom}\,D_{f}^{\min}$, or at the points where
$\rho_{1}>0$ and $\rho_{2}>0$. By applying well-known facts in convex
analysis, a weak version of "continuity at the boundary" is easily proved.

\begin{lemma}
$D_{f}^{\min}$ is a proper lower semi-continuous convex function which is
positively homogeneous.
\end{lemma}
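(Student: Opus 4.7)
The approach is to exploit the representation (\ref{D=sup-w-finite}), which already displays $D_{f}^{\min}$ as a supremum of linear functionals of the pair $(\rho_{1},\rho_{2})$. Concretely, for each finitely supported POVM $M$ labelled by elements of $\mathcal{W}_{f}$, the map
\[
L_{M}(\rho_{1},\rho_{2}):=\sum_{\vec{w}\in\mathrm{supp}\,M}\sum_{\theta\in\{1,2\}}w_{\theta}\,\mathrm{tr}\,\rho_{\theta}M_{\vec{w}}
\]
is linear in $(\rho_{1},\rho_{2})$, hence continuous, convex, and positively homogeneous of degree one. Since the pointwise supremum of an arbitrary family of convex (resp.\ lower semi-continuous, resp.\ positively homogeneous) functions inherits those properties, the identity $D_{f}^{\min}=\sup_{M}L_{M}$ immediately delivers convexity, lower semi-continuity, and positive homogeneity.

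What remains is properness, i.e.\ that $D_{f}^{\min}$ is nowhere $-\infty$ and not identically $+\infty$. The first point holds because $f$ being proper implies $\mathrm{dom}\,f^{\ast}\neq\emptyset$, so $\mathcal{W}_{f}$ is non-empty; picking any $\vec{w}\in\mathcal{W}_{f}$ and using the trivial one-outcome POVM $M=\{\mathbf{1}\}$ labelled by $\vec{w}$ produces the finite lower bound $w_{1}\,\mathrm{tr}\,\rho_{1}+w_{2}\,\mathrm{tr}\,\rho_{2}>-\infty$. For the second, take $\rho_{1}=\rho_{2}=\rho$ for any density operator $\rho$: every measurement then yields $P_{\rho_{1}}^{M}=P_{\rho_{2}}^{M}$, so $D_{f}(P_{\rho_{1}}^{M}||P_{\rho_{2}}^{M})=f(1)\,\mathrm{tr}\,\rho$, which is finite since the standing assumption $(0,\infty)\subset\mathrm{dom}\,f$ forces $f(1)<\infty$. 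Thus $D_{f}^{\min}(\rho||\rho)=f(1)\,\mathrm{tr}\,\rho<\infty$.

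The entire argument is structural (supremum of linear functionals equals convex plus lower semi-continuous plus positively homogeneous), so I do not expect any serious obstacle. The only point needing a moment's care is properness at the boundary of the positive cone (zero or rank-deficient operators), but the non-emptiness of $\mathcal{W}_{f}$ handles the lower bound uniformly and the diagonal computation $\rho_{1}=\rho_{2}=\rho$ supplies a finite value, which together rule out both $-\infty$ values and the possibility that $D_{f}^{\min}\equiv+\infty$.
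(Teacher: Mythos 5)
Your proposal is correct and takes essentially the same route as the paper: both display $D_{f}^{\min}$ via (\ref{D=sup-w-finite}) as a pointwise supremum of linear functionals of $(\rho_{1},\rho_{2})$, letting convexity, positive homogeneity, and lower semi-continuity be inherited from that representation (the paper phrases lower semi-continuity through closedness of $\mathrm{epi}\,D_{f}^{\min}=\bigcap_{M}\mathrm{epi}\,h_{M}$, which is the same fact you invoke). Your explicit verification of properness --- non-emptiness of $\mathcal{W}_{f}$ giving a finite lower bound, and the diagonal evaluation $D_{f}^{\min}\left(  \rho||\rho\right)  =f\left(  1\right)  \mathrm{tr}\,\rho<\infty$ --- simply fills in what the paper declares obvious.
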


\begin{proof}
Rewrite (\ref{D=sup-w-finite}) as
\[
D_{f}^{\min}\left(  \rho_{1}||\rho_{2}\right)  =\,\sup_{M}h_{M}\left(
\rho_{1},\rho_{2}\right)  ,
\]
where
\[
h_{M}\left(  \rho_{1},\rho_{2}\right)  :=\sup_{M}\sum_{\vec{w}\in
\mathrm{supp}\,M}\sum_{\theta\in\left\{  1,2\right\}  }w_{\theta}%
\,\mathrm{tr}\,\rho_{\theta}M_{\vec{w}},
\]
and $M$ moves for all over the POVM's whose support is a finite set in
$\mathcal{W}_{f}$. \ That $D_{f}^{\min}$ is proper, convex, and  positively
homogeneous is obvious from this expression. Also, recall that lower
semi-continuity is equivallent to closed epigraph. Thus, lower semi-continuity
of $D_{f}^{\min}$ follows from
\[
\bigcap_{M}\mathrm{epi}\,h_{M}=\mathrm{epi}\,D_{f}^{\min}%
\]
and closedness of $\mathrm{epi}\,h_{M}$. 
\end{proof}

\begin{theorem}
For any $\rho_{1}\geq0$ and $\rho_{2}\geq0$, and for any $X_{1},X_{2}\geq0$,
\begin{equation}
\lim_{s\downarrow0}D_{f}^{\min}\left(  \rho_{1}+sX_{1}||\rho_{2}%
+sX_{2}\right)  =D_{f}^{\min}\left(  \rho_{1}||\rho_{2}\right)
.\label{D-cont}%
\end{equation}

\end{theorem}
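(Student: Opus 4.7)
The plan is to reduce to a one-dimensional statement on $[0,\infty)$ and combine lower semi-continuity with a convexity upper bound. Set $g(s) := D_f^{\min}(\rho_1 + sX_1 \| \rho_2 + sX_2)$; by the preceding lemma $D_f^{\min}$ is a proper, jointly convex, lower semi-continuous function of $(\rho_1, \rho_2)$, so $g$ is a proper convex lower semi-continuous function on $[0,\infty)$.

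The $\liminf$ direction is then immediate from lower semi-continuity of $D_f^{\min}$ at $(\rho_1, \rho_2)$, applied to the sequence $(\rho_1 + s_n X_1, \rho_2 + s_n X_2)$: one gets $\liminf_{s \downarrow 0} g(s) \geq g(0)$, which already settles the case $g(0) = +\infty$. For the matching $\limsup$ bound I would fix some $s_0 > 0$ and, for $s' \in (0, s_0)$, use the convex-combination identity
$$(\rho_1 + s' X_1,\, \rho_2 + s' X_2) = \left(1 - \frac{s'}{s_0}\right)(\rho_1, \rho_2) + \frac{s'}{s_0}(\rho_1 + s_0 X_1,\, \rho_2 + s_0 X_2)$$
together with joint convexity of $D_f^{\min}$ to conclude
$$g(s') \leq \left(1 - \frac{s'}{s_0}\right) g(0) + \frac{s'}{s_0}\, g(s_0).$$
Provided some $s_0 > 0$ satisfies $g(s_0) < \infty$, sending $s' \downarrow 0$ with $s_0$ fixed yields $\limsup_{s' \downarrow 0} g(s') \leq g(0)$, and combining with the $\liminf$ bound gives (\ref{D-cont}).

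The hard part will be to justify the existence of such an $s_0$ when $g(0) < \infty$, i.e.\ to rule out the pathological scenario in which $g(0) < \infty$ but $g(s) = +\infty$ for every $s > 0$. For this I would invoke Theorem \ref{th:Df-finite}: finiteness of $D_f^{\min}$ on a perturbed pair is determined by which inclusion relation holds between $\mathrm{supp}(\rho_1 + sX_1)$ and $\mathrm{supp}(\rho_2 + sX_2)$, together with the corresponding boundedness condition (\ref{Wf-bounded}), (\ref{Wf-bounded-1}) or (\ref{Wf-bounded-2}) on $\mathcal{W}_f$. For $s>0$ these perturbed supports are constant in $s$ (they equal $\mathrm{supp}\,\rho_\theta + \mathrm{supp}\,X_\theta$), so $g(s) < \infty$ is a uniform condition on $s>0$; a case-by-case comparison across the four regimes of Theorem \ref{th:Df-finite} should show that the conditions responsible for $g(0) < \infty$ already imply the relevant boundedness condition for the perturbed pair, producing an $s_0 > 0$ with $g(s_0) < \infty$. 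This inheritance step is the delicate point of the proof; once it is in place, the convexity bound above closes out the argument.
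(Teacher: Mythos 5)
Your skeleton (lower semi-continuity for the $\liminf$ bound, the convex-combination bound for the $\limsup$, conditional on some $s_{0}>0$ with $g(s_{0})<\infty$) is in substance the paper's own proof: the paper invokes Theorem\thinspace10.2 of \cite{Rockafellar}, which gives continuity of a lower semi-continuous convex function relative to a segment \emph{contained in its effective domain}, and your two-sided estimate is exactly that statement specialized to the segment $[(\rho_{1},\rho_{2}),(\rho_{1}+s_{0}X_{1},\rho_{2}+s_{0}X_{2})]$. The genuine problem is the step you defer: the inheritance you hope to extract from Theorem\thinspace\ref{th:Df-finite} is not delicate but false. Take $f=f_{\mathrm{KL}}$, $\rho_{1}=\rho_{2}=|0\rangle\langle0|$ on $\mathbb{C}^{2}$, $X_{1}=|1\rangle\langle1|$, $X_{2}=0$. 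Then $g(0)=0<\infty$ by the equal-support case of Theorem\thinspace\ref{th:Df-finite} --- and note that this case imposes \emph{no} condition on $\mathcal{W}_{f}$ at all, so there is nothing available to inherit. For every $s>0$, however, $\mathrm{supp}\,(\rho_{2}+sX_{2})\subsetneq\mathrm{supp}\,(\rho_{1}+sX_{1})$, so finiteness requires (\ref{Wf-bounded-2}), i.e. $\mathrm{dom}\,f^{\ast}$ bounded above, which fails since $f_{\mathrm{KL}}^{\ast}(t)=e^{t-1}$ is finite on all of $\mathbb{R}$. Directly: the PVM $\{|0\rangle\langle0|,|1\rangle\langle1|\}$ yields the pair of measures $(1,s)$ and $(1,0)$, whose $f_{\mathrm{KL}}$-divergence is infinite. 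Hence $g(s)=\infty$ for every $s>0$ while $g(0)=0$: the pathological scenario you set out to exclude actually occurs, no $s_{0}$ exists, and (\ref{D-cont}) itself fails. The same happens for $f_{\alpha}$ with $\alpha>1$, and, exchanging the roles of the two arguments, for $\alpha<0$.

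This is not a defect your route has and the paper's route avoids: the paper's one-line proof silently assumes the connecting segment lies inside $\mathrm{dom}\,D_{f}^{\min}$ (Rockafellar's theorem says nothing otherwise), which is precisely the existence of your $s_{0}$; in the example above the open segment lies entirely outside the domain while the endpoint lies inside. So your reduction is correct and your write-up is, if anything, more honest about where the difficulty sits, but the plan for closing it cannot succeed because the unrestricted statement ``for any $X_{1},X_{2}\geq0$'' is not true. To make both your proof and the theorem correct one must add a hypothesis guaranteeing $g(s_{0})<\infty$ for some $s_{0}>0$ whenever $g(0)<\infty$: for instance (\ref{Wf-bounded}) (then $D_{f}^{\min}$ is finite everywhere), or $X_{1}>0$ and $X_{2}>0$ (then the perturbed supports coincide, the setting of the Remark following this theorem), or $\mathrm{supp}\,X_{\theta}\subset\mathrm{supp}\,\rho_{\theta}$ for $\theta=1,2$ (then the support configuration, hence the relevant finiteness condition in Theorem\thinspace\ref{th:Df-finite}, is the same for all $s\geq0$). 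Under any of these, your convexity argument closes the proof exactly as written.
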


\begin{proof}
Since $D_{f}^{\min}$ is lower semicontinuous, it is continuous on any (finite
dimensional) simplex inside $\mathrm{dom}\,D_{f}^{\min}$ by Theorem\thinspace
10.2 of \cite{Rockafellar}. Applying this fact to the line segment connecting
$\left(  \rho_{1}+sX_{1},\rho_{2}+sX_{2}\right)  $ and $\left(  \rho_{1}%
,\rho_{2}\right)  $, we have the assertion.
\end{proof}

\bigskip

\begin{remark}
In may literature, various versions qunatum version of $f$-divergences
$D_{f}^{Q}$'s are defined for strictly positive operators, and then extended
to general positive operators (possibly with eigenvalue 0) in certain manners,
and analogues of (\ref{D-cont}) are proved exploiting various properties of
each $D_{f}^{Q}$.  

However, so far the author had observed, these extensions are all closure of
$D_{f}^{Q}$ : given  $D_{f}^{Q}$ whose effective domain $\mathrm{dom}%
\,D_{f}^{Q}$ is strictly positive operators, choose $\tilde{D}_{f}^{Q}$ by
$\mathrm{epi}\,\tilde{D}_{f}^{Q}=\mathrm{cl\,}\left(  \mathrm{epi}\,D_{f}%
^{Q}\right)  $. For such an extension,  analogue of (\ref{D-cont}) holds, by
Theorem\thinspace10.2 of \cite{Rockafellar}. 

\ By unique existence of the closure, analogue of (\ref{D-cont}) with some
fixed $X_{1}>0$, $X_{2}>0$ (they can be identity) can be used to define
$\tilde{D}_{f}^{Q}$.  Then (\ref{D-cont}) for all $X_{1}\geq0$, $X_{2}\geq0$,
convexity, and positive homogeneity holds.      
\end{remark}

\section{Expressions of \ \ $D_{f}^{\min}$ (II)}

\label{sec:sup-T}

In this section and in the most part of the remainder of the paper, we suppose
at least one of the followings is true.

\begin{description}
\item[(I)] $f^{\ast}$ is operator convex on $\mathrm{dom}\,f^{\ast}$

\item[(II)] $\hat{f}^{\ast}\left(  t\right)  $ ($=-f^{\ast-1}\left(
-t\right)  $) is operator convex on $\mathrm{dom}\,\hat{f}^{\ast}$
\end{description}

By (\ref{D=sup-t}) and Naimark's extension theorem,
\[
D_{f}^{\min}\left(  \rho_{1}||\rho_{2}\right)  =\sup_{E,V}\sum_{t\in
\mathrm{supp}\,E}\left\{  t\mathrm{tr}\,\rho_{1}V^{\dagger}E_{t}V-f^{\ast
}\left(  t\right)  \mathrm{tr}\,\rho_{2}V^{\dagger}E_{t}V\right\}  ,
\]
where $E$ is a projection valued measure (PVM, in short) on a Hilbert space
$\mathcal{K}\supset\mathcal{H}$, and $V$ is an isometry from $\mathcal{H}$
into $\mathcal{K}$. Then
\begin{align*}
D_{f}^{\min}\left(  \rho_{1}||\rho_{2}\right)   &  =\sup_{E,V}\left\{
\mathrm{tr}\,\rho_{1}V^{\dagger}\sum_{t\in\mathrm{supp}\,E}t\,E_{t}%
V-\mathrm{tr}\,\rho_{2}V^{\dagger}\sum_{t\in\mathrm{supp}\,E}f^{\ast}\left(
t\,\right)  E_{t}V\right\}  \\
&  =\sup_{T,V}\left\{  \mathrm{tr}\,\rho_{1}V^{\dagger}TV-\mathrm{tr}%
\,\rho_{2}V^{\dagger}f^{\ast}\left(  T\right)  V\right\}  ,
\end{align*}
where $T$ is a self-adjoint operator on the extended Hilbert space
$\mathcal{K}$ with $\mathrm{spec}\,T\subset$ $\mathrm{dom}\,f^{\ast}$. Here,
suppose (I) holds. Then by Jensen's inequality, we have
\begin{align*}
D_{f}^{\min}\left(  \rho_{1}||\rho_{2}\right)   &  \leq\sup_{T,V}\left\{
\mathrm{tr}\,\rho_{1}V^{\dagger}TV-\mathrm{tr}\,\rho_{2}f^{\ast}\left(
V^{\dagger}TV\right)  \right\}  \\
&  =\sup_{T,V}\left\{  \mathrm{tr}\,\rho_{1}V^{\dagger}TV-\mathrm{tr}%
\,\rho_{2}f^{\ast}\left(  V^{\dagger}TV\right)  \right\}  \\
&  =\sup_{T^{\prime}}\left\{  \mathrm{tr}\,\rho_{1}T^{\prime}-\mathrm{tr}%
\,\rho_{2}f^{\ast}\left(  T^{\prime}\right)  \right\}  ,
\end{align*}
where in the last end, $T^{\prime}$ is a self-adjoint operator on the original
Hilbert space $\mathcal{H}$ with $\mathrm{spec}\,T^{\prime}\subset$
$\mathrm{dom}\,f^{\ast}$. The opposite inequality is easily obtained by
restricting the measurements in (\ref{D=sup-t}) to PVM. Thus,
\begin{align}
D_{f}^{\min}\left(  \rho_{1}||\rho_{2}\right)   &  =\sup\left\{
\mathrm{tr}\,\rho_{1}T-\mathrm{tr}\,\rho_{2}f^{\ast}\left(  T\right)
;\mathrm{spec}\,T\subset\mathrm{dom}\,f^{\ast}\right\}  \label{Df-sup-T}\\
&  =\sup\left\{  \mathrm{tr}\,\rho_{1}f^{\ast-1}\left(  S\right)
-\mathrm{tr}\,\rho_{2}S;\,\mathrm{spec}\,S\subset\mathrm{dom}\,f^{\ast
-1}\right\}  .\label{Df-sup-S}%
\end{align}
The second identity is obtained by putting $S:=-f^{\ast}\left(  T\right)  $.
(Recall $f^{\ast}$ is monotone in its effective domain.) By assuming (II) and
using (\ref{D=sup-s}), we can also obtain the same identity. Summarizing the
above argument:

\begin{theorem}
\label{th:max-T}Let $f$ be a proper convex function with $\mathrm{dom}%
\,f\supset\left(  0,\infty\right)  $. If either (I) or (II) is true, we have
(\ref{Df-sup-T}) and (\ref{Df-sup-S}).
\end{theorem}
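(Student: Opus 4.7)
The plan is to start from the representation (\ref{D=sup-t}) and apply Naimark's dilation to replace the supremum over POVMs on $\mathcal{H}$ by a supremum over projection valued measures $E$ on a larger Hilbert space $\mathcal{K}\supset\mathcal{H}$ together with an isometry $V:\mathcal{H}\to\mathcal{K}$. Bundling the scalars $t$ and the spectral projections $E_t$ into a single self-adjoint operator $T=\sum_t t E_t$ with $\mathrm{spec}\,T\subset\mathrm{dom}\,f^{\ast}$, the quantity to be optimized becomes
\[
\mathrm{tr}\,\rho_{1}V^{\dagger}TV-\mathrm{tr}\,\rho_{2}V^{\dagger}f^{\ast}(T)V.
\]
Under assumption (I), the Hansen--Pedersen operator Jensen inequality for an isometry gives $V^{\dagger}f^{\ast}(T)V\geq f^{\ast}(V^{\dagger}TV)$, so the second term is bounded below by $\mathrm{tr}\,\rho_{2}f^{\ast}(V^{\dagger}TV)$, yielding the upper bound
\[
D_{f}^{\min}(\rho_{1}||\rho_{2})\leq\sup\bigl\{\mathrm{tr}\,\rho_{1}T'-\mathrm{tr}\,\rho_{2}f^{\ast}(T');\mathrm{spec}\,T'\subset\mathrm{dom}\,f^{\ast}\bigr\}
\]
with $T':=V^{\dagger}TV$ acting on the original space $\mathcal{H}$. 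The matching lower bound is immediate from (\ref{D=sup-t}) by restricting the POVM to a PVM on $\mathcal{H}$ and assembling its scalar labels into a self-adjoint $T'$. This establishes (\ref{Df-sup-T}).

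For (\ref{Df-sup-S}), I would substitute $S:=-f^{\ast}(T')$. Since $f^{\ast}$ is monotone on its effective domain, and we may assume canonicality without changing $D_{f}^{\min}$ (as shown in Section~\ref{sec:classical}), this substitution bijects self-adjoint operators with spectrum in $\mathrm{dom}\,f^{\ast}$ onto those with spectrum in $\mathrm{dom}\,f^{\ast-1}$, giving the second formula. Alternatively, under assumption (II) one runs the identical argument starting from (\ref{D=sup-s}) with $\hat{f}^{\ast}$ in place of $f^{\ast}$; using the identity $\hat{f}^{\ast}(s)=-f^{\ast-1}(-s)$ recorded earlier, the resulting expression is exactly (\ref{Df-sup-S}), from which (\ref{Df-sup-T}) follows by inverting the substitution.

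The main obstacle I anticipate is verifying the operator Jensen step rigorously on the effective domain $\mathrm{dom}\,f^{\ast}$, which may be a proper (even unbounded) interval rather than all of $\mathbb{R}$. Two facts need to be checked: that $\mathrm{spec}\,V^{\dagger}TV\subset\mathrm{dom}\,f^{\ast}$, which holds because the numerical range of a compression by an isometry is contained in the convex hull of $\mathrm{spec}\,T$ and $\mathrm{dom}\,f^{\ast}$ is convex; and that operator convexity on that subinterval is strong enough to apply the Hansen--Pedersen inequality in the isometry form (this is the standard form, so it causes no real difficulty once the spectral condition is verified). A secondary technical point is the bookkeeping at the boundary of $\mathrm{dom}\,f^{\ast}$ and for singular $\rho_{\theta}$; lower semicontinuity of $D_{f}^{\min}$ together with the continuity result of Section~\ref{sec:continity} lets one reduce to strictly positive states and interior spectra by a limiting argument.
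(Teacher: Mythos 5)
Your proposal is correct and follows essentially the same route as the paper's own proof: Naimark dilation to a PVM on an extended space, the operator Jensen (Hansen--Pedersen) inequality under (I) to compress back to $\mathcal{H}$, restriction to PVMs for the matching lower bound, and the substitution $S:=-f^{\ast}(T)$ (or the symmetric argument from (\ref{D=sup-s}) under (II)) for the second formula. Your explicit check that $\mathrm{spec}\,V^{\dagger}TV\subset\mathrm{dom}\,f^{\ast}$ via the numerical range is a point the paper leaves implicit, but it is a refinement of the same argument rather than a different approach.
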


As we analyze later, $f_{\alpha}$ ($\alpha\leq\frac{1}{2}$) and
$f_{\mathrm{KL}_{2}}$ are the examples where (I) holds, and $f_{\alpha}$
($\alpha\geq\frac{1}{2}$) and $f_{\mathrm{KL}}$ are the examples where (II) holds.

\section{The stationary point}

\label{sec:sationary}

Throughout this section, unless otherwise mentioned, we suppose that the
condition (I) holds. This means that
\begin{equation}
G\left(  T\right)  :=\mathrm{tr}\,\rho_{1}T-\mathrm{tr}\,\rho_{2}f^{\ast
}\left(  T\right)  \label{def-G}%
\end{equation}
is concave in $T$, and its supremum equals $D_{f}^{\min}$. Moreover, we focus
on an easy case where the stationary point $T_{0}$ of $G$, or $T_{0}$ with
$\mathrm{d}G\left(  T_{0}\right)  /\mathrm{d}T=0$ exists in
\[
\mathrm{dom}\,G=\left\{  T\,;\mathrm{spec}\,T\,\subset\mathrm{dom}f^{\ast
}\right\}
\]
( $f^{\ast}$ is differentiable, being operator convex). Thus the supremum of
$G$ is achieved at $T_{0}$.

The Frechet derivative $\mathrm{D}f^{\ast}\left(  T\right)  $ of $f^{\ast}$
i.e., a linear transform in $\mathcal{B}\left(  \mathcal{H}\right)  $ with%
\[
\left\Vert f^{\ast}\left(  T+X\right)  -f^{\ast}\left(  T\right)
-\mathrm{D}f^{\ast}\left(  T\right)  \left(  X\right)  \right\Vert
_{2}=o\left(  \left\Vert X\right\Vert _{2}\right)
\]
is given by, in the basis which diagonalizes $T$,
\begin{equation}
\,\mathrm{D}f^{\ast}\left(  T\right)  \left(  X\right)  =\left[
f^{\ast\left[  1\right]  }\left(  t_{i},t_{j}\right)  X_{i,j}\right]
,\label{Frechet}%
\end{equation}
where $t_{i}$ ($i=1,\cdots$) are eigenvalues of $T$, and
\[
f^{\left[  1\right]  }\left(  t,t^{\prime}\right)  :=\left\{
\begin{array}
[c]{cc}%
\frac{f\left(  t\right)  -f\left(  t^{\prime}\right)  }{t-t^{\prime}}, &
\left(  t\neq t^{\prime}\right)  ,\\
f^{\prime}\left(  t\right)  , & \left(  t=t^{\prime}\right)  .
\end{array}
\right.
\]
A consequence of this formula is $\,\mathrm{D}f^{\ast}\left(  T\right)
\left(  \cdot\right)  $ is self-adjoint with respect to the inner product
$\mathrm{tr}\,XY$,
\begin{align*}
\mathrm{tr}\,Y\mathrm{D}f^{\ast}\left(  T\right)  \left(  X\right)   &
=\sum_{i,j}\overline{\rho_{2,i,j}}f^{\ast\left[  1\right]  }\left(
t_{i},t_{j}\right)  X_{i,j}\\
&  =\sum_{i,j}\overline{f^{\ast\left[  1\right]  }\left(  t_{i},t_{j}\right)
\rho_{2,i,j}}X_{i,j}\\
&  =\mathrm{tr}\,X\mathrm{D}f^{\ast}\left(  T\right)  \left(  Y\right)  .
\end{align*}

With these definitions and assumptions, we now proceed to the analysis of the
maximal point of $G\left(  T\right)  $. Most tractable case is that there is a
stationary point of $G\left(  T\right)  $ in $\mathrm{dom}\,G$. If
\begin{align*}
\left.  \frac{\mathrm{d}\,G\left(  T_{0}+sX\right)  }{\mathrm{d}%
\,s}\right\vert _{s=0} &  =\mathrm{tr}\,X\rho_{1}-\mathrm{tr}\,\left\{
\rho_{2}\,\mathrm{D}f^{\ast}\left(  T_{0}\right)  \left(  X\right)  \right\}
\\
&  =\mathrm{tr}\,X\left(  \rho_{1}-\mathrm{D}f^{\ast}\left(  T_{0}\right)
\left(  \rho_{2}\right)  \right)  =0
\end{align*}
holds for any Hermitian matrix $X$, $T_{0}$ achieves maximum. (Here we used
the fact that $\mathrm{D}f^{\ast}\left(  T_{0}\right)  \left(  \cdot\right)  $
is self-conjugate.) Thus, we have
\begin{equation}
\rho_{1}=\mathrm{D}f^{\ast}\left(  T_{0}\right)  \left(  \rho_{2}\right)
\label{rho=Df}%
\end{equation}

Therefore,
\begin{align}
D_{f}^{\min}\left(  \rho_{1}||\rho_{2}\right)   &  =\mathrm{tr}\,\rho_{1}%
T_{0}-\mathrm{tr}\,\rho_{2}f^{\ast}\left(  T_{0}\right)  \nonumber\\
&  =\mathrm{tr}\,T_{0}\,\mathrm{D}f^{\ast}\left(  T_{0}\right)  \left(
\rho_{2}\right)  -\mathrm{tr}\,\rho_{2}f^{\ast}\left(  T_{0}\right)
\nonumber\\
&  =\mathrm{tr}\,\mathrm{D}f^{\ast}\left(  T_{0}\right)  \left(  T_{0}\right)
\rho_{2}-\mathrm{tr}\,\rho_{2}f^{\ast}\left(  T_{0}\right)  \nonumber\\
&  =\mathrm{tr}\,\left\{  \mathrm{D}f^{\ast}\left(  T_{0}\right)  \left(
T_{0}\right)  -f^{\ast}\left(  T_{0}\right)  \right\}  \rho_{2}\nonumber\\
&  =\mathrm{tr}\,\left\{  T_{0}\cdot f^{\ast^{\prime}}\left(  T_{0}\right)
-f^{\ast}\left(  T_{0}\right)  \right\}  \rho_{2}\nonumber\\
&  =\mathrm{tr}\,\,f\left(  f^{\ast^{\prime}}\left(  T_{0}\right)  \right)
\rho_{2},\label{Dfmin=tr-f-rho}%
\end{align}
Here, in the third identity holds since $\mathrm{D}f^{\ast}\left(
T_{0}\right)  \left(  \cdot\right)  $ is self-adjoint, the fifth identity
holds due to (\ref{Frechet}), and the last identity is due to $f=f^{\ast\ast}$.

\ Combinig above argument with (\ref{suppM}), we have:

\begin{theorem}
\label{th:T-domain}Suppose $f$ is a proper lower semicontinuous convex
function with $\mathrm{dom}\,f\supset\left(  0,\infty\right)  $. Suppose also
the assumption (I) is true. Then, the $T$ in (\ref{Df-sup-T}) can be
restricted to the set of all the Hermitian operators with $\mathrm{spec}%
\,T\subset\left[  t_{\ast}^{\prime},t_{\ast}\right]  $, $t_{\ast}^{\prime}$
and $t_{\ast}$ are as of Definition\thinspace\ref{def:b-t}. Also, if $\left[
t_{\ast}^{\prime},t_{\ast}\right]  $ is contained in the interior of
$\mathrm{dom}\,f^{\ast}$, the solution $T_{0}$ to (\ref{rho=Df}) exists and
achieves the supremum.
\end{theorem}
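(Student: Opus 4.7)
My plan is to deduce both assertions of Theorem~\ref{th:T-domain} from the spectral-decomposition perspective, reusing the subgradient inequalities that were already established around equation~(\ref{suppM}) and then adding a standard compactness/variational argument for the second half.

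For the restriction of $\mathrm{spec}\,T$ to $[t_{\ast}^{\prime},t_{\ast}]$, I would diagonalize $T=\sum_{i}t_{i}E_{i}$ in (\ref{Df-sup-T}) so that $G(T)=\sum_{i}\mathrm{tr}\,\{t_{i}\rho_{1}-f^{\ast}(t_{i})\rho_{2}\}E_{i}$. The PVM $\{E_{i}\}$ on $\mathcal{H}$ plays exactly the role of the measurement $M$ in the derivation preceding (\ref{suppM}), and the two subgradient inequalities
\[
\{t_{\ast}\rho_{1}-f^{\ast}(t_{\ast})\rho_{2}\}-\{t\rho_{1}-f^{\ast}(t)\rho_{2}\}\geq(t_{\ast}-t)\{\rho_{1}-b_{\ast}\rho_{2}\}\geq0\quad(t\geq t_{\ast}),
\]
and its mirror at $t_{\ast}^{\prime}$, imply that replacing any eigenvalue $t_{i}>t_{\ast}$ by $t_{\ast}$ (or any $t_{i}<t_{\ast}^{\prime}$ by $t_{\ast}^{\prime}$) leaves $G(T)$ nondecreased. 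Thus one can always push $T$ into the set $\mathcal{T}:=\{T=T^{\dagger}:\mathrm{spec}\,T\subset[t_{\ast}^{\prime},t_{\ast}]\}$ without losing the supremum.

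For the second assertion, assume $[t_{\ast}^{\prime},t_{\ast}]\subset\mathrm{int}\,\mathrm{dom}\,f^{\ast}$. Then $\mathcal{T}$ is a compact subset of the Hermitian operators on $\mathcal{H}$ (spectrum in a compact set, norm bounded). On $\mathrm{int}\,\mathrm{dom}\,f^{\ast}$ the operator convex $f^{\ast}$ is continuous (in fact smooth), so the functional calculus $T\mapsto f^{\ast}(T)$ is continuous in the operator norm on $\mathcal{T}$, and hence $G$ is continuous on $\mathcal{T}$. By Weierstrass, the supremum in (\ref{Df-sup-T}) is attained at some $T_{0}\in\mathcal{T}$.

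Because $[t_{\ast}^{\prime},t_{\ast}]$ lies in the interior of $\mathrm{dom}\,f^{\ast}$, the point $T_{0}$ is an \emph{interior} point of $\mathrm{dom}\,G=\{T:\mathrm{spec}\,T\subset\mathrm{dom}\,f^{\ast}\}$: there is $\delta>0$ such that $T_{0}+sX\in\mathrm{dom}\,G$ for every Hermitian $X$ and $|s|<\delta/\|X\|$. Differentiability of the functional calculus on this open neighbourhood gives the Fr\'echet-derivative formula (\ref{Frechet}), and the variational calculation copied verbatim from the paragraph containing (\ref{rho=Df}) shows that $T_{0}$ maximizes the concave $G$ iff
\[
\mathrm{tr}\,X\bigl(\rho_{1}-\mathrm{D}f^{\ast}(T_{0})(\rho_{2})\bigr)=0\quad\text{for all Hermitian }X,
\]
i.e., iff $\rho_{1}=\mathrm{D}f^{\ast}(T_{0})(\rho_{2})$, which is precisely (\ref{rho=Df}).

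The only step that requires real care is the compactness/continuity reduction: one must make sure $G$ is continuous on all of $\mathcal{T}$ (including its boundary eigenvalues $t_{\ast}^{\prime},t_{\ast}$) and that $T_{0}$ still has room to be perturbed inside $\mathrm{dom}\,G$; both issues are dispatched by the hypothesis $[t_{\ast}^{\prime},t_{\ast}]\subset\mathrm{int}\,\mathrm{dom}\,f^{\ast}$, which makes $f^{\ast}$ smooth on an open interval strictly containing the allowed spectrum. Beyond that, everything is a direct application of previously derived identities.
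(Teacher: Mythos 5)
Your proposal is correct and follows essentially the same route as the paper: the spectral-restriction step is exactly the paper's subgradient argument leading to (\ref{suppM}) (you apply the clipping to the eigenvalues of $T$ directly, the paper applies it to the support of the measurement $M$, but the inequalities used are identical), and the second half is the paper's stationarity calculation around (\ref{rho=Df}) combined with concavity of $G$. The only difference is expository: you spell out the compactness/Weierstrass existence argument and the interior-perturbation step that the paper compresses into the single phrase ``combining above argument with (\ref{suppM})'', which is a welcome clarification rather than a departure.
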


So far, we had supposed the assumption (I) is true. Now let us consider the
case where (II) holds and (I) does not. A trivial approach is to exchange
$\rho_{0}$ and $\rho_{1}$, and apply all the analysis replacing $f$ by
$\hat{f}$. This means the change of variable from $T$ to $S:=-f^{\ast}\left(
T\right)  $. But sometimes, the use of the variable $T$ is more preferable for
technical reasons. In such cases, still we can use (\ref{rho=Df}), because of
the following reason. Since the assumption (II) says that $\hat{f}^{\ast
}\left(  t\right)  =-f^{\ast-1}\left(  t\right)  $ is operator convex, it is
continuously differentiable. Thus the stable point with respect to $S$ is also
a stable point with respect to $T$.

\begin{remark}
The linear map $\mathrm{D}f^{\ast}\left(  T_{0}\right)  \left(  \cdot\right)
$ in fact is completely positive, if $f$ satisfies (I), $\mathrm{dom}$
$f^{\ast}$ is not bounded from below, and $f^{\ast}\left(  -\infty\right)
>-\infty$. \ Since $f^{\ast}$ is operator monotone by Lemma\thinspace
\ref{lem:monotone} below, the matrix $\left[  f^{\ast\left[  1\right]
}\left(  t_{i},t_{j}\right)  \right]  _{i,j=1}^{n}$ is positive for any $n$
and $t_{1}$,$t_{2}$,$\cdots$,$t_{n}\in\mathrm{dom}$ $f^{\ast}$, by
Theorem\thinspace2.4.3 of \cite{Hiai}. Therefore, there is a complex numbers
$\beta_{i,1}$, $\beta_{i,2}$, $\cdots$ with $f^{\ast\left[  1\right]  }\left(
t_{i},t_{j}\right)  =\sum_{\kappa}\beta_{i,\kappa}\overline{\beta_{j,\kappa}}%
$, and thus
\begin{align*}
\mathrm{D}f^{\ast}\left(  T_{0}\right)  \left(  X\right)   &  =\sum
_{i,j}f^{\ast\left[  1\right]  }\left(  t_{i},t_{j}\right)  \left\vert
e_{i}\right\rangle \left\langle e_{i}\right\vert X\left\vert e_{j}%
\right\rangle \left\langle e_{j}\right\vert \\
&  =\sum_{\kappa}\left(  \sum_{i}\beta_{i,\kappa}\left\vert e_{i}\right\rangle
\left\langle e_{i}\right\vert \right)  X\left(  \sum_{j}\overline
{\beta_{j,\kappa}}\left\vert e_{j}\right\rangle \left\langle e_{j}\right\vert
\right)  ,
\end{align*}
where $T_{0}=\sum_{i}t_{i}\left\vert e_{i}\right\rangle \left\langle
e_{i}\right\vert $. \ 
\end{remark}

\begin{remark}
By (\ref{rho=Df}) and (\ref{Dfmin=tr-f-rho}), $\mathrm{D}f^{\ast}\left(
T_{0}\right)  \left(  \cdot\right)  $ and/or $f^{\ast^{\prime}}\left(
T_{0}\right)  $ may be viewed as a non-commutative version of Radon-Nikodym
derivative $\mathrm{d}P_{1}/\mathrm{d}P_{2}$.
\end{remark}

\section{Non-full-rank states}

\label{sec:non-full-rank}

In this section we study thes case where $\ker\,\rho_{1}$ is non-trivial, with
the additional assumptions that the condition (I) holds,  and $\mathrm{dom}$
$f^{\ast}$ is not bounded from below. By the following lemma, whose proof is
given in Appendix\thinspace\ref{sec:proof-lem-monotone},  $f^{\ast}$ in fact
is operator monotone increasing.

\begin{lemma}
\label{lem:monotone}Suppose $f$ is proper, lower semicontinuous, convex,
canonical and $\mathrm{dom}\,f\supset(0,\infty)$. Suppose also that  $f^{\ast
}$ is operator convex, that  $\mathrm{dom}$ $f^{\ast}$ is not bounded from
below, and that $f^{\ast}\left(  -\infty\right)  >-\infty$. Then $f^{\ast}$ is
operator monotone.
\end{lemma}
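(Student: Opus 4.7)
The plan is to reduce operator monotonicity of $f^{\ast}$ to a scalar convexity argument along an affine L\"owner-interpolation that is extended beyond the segment $[A,B]$ into the unbounded-below tail of $\mathrm{dom}\,f^{\ast}$. Fix Hermitian operators $A\leq B$ with $\mathrm{spec}\,A,\mathrm{spec}\,B\subset\mathrm{dom}\,f^{\ast}$, put $C:=B-A\geq 0$, and define $X(\lambda):=A+\lambda C$ for all $\lambda\leq 1$. The whole ray $\lambda\in(-\infty,1]$ is used; this is the point where the hypothesis that $\mathrm{dom}\,f^{\ast}$ is unbounded below enters crucially.

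First I would check that $X(\lambda)$ has spectrum in $\mathrm{dom}\,f^{\ast}$ for every $\lambda\leq 1$. Since $B-X(\lambda)=(1-\lambda)C\geq 0$, the L\"owner inequality $X(\lambda)\leq B$ gives $\mathrm{spec}\,X(\lambda)\subset(-\infty,\lambda_{\max}(B)]$, and this interval is contained in $\mathrm{dom}\,f^{\ast}$ because $\mathrm{dom}\,f^{\ast}$ is an interval containing $\lambda_{\max}(B)$ and unbounded below. Hence $h(\lambda):=f^{\ast}(X(\lambda))$ is well-defined on $(-\infty,1]$, and because $\lambda\mapsto X(\lambda)$ is affine, operator convexity of $f^{\ast}$ makes $h$ itself operator convex in $\lambda$.

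Second, I would show $h$ is uniformly bounded on $(-\infty,1]$. Canonicity of $f$ forces the scalar $f^{\ast}$ to be non-decreasing on $\mathrm{dom}\,f^{\ast}$, and the hypothesis $f^{\ast}(-\infty)>-\infty$ then sandwiches $f^{\ast}(t)\in[f^{\ast}(-\infty),f^{\ast}(\lambda_{\max}(B))]$ for every $t\leq\lambda_{\max}(B)$. The functional calculus lifts this to $f^{\ast}(-\infty)\mathbf{1}\leq h(\lambda)\leq f^{\ast}(\lambda_{\max}(B))\mathbf{1}$, so for each fixed unit vector $\xi$ the scalar function $\tilde h_{\xi}(\lambda):=\langle\xi,h(\lambda)\xi\rangle$ is convex on $(-\infty,1]$ and takes values in a fixed compact interval.

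The final step is a scalar rigidity lemma: any convex function bounded on $(-\infty,a]$ is non-decreasing, because if its right-derivative were strictly negative at some $\lambda_{0}$, then the supporting-line inequality $\tilde h_{\xi}(\lambda)\geq\tilde h_{\xi}(\lambda_{0})+\tilde h_{\xi,+}^{\prime}(\lambda_{0})(\lambda-\lambda_{0})$ would drive $\tilde h_{\xi}(\lambda)\to+\infty$ as $\lambda\to-\infty$, contradicting the bound. Applying this to $\tilde h_{\xi}$ at $\lambda=0$ and $\lambda=1$ yields $\langle\xi,f^{\ast}(A)\xi\rangle\leq\langle\xi,f^{\ast}(B)\xi\rangle$ for every unit $\xi$, equivalently $f^{\ast}(A)\leq f^{\ast}(B)$ in L\"owner order. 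The hard part is simultaneously keeping the extended interpolation inside $\mathrm{dom}\,f^{\ast}$ and controlling $h$ uniformly; both obligations are discharged by precisely the two tail hypotheses on $f^{\ast}$, after which the rigidity step is classical convex analysis.
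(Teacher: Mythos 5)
Your proof is correct, but it takes a genuinely different route from the paper's. The paper argues structurally: it first shows $\mathrm{dom}\,f^{\ast}$ must be a half-line $(-\infty,a)$ or $(-\infty,a]$ (ruling out the whole line because an operator convex function on $\mathbb{R}$ is quadratic, which canonicity plus $f^{\ast}\left(  -\infty\right)  >-\infty$ forbids), then substitutes $h\left(  t\right)  :=f^{\ast}\left(  -t+a\right)$ and invokes the integral representation of Proposition\thinspace8.4 of \cite{HiaiMosonyiPetzBeny},
\[
h\left(  t\right)  =h\left(  0\right)  +\alpha t-\int_{\left(  0,\infty\right)  }\frac{t}{t+\eta}\,\mathrm{d}\nu\left(  \eta\right)  ,
\]
shows $\alpha=0$, and concludes operator monotonicity from that of each kernel $t\mapsto-t/\left(  t+\eta\right)$, finishing with an $\varepsilon$-shift approximation when the domain is open at $a$. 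You instead give a direct two-point argument: extend the chord through $A\leq B$ to the left-infinite ray $X\left(  \lambda\right)  =A+\lambda\left(  B-A\right)$, $\lambda\leq1$, which stays inside the domain precisely because the domain is an interval unbounded below; operator convexity makes $\lambda\mapsto\left\langle \xi,f^{\ast}\left(  X\left(  \lambda\right)  \right)  \xi\right\rangle$ convex, canonicity (scalar monotonicity) bounds it above by $f^{\ast}\left(  \lambda_{\max}\left(  B\right)  \right)$, and the classical rigidity fact that a convex function bounded above on a left-infinite ray is non-decreasing yields $\left\langle \xi,f^{\ast}\left(  A\right)  \xi\right\rangle \leq\left\langle \xi,f^{\ast}\left(  B\right)  \xi\right\rangle$. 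Each step checks out (in particular $X\left(  \lambda\right)  \leq B$ gives $\mathrm{spec}\,X\left(  \lambda\right)  \subset(-\infty,\lambda_{\max}\left(  B\right)  ]\subset\mathrm{dom}\,f^{\ast}$, and the right derivative is a valid subgradient for the supporting-line contradiction). What your route buys: it is self-contained, needs no representation theorem, no case analysis on the shape of $\mathrm{dom}\,f^{\ast}$, and no endpoint approximation; moreover, contrary to your closing remark, the rigidity step needs only boundedness \emph{above}, so the hypothesis $f^{\ast}\left(  -\infty\right)  >-\infty$ (which you use only for the superfluous lower bound) is never actually needed --- your argument proves the slightly stronger statement that canonicity, operator convexity, and a domain unbounded from below already imply operator monotonicity. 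What the paper's route buys is the explicit Pick-type integral representation itself, which is heavier machinery but of the same flavor as tools used elsewhere in the paper.
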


\begin{remark}
$\mathrm{dom}$ $f^{\ast}$ is not bounded from below if and only if right
derivative of $f$ at $\lambda=0$ is $-\infty$. In this case $f^{\ast}\left(
-\infty\right)  =f\left(  0\right)  $.
\end{remark}

Below, $\pi_{\mathcal{H}^{\prime}}$ denotes the projection onto $\mathcal{H}%
^{\prime}$and  $\pi_{\rho}$ ($\rho\geq0$) is short for $\pi_{\mathrm{supp}%
\,\rho}$. Also,  $T_{\mathcal{H}^{\prime}}$ is the restriction of
$\pi_{\mathcal{H}^{\prime}}T\pi_{\mathcal{H}^{\prime}}$ on $\mathcal{H}%
^{\prime}$ .

\begin{lemma}
\label{lem:f*(T+sX)->}Suppose the domain of a function $h$ is unbounded from
below and $h\left(  -\infty\right)  $ is finite. Then, if $X\leq0$,
\begin{equation}
\lim_{s\rightarrow\infty}h\left(  T+sX\right)  =\,h\left(  T_{\ker\,X}\right)
\pi_{\ker X}+h\left(  -\infty\right)  \pi_{\mathrm{supp}X}. \label{f*(T+sX)->}%
\end{equation}

\end{lemma}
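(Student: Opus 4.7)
The plan is to analyze the spectrum of $A_s := T + sX$ as $s \to \infty$ using the orthogonal decomposition $\mathcal{H} = \ker X \oplus \mathrm{supp}\,X$. Write $P := \pi_{\ker X}$ and $Q := \pi_{\mathrm{supp}\,X}$. Since $X \le 0$ is Hermitian, $XP = PX = 0$, hence $X = QXQ$; by finite-dimensionality $-QXQ \ge \epsilon Q$ for some $\epsilon > 0$. In block form with respect to this decomposition,
\begin{equation*}
A_s = \begin{pmatrix} PTP & PTQ \\ QTP & QTQ + sQXQ \end{pmatrix}.
\end{equation*}

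First I would prove a spectral splitting. Let $\lambda_1(s) \ge \cdots \ge \lambda_n(s)$ be the eigenvalues of $A_s$ and $k := \dim \ker X$. By the Courant--Fischer min-max principle, testing with the subspace $\ker X$ gives $\lambda_k(s) \ge -\|T\|$ for all $s$, since $A_s$ agrees with $T$ on that subspace; testing with $\mathrm{supp}\,X$ gives $\lambda_{k+1}(s) \le \|T\| - s\epsilon \to -\infty$. Thus the spectrum of $A_s$ splits, for all large $s$, into a bounded cluster of $k$ eigenvalues and a cluster diverging to $-\infty$. Fix $c \in \mathbb{R}$ lying strictly between the two clusters for all large $s$, and set $P_+(s) := \mathbf{1}_{(c,\infty)}(A_s)$, $P_-(s) := \mathbf{1} - P_+(s)$. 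Using the Riesz projection $P_+(s) = \frac{1}{2\pi i}\oint_\gamma (z - A_s)^{-1}\,dz$ for a fixed contour $\gamma$ around the bounded cluster, combined with the block-resolvent/Schur-complement identity (the Schur complement of the lower-right block equals $PTP - PTQ(QTQ + sQXQ)^{-1}QTP$, whose correction term is $O(1/s)$), one obtains $\|P_+(s) - P\| = O(1/s)$ and that the compression $P_+(s) A_s P_+(s)$, identified with an operator on $\ker X$, converges in operator norm to $T_{\ker X}$.

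Finally, I would assemble $h(A_s)$ via functional calculus. Since $P_\pm(s)$ commute with $A_s$,
\begin{equation*}
h(A_s) = P_+(s) h(A_s) P_+(s) + P_-(s) h(A_s) P_-(s),
\end{equation*}
each summand being $h$ of the compression of $A_s$ to the corresponding invariant subspace, extended by $0$. By the previous step and continuity of $h$ on compact subsets of $\mathrm{dom}\,h$, the first summand converges to $h(T_{\ker X})\pi_{\ker X}$. All eigenvalues of the second compression diverge to $-\infty$, so by finiteness of $h(-\infty)$ (which gives continuity at $-\infty$ in the extended sense) the second summand converges to $h(-\infty)\pi_{\mathrm{supp}\,X}$, yielding~(\ref{f*(T+sX)->}). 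The main obstacle is the decoupling step: controlling the off-diagonal blocks $PTQ,QTP$ well enough to show $P_+(s)\to P$ in norm and that the Schur-complement correction vanishes as $s\to\infty$. Once that norm estimate is in hand, standard spectral calculus together with the hypothesis $h(-\infty)\in\mathbb{R}$ closes the argument.
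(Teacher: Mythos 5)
Your proposal is correct, but it takes a genuinely different route from the paper's own proof. The paper argues at the level of individual eigenvalues and eigenvectors: it rescales $s^{-1}\left(  T+sX\right)  \rightarrow X$ to conclude that the eigenvalues $r_{s}^{\kappa}$ of $T+sX$ split into a bounded family (indexed by $\ker X$) and a family growing like $s\xi^{\kappa}\rightarrow-\infty$, shows by a direct projection computation that each bounded eigenvalue converges to an eigenvalue $t$ of $T_{\ker X}$ with the associated limit eigenvectors spanning the eigenspace $\mathcal{H}_{t}$, and then assembles $h\left(  T+sX\right)  $ term by term from the spectral decomposition. You instead work at the level of subspaces: a block decomposition with respect to $\ker X\oplus\mathrm{supp}\,X$, Courant--Fischer to establish the spectral gap, and a Riesz-projection/Schur-complement estimate giving $\left\Vert P_{+}\left(  s\right)  -\pi_{\ker X}\right\Vert =O\left(  1/s\right)  $ together with norm convergence of the compression to $T_{\ker X}$, after which functional calculus closes the argument. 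Your estimates check out (invertibility of $QTQ+sQXQ$, the $O\left(  1/s\right)  $ resolvent bound on a fixed contour, and the uniform divergence of the second cluster are all fine), and both arguments share the same implicit standing hypothesis that $h$ is continuous on its domain and $h\left(  t\right)  \rightarrow h\left(  -\infty\right)  $ as $t\rightarrow-\infty$, which holds in the intended application $h=f^{\ast}$. The trade-off: the paper's route is more elementary (no contour integrals), but it silently assumes that the eigenvectors $\left\vert \varphi_{s}^{\kappa}\right\rangle $ converge as $s\rightarrow\infty$, a point which in general requires a compactness/subsequence argument since eigenvectors need not depend continuously on $s$ near degeneracies; your route costs heavier machinery but is quantitative and entirely avoids eigenvector convergence, replacing it by norm convergence of spectral projections, which is arguably the more robust formulation.
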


\begin{proof}
Let $r_{s}^{\kappa}$ and $\left\vert \varphi_{s}^{\kappa}\right\rangle $
($\left\Vert \varphi_{s}^{\kappa}\right\Vert =1$) be the $\kappa$-th
eigenvalue and eigenvector of $T+sX$, respectively. Since
\[
\lim_{s\rightarrow\infty}\left(  s^{-1}T+X\right)  =X,
\]
$\lim_{s\rightarrow\infty}r_{s}^{\kappa}/s:=\xi^{\kappa}$ and $\lim
_{s\rightarrow\infty}\left\vert \varphi_{s}^{\kappa}\right\rangle =:\left\vert
\varphi_{\infty}^{\kappa}\right\rangle $ gives a complete set of eigenvalues
and eigenvectors of $X$. Without loss of generality, let $\xi^{\kappa}$ be 0
if $\kappa\leq\dim\ker X$, and negative otherwise.

Let $t$ be an eigenvalue of $T_{\ker\,X}$ \ and denote by $\mathcal{H}_{t}$
and $\pi_{t}$  the corresponding eigenspace and the projection onto it.
Observe, if $\kappa\leq\dim\ker\,X$, \ we have
\begin{align*}
&  r_{s}^{\kappa}\pi_{t}\left\vert \varphi_{s}^{\kappa}\right\rangle \\
&  =\pi_{t}\left(  T+sX\right)  \left\vert \varphi_{s}^{\kappa}\right\rangle
=\pi_{t}T\left\vert \varphi_{s}^{\kappa}\right\rangle \\
&  =\pi_{t}\left(  \pi_{\ker X}+\pi_{\mathrm{supp}X}\right)  T\left\vert
\varphi_{s}^{\kappa}\right\rangle =\pi_{t}\pi_{\ker X}T\left\vert \varphi
_{s}^{\kappa}\right\rangle \\
&  =\left(  \pi_{t}\pi_{\ker X}T\pi_{\ker X}+\pi_{t}\pi_{\ker X}%
T\pi_{\mathrm{supp}X}\right)  \left\vert \varphi_{s}^{\kappa}\right\rangle \\
&  =t\pi_{\mathcal{H}_{t}}\left\vert \varphi_{s}^{\kappa}\right\rangle
+\pi_{\mathcal{H}_{t}}T\pi_{\mathrm{supp}X}\left\vert \varphi_{s}^{\kappa
}\right\rangle ,
\end{align*}
where the second and the foruth identity is by $\pi_{t}X=0$ (recall that
$\mathcal{H}_{t}$ is a subspace of $\ker X$), and the last identity is by
$\pi_{t}T_{\ker\,X}=t\pi_{t}$. Thus we have
\begin{align*}
\lim_{s\rightarrow\infty}\left(  r_{s}^{\kappa}-t\right)  \pi_{\mathcal{H}%
_{t}}\left\vert \varphi_{s}^{\kappa}\right\rangle  &  =\lim_{s\rightarrow
\infty}\pi_{\mathcal{H}_{t}}T\pi_{\mathrm{supp}X}\left\vert \varphi
_{s}^{\kappa}\right\rangle \\
&  =0.
\end{align*}

Therefore, if $\pi_{\mathcal{H}_{t}}\left\vert \varphi_{\infty}^{\kappa
}\right\rangle $ $\neq0$ holds, $r_{s}^{\kappa}\rightarrow t$ holds, implying
that $\left\vert \varphi_{\infty}^{\kappa}\right\rangle $ is a member of
$\mathcal{H}_{t}$. Since $\left\vert \varphi_{\infty}^{\kappa}\right\rangle $
($\kappa\leq\dim\ker X$) is a member of $\ker\,X$, which is the direct sum of
all $\mathcal{H}_{t}$'s, it overlaps with, and thus a member of, at least one
of $\mathcal{H}_{t}$'s. Therefore, recalling that $\left\{  \left\vert
\varphi_{\infty}^{\kappa}\right\rangle ;\kappa\leq\dim\ker X\right\}  $ is a
CONS of $\ker\,X$,
\[
\sum_{\kappa:\lim_{s\rightarrow\infty}r_{s}^{\kappa}=t}\left\vert
\varphi_{\infty}^{\kappa}\right\rangle \left\langle \varphi_{\infty}^{\kappa
}\right\vert =\pi_{\mathcal{H}_{t}}.
\]
Therefore,
\begin{align*}
&  \lim_{s\rightarrow\infty}h\left(  T+sX\right)  \\
= &  \lim_{s\rightarrow\infty}\sum_{\kappa\leq\dim\ker X}h\left(
r_{s}^{\kappa}\right)  \left\vert \varphi_{s}^{\kappa}\right\rangle
\left\langle \varphi_{s}^{\kappa}\right\vert +\lim_{s\rightarrow\infty}%
\sum_{\kappa>\dim\ker X}h\left(  r_{s}^{\kappa}\right)  \left\vert \varphi
_{s}^{\kappa}\right\rangle \left\langle \varphi_{s}^{\kappa}\right\vert \\
&  =\sum_{t}h\left(  t\right)  \pi_{\mathcal{H}_{t}}+\sum_{\kappa>\dim\ker
X}\lim_{s\rightarrow\infty}h\left(  s\xi^{\kappa}\right)  \left\vert
\varphi_{\infty}^{\kappa}\right\rangle \left\langle \varphi_{\infty}^{\kappa
}\right\vert \\
&  =h\left(  T_{\ker\,X}\right)  \pi_{\ker X}+h\left(  -\infty\right)
\pi_{\mathrm{supp}X}.
\end{align*}

\end{proof}

\begin{theorem}
\label{th:with-kernel}Suppose $f$ is proper, lower semicontinuous, convex,
canonical and $\mathrm{dom}\,f\supset(0,\infty)$. Suppose also $f^{\ast}$ is
operator convex and $\mathrm{dom}$ $f^{\ast}$ is not bounded from below. Then
if $\rho_{1}$ is not full-rank,
\begin{equation}
D_{f}^{\min}\left(  \rho_{1}||\rho_{2}\right)  =D_{f}^{\min}\left(  \rho
_{1}||\pi_{\rho_{1}}\rho_{2}\pi_{\rho_{1}}\right)  +f\left(  0\right)
\mathrm{tr}\,\rho_{2}\left(  \mathbf{1}-\pi_{\rho_{1}}\right)
.\label{with-kernel}%
\end{equation}
In addition, the measurement achives the maximum is composition of the
projective measurement $\left\{  \pi_{\rho_{1}},\mathbf{1}-\pi_{\rho_{1}%
}\right\}  $ followed by a measurement on $\mathrm{supp}\rho$.
\end{theorem}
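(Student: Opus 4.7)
The plan is to prove both inequalities separately, with the non-trivial direction relying on Theorem~\ref{th:max-T} and Lemma~\ref{lem:f*(T+sX)->} to push the eigenvalues of the optimization variable on $\ker\rho_1$ down to $-\infty$. Throughout, write $\pi := \pi_{\rho_1}$ and $X := -(\mathbf{1}-\pi)\le 0$, so that $\ker X = \mathrm{supp}\,\rho_1$ and $\mathrm{supp}\,X = \ker\rho_1$.

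\textbf{Step 1 (the $\ge$ inequality via an explicit measurement).} First I would verify the achievability by the composite measurement. Given any POVM $\{M'_y\}$ on $\mathrm{supp}\,\rho_1$, consider the POVM on $\mathcal{H}$ whose elements are $\pi M'_y\pi$ together with $\mathbf{1}-\pi$. Since $\rho_1 = \pi\rho_1\pi$, the outcome $\mathbf{1}-\pi$ has probability $0$ under $\rho_1$ and probability $\mathrm{tr}\,\rho_2(\mathbf{1}-\pi)$ under $\rho_2$; by the convention $p_2 f(0/p_2) = p_2 f(0)$ this outcome contributes $f(0)\mathrm{tr}\,\rho_2(\mathbf{1}-\pi)$ to $D_f$. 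The remaining outcomes give exactly $D_f(P^{M'}_{\rho_1}\|P^{M'}_{\pi\rho_2\pi})$. Taking the supremum over $M'$ yields the lower bound on $D_f^{\min}(\rho_1\|\rho_2)$, and also exhibits the claimed optimal measurement structure.

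\textbf{Step 2 (the $\le$ inequality via the operator formulation).} By assumption (I) and Theorem~\ref{th:max-T},
\[
D_f^{\min}(\rho_1\|\rho_2) = \sup_T G(T), \qquad G(T) := \mathrm{tr}\,\rho_1 T - \mathrm{tr}\,\rho_2 f^*(T),
\]
over Hermitian $T$ with $\mathrm{spec}\,T\subset\mathrm{dom}\,f^*$. For such a $T$ and any $s\ge 0$, the operator $T+sX$ remains admissible because $\mathrm{dom}\,f^*$ is unbounded below and its upper endpoint is unchanged. Since $\mathrm{tr}\,\rho_1 X = 0$, the linear term $\mathrm{tr}\,\rho_1(T+sX)$ does not depend on $s$. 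By Lemma~\ref{lem:monotone}, $f^*$ is operator monotone, so $T+sX\le T+s'X$ for $s\ge s'\ge 0$ gives $f^*(T+sX)\le f^*(T+s'X)$; thus $s\mapsto G(T+sX)$ is non-decreasing, and in particular $G(T)\le \lim_{s\to\infty}G(T+sX)$. Applying Lemma~\ref{lem:f*(T+sX)->} to $h = f^*$,
\[
\lim_{s\to\infty}f^*(T+sX) = f^*(T_\pi)\pi + f^*(-\infty)(\mathbf{1}-\pi),
\]
where $T_\pi$ is the compression of $T$ to $\mathrm{supp}\,\rho_1$. Canonicity together with $f^{**}=f$ forces $f^*(-\infty)=\inf f^* = -f(0)$ (as $f(0) = \sup_t(0\cdot t - f^*(t)) = -\inf_t f^*(t)$, and $f^*$ is strictly monotone increasing on its unbounded-below domain). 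Hence
\[
\lim_{s\to\infty} G(T+sX) = \mathrm{tr}\,\rho_1 T_\pi - \mathrm{tr}\,(\pi\rho_2\pi)f^*(T_\pi) + f(0)\,\mathrm{tr}\,\rho_2(\mathbf{1}-\pi).
\]
Taking the supremum over $T$ — equivalently, as I would check, over all Hermitian $T_\pi$ on $\mathrm{supp}\,\rho_1$ with spectrum in $\mathrm{dom}\,f^*$ (realize any given $T_\pi$ by $T = T_\pi\oplus c\mathbf{1}_{\ker\rho_1}$ with $c\in\mathrm{dom}\,f^*$, the choice of $c$ being irrelevant in the limit) — and invoking Theorem~\ref{th:max-T} on $\mathrm{supp}\,\rho_1$ to recognize the first two terms as $D_f^{\min}(\rho_1\|\pi\rho_2\pi)$ yields the matching upper bound.

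\textbf{Main obstacle.} The substantive work is Step 2: verifying $f^*(-\infty) = -f(0)$ (which is really what makes the $+f(0)$ appear in the statement rather than $-f(0)$, and is the only place canonicity of $f$ enters the argument), and justifying that the supremum of $G(T+sX)$ as both $T$ and $s$ vary reduces cleanly to the supremum over operators $T_\pi$ living on $\mathrm{supp}\,\rho_1$. The monotonicity of $G(T+sX)$ in $s$, which is what allows one to take the $s\to\infty$ limit without losing sharpness, crucially relies on operator monotonicity of $f^*$ from Lemma~\ref{lem:monotone}; this is the one step where the combined assumptions (canonical, $\mathrm{dom}\,f^*$ unbounded below, $f^*$ operator convex) are all used together.
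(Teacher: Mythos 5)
Your route is essentially the paper's own: Step 2 reproduces the paper's argument (operator monotonicity of $f^{\ast}$ from Lemma\thinspace\ref{lem:monotone} makes $G(T+sX)$ non-decreasing in $s$, Lemma\thinspace\ref{lem:f*(T+sX)->} computes the $s\rightarrow\infty$ limit, and canonicity turns $f^{\ast}\left(  -\infty\right)$ into $-f\left(  0\right)$), while your Step 1 makes explicit the lower bound that the paper gets for free (each $T+sX$ is itself admissible in (\ref{Df-sup-T})) and has the genuine merit of proving the claim about the structure of the optimal measurement, which the paper asserts but never argues.

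There is, however, one genuine gap. Both lemmas you invoke carry the hypothesis that $f^{\ast}\left(  -\infty\right)  $ is finite: Lemma\thinspace\ref{lem:monotone} assumes $f^{\ast}\left(  -\infty\right)  >-\infty$, and Lemma\thinspace\ref{lem:f*(T+sX)->} assumes ``$h\left(  -\infty\right)  $ is finite.'' This is not among the hypotheses of Theorem\thinspace\ref{th:with-kernel} and does not follow from them: for example $f=f_{-1}$, i.e.\ $f\left(  \lambda\right)  =\lambda^{-1}$ on $\left(  0,\infty\right)  $, is proper, lower semicontinuous, canonical, has $f^{\ast}$ operator convex with $\mathrm{dom}\,f^{\ast}$ unbounded from below, and yet $f^{\ast}\left(  -\infty\right)  =-\infty$ and $f\left(  0\right)  =+\infty$. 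In that case neither lemma may be applied, and your chain of identities in Step 2 (including $f^{\ast}\left(  -\infty\right)  =\inf f^{\ast}=-f\left(  0\right)  $, which degenerates to $-\infty=-\infty$) is not justified as written. The paper disposes of this case in its very first sentence: by Theorem\thinspace\ref{th:Df-finite}, when $f^{\ast}\left(  -\infty\right)  =-\infty$ (so that (\ref{Wf-bounded-1}) and hence (\ref{Wf-bounded}) fail) one has $D_{f}^{\min}\left(  \rho_{1}||\rho_{2}\right)  =\infty$ whenever $\mathrm{tr}\,\rho_{2}\left(  \mathbf{1}-\pi_{\rho_{1}}\right)  >0$, which matches the right-hand side since $f\left(  0\right)  =+\infty$; and when $\mathrm{tr}\,\rho_{2}\left(  \mathbf{1}-\pi_{\rho_{1}}\right)  =0$ the identity is trivial because $\pi_{\rho_{1}}\rho_{2}\pi_{\rho_{1}}=\rho_{2}$. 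Your proof needs this preliminary reduction (or an equivalent separate treatment of $f\left(  0\right)  =\infty$) inserted before Step 2; with it, the rest of your argument goes through.
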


By this theorem, the problem reduces to maximization of $G\left(  T\right)  $
as of (\ref{def-G})\ for all $T$ to on the support of $\rho_{1}$. Especially,
if $\rho_{1}$ is rank-1 state, $\rho=\left\vert \varphi\right\rangle
\left\langle \varphi\right\vert $, and $\left\langle \varphi_{1}\right\vert
\rho_{2}\left\vert \varphi_{1}\right\rangle \neq0$,
\begin{align}
D_{f}^{\min}\left(  \left\vert \varphi_{1}\right\rangle \left\langle
\varphi_{1}\right\vert \,||\rho_{2}\right)   &  =\sup_{t\in\mathrm{dom}%
\,f^{\ast}}\left\{  t-\left\langle \varphi_{1}\right\vert \rho_{2}\left\vert
\varphi_{1}\right\rangle \,f^{\ast}\left(  t\right)  \right\}  +f\left(
0\right)  \left(  1-\left\langle \varphi_{1}\right\vert \rho_{2}\left\vert
\varphi_{1}\right\rangle \right)  \nonumber\\
&  =\hat{f}\left(  \left\langle \varphi_{1}\right\vert \rho_{2}\left\vert
\varphi_{1}\right\rangle \right)  +f\left(  0\right)  \left(  1-\left\langle
\varphi_{1}\right\vert \rho_{2}\left\vert \varphi_{1}\right\rangle \right)
.\label{Df-pure}%
\end{align}
In addition, the measurement achives the maximum is composition of the
projective measurement $\left\{  \left\vert \varphi_{1}\right\rangle
\left\langle \varphi_{1}\right\vert ,\mathbf{1}-\left\vert \varphi
_{1}\right\rangle \left\langle \varphi_{1}\right\vert \right\}  $.

\begin{proof}
By Theorem\thinspace\ref{th:Df-finite}, we only have to prove the assertion
for the case $-\infty<f^{\ast}\left(  -\infty\right)  $ (otherwise,
$D_{f}^{\min}\left(  \rho_{1}||\rho_{2}\right)  =\infty$). By Lemma\thinspace
\ref{lem:monotone}, $f^{\ast}$ is operator monotone. Thus, if $X\leq0$ is
supported on $\ker\,\rho_{1}$, $G\left(  T+sX\right)  $, where $G$ is as of
(\ref{def-G}), is non decreasing in $s$. Therefore,
\[
\sup_{s}G\left(  T+sX\right)  =\lim_{s\rightarrow\infty}G\left(  T+sX\right)
.
\]
Hence, by Lemma\thinspace\ref{lem:f*(T+sX)->}, we have
\begin{align*}
\sup_{X\leq0:\,\mathrm{supp\,}X=\text{ }\ker\,\rho_{1}}G\left(  T+X\right)
&  =\mathrm{tr}\,\rho_{1}T_{\mathrm{supp}\rho_{1}}-\mathrm{tr}\,\left(
\rho_{2}\right)  _{\mathrm{supp}\rho_{1}}f^{\ast}\left(  T_{\mathrm{supp}%
\rho_{1}}\right)  \\
&  -f^{\ast}\left(  -\infty\right)  \mathrm{tr}\,\rho_{2}\left(
\mathbf{1}-\pi_{\rho_{1}}\right)  .
\end{align*}

Since $f$ is canonical, $f$ is affine on the negative half-line or infinite.
If the former is the case, $\mathrm{dom}\,f^{\ast}$ is bounded from below.
Thus, $\mathrm{dom}\,f$ does not extend to negative half-line. Hence,
$f^{\ast}\left(  -\infty\right)  =-f\left(  0\right)  $. Therefore, we have
(\ref{with-kernel}).
\end{proof}

\section{Examples}

\label{sec:examples}

\subsection{Renyi-type, $f_{\alpha}\left(  \lambda\right)  $}

$f_{\alpha}\left(  \lambda\right)  $  is defined by (\ref{Renyi}) on the
non-negative half line. On the negative half-line, it is canonically extended.
We omit the cases of $\alpha=0$ and $\alpha=1$. The relation
\begin{equation}
\hat{f}_{\alpha}\left(  \lambda\right)  =f_{1-\alpha}\left(  \lambda\right)
,\label{hatf=f}%
\end{equation}
turns out to be quite useful.

If $0<\alpha<1$,
\[
f_{\alpha}^{\ast}\left(  t\right)  =\left\{
\begin{array}
[c]{cc}%
\infty, & \left(  t>0\right)  ,\\
\left(  1-\alpha\right)  \alpha^{\frac{\alpha}{1-\alpha}}\left(  -t\right)
^{\frac{-\alpha}{1-\alpha}}, & \left(  t\leq0\right)  .
\end{array}
\right.
\]
Thus, if $0<\alpha\leq\frac{1}{2}$, the condition (I) satisfied, and if
$\frac{1}{2}\leq\alpha<1$, the condition (II) is satisfied. If $\alpha<0$,%
\[
f_{\alpha}^{\ast}\left(  t\right)  =\left\{
\begin{array}
[c]{cc}%
\infty, & \left(  t\geq0\right)  ,\\
\left(  \alpha-1\right)  \left(  -\alpha\right)  ^{\frac{\alpha}{1-\alpha}%
}\left(  -t\right)  ^{\frac{-\alpha}{1-\alpha}}, & \left(  t<0\right)  ,
\end{array}
\right.
\]
and the condition (I) is satisfied. Thus, \ by (\ref{hatf=f}), the condition
(II) is satisfied for $\alpha>1$. In this case,
\[
f_{\alpha}^{\ast}\left(  t\right)  =\left\{
\begin{array}
[c]{cc}%
\left(  \alpha-1\right)  \alpha^{\frac{-\alpha}{\alpha-1}}t^{\frac{\alpha
}{\alpha-1}}, & \left(  t>0\right)  ,\\
\infty, & \left(  t\leq0\right)  ,
\end{array}
\right.
\]
implying that the condition (I) \ is also satisfied for $\alpha\geq2$.

For all the values of $\alpha$ ($\neq0,1$), $f_{\alpha}^{\ast\prime}\left(
t\right)  $ moves all over the positive half line $\left(  0,\infty\right)  $.
Thus, by Theorem\thinspace\ref{th:T-domain}, the supremum is achieved by
$T_{0}$ with (\ref{rho=Df}).

In the case of $\alpha=-1$ and $2$, we can solve the problem "explicitly".
Observe $f_{2}^{\ast}$ is operator convex on $\mathrm{dom}\,f_{2}^{\ast}$.%

\[
f_{2}^{\ast}\left(  t\right)  =\left\{
\begin{array}
[c]{cc}%
\frac{1}{4}t^{2}, & \left(  t\geq0\right)  ,\\
\infty, & \left(  t<0\right)  .
\end{array}
\right.  ,
\]
By (\ref{rho=Df}), $T_{0}$ satisfies Lyapunov equation
\begin{equation}
\mathrm{D}f_{2}^{\ast}\left(  T_{0}\right)  \left(  \rho_{2}\right)  =\frac
{1}{4}\left(  T_{0}\rho_{2}+\rho_{2}T_{0}\right)  =\rho_{1}.\label{t0-f2}%
\end{equation}
If $\mathrm{supp}\rho_{1}\subset\mathrm{supp}\,\rho_{2}$, this equation about
$T_{0}$ has a solution, namely
\[
T_{0}=4\int_{-\infty}^{0}e^{s\rho_{2}}\rho_{1}e^{s\rho_{2}}\mathrm{d}s\geq0,
\]
and in the basis where $\rho_{2}$ is diagonal,
\[
T_{0,i,j}=\frac{4}{\rho_{2,i,i}+\rho_{2,j,j}}\rho_{1,i,j}.
\]
Thus, this solution has spectrum in $\mathrm{dom}\,f_{2}^{\ast}$. By
(\ref{Dfmin=tr-f-rho}),%
\begin{align*}
D_{f_{-1}}^{\min}\left(  \rho_{2}||\rho_{1}\right)   &  =D_{f_{2}}^{\min
}\left(  \rho_{1}||\rho_{2}\right)  =\mathrm{tr}\,\rho_{2}\left(  \frac{1}%
{2}T_{0}\right)  ^{2}=\frac{1}{4}\mathrm{tr}\,\rho_{2}T_{0}^{2}\\
&  =\frac{1}{2}\mathrm{tr}\,\rho_{1}T_{0}\\
&  =2\mathrm{tr}\,\rho_{1}\int_{-\infty}^{0}e^{s\rho_{2}}\rho_{1}e^{s\rho_{2}%
}\mathrm{d}s\\
&  =2\sum_{i,j}\frac{1}{\rho_{2,i,i}+\rho_{2,j,j}}\left\vert \rho
_{1,i,j}\right\vert ^{2}.
\end{align*}

If $\alpha=\frac{1}{2}$,
\[
f_{\frac{1}{2}}^{\ast}\left(  t\right)  =\left\{
\begin{array}
[c]{cc}%
\infty, & \left(  t>0\right)  ,\\
-\frac{1}{4}t^{-1}, & \left(  t\leq0\right)  ,
\end{array}
\right.
\]
and
\begin{equation}
\mathrm{D}f_{\frac{1}{2}}^{\ast}\left(  T_{0}\right)  \left(  \rho_{2}\right)
=\frac{1}{4}T_{0}^{-1}\rho_{2}T_{0}^{-1}=\rho_{1}.\label{t0-f1/2}%
\end{equation}
Thus%
\[
T_{0}^{-1}=-2\rho_{2}^{-1/2}\sqrt{\rho_{2}^{1/2}\rho_{1}\rho_{2}^{1/2}}%
\rho_{2}^{-1/2},
\]
and
\begin{align*}
D_{f_{1/2}}^{\min}\left(  \rho_{1}||\rho_{2}\right)   &  =-\mathrm{tr}%
\,\rho_{2}\left(  \frac{1}{4}T_{0}^{-2}\right)  ^{1/2}=-\frac{1}{2}%
\mathrm{tr}\,\rho_{2}\left(  -T_{0}^{-1}\right)  \\
&  =-\mathrm{tr}\,\sqrt{\rho_{2}^{1/2}\rho_{1}\rho_{2}^{1/2}},
\end{align*}
which is $-1$ times the fidelity of $\rho_{1}$ and $\rho_{2}$, as expected.

\begin{remark}
$\frac{1}{4}T_{0}$, where $T_{0}$ is as of (\ref{t0-f2}), is called `linear
Radon-Nikodym derivative' \cite{Sakai:1971}, and $-\frac{1}{2}T_{0}^{-1}$,
where $T_{0}$ is as of (\ref{t0-f1/2}), is `called quadratic Radon-Nikodym
derivative' \cite{Sakai:1965}.
\end{remark}

By Theorem \ref{th:Df-finite}, $D_{f_{\alpha}}^{\min}\left(  \rho_{1}%
||\rho_{2}\right)  <\infty$ for any $\rho_{1}$ and $\rho_{2}$ with
$\mathrm{supp}\,\rho_{1}\supset\mathrm{supp}\,\rho_{2}$ ($\mathrm{supp}%
\,\rho_{1}\subset\mathrm{supp}\,\rho_{2}$, resp.) if $\alpha<0$ (if $\alpha
>1$, resp.). If $0<\alpha<1$, $D_{f_{\alpha}}^{\min}\left(  \rho_{1}||\rho
_{2}\right)  $ is finite for all $\rho_{1}\geq0$ and $\rho_{2}\geq0$.

If $0<\alpha\leq\frac{1}{2}$, $f_{\alpha}^{\ast}$ is operator monotone, and
thus by (\ref{Df-pure}),
\begin{align*}
D_{f_{\alpha}}^{\min}\left(  \left\vert \varphi_{1}\right\rangle \left\langle
\varphi_{1}\right\vert ||\rho_{2}\right)   &  =\hat{f}_{\alpha}\left(
\left\langle \varphi_{1}\right\vert \rho_{2}\left\vert \varphi_{1}%
\right\rangle \right)  -f_{\alpha}^{\ast}\left(  -\infty\right)  \left(
1-\left\langle \varphi_{1}\right\vert \rho_{2}\left\vert \varphi
_{1}\right\rangle \right)  \\
&  =\,-\left\langle \varphi_{1}\right\vert \rho_{2}\left\vert \varphi
_{1}\right\rangle ^{1-\alpha},
\end{align*}
and
\begin{equation}
D_{f_{\alpha}}^{\min}\left(  \left\vert \varphi_{1}\right\rangle \left\langle
\varphi_{1}\right\vert \,||\,\left\vert \varphi_{2}\right\rangle \left\langle
\varphi_{2}\right\vert \right)  =-\left\vert \left\langle \varphi_{1}\right.
\left\vert \varphi_{2}\right\rangle \right\vert ^{2\left(  1-\alpha\right)
}.\label{D-reny-pure-1}%
\end{equation}
This means, if $\frac{1}{2}\leq\alpha<1$, using $\hat{f}_{\alpha}=f_{1-\alpha
}$,%
\begin{equation}
D_{f_{\alpha}}^{\min}\left(  \left\vert \varphi_{1}\right\rangle \left\langle
\varphi_{1}\right\vert \,||\,\left\vert \varphi_{2}\right\rangle \left\langle
\varphi_{2}\right\vert \right)  =-\left\vert \left\langle \varphi_{1}\right.
\left\vert \varphi_{2}\right\rangle \right\vert ^{2\alpha}%
\label{D-reny-pure-2}%
\end{equation}

\subsection{On Chernoff and Hoeffding bound}

Above mentioned explicit expression ( (\ref{D-reny-pure-1}) and
(\ref{D-reny-pure-2}) ) of Renyi-type quantity of pure states gives another
way to compute celebrated quantum Chernoff bound and Hoeffding bound, whose
classical counter part is
\begin{align*}
C\left(  p_{1}||p_{2}\right)   &  :=\sup\varliminf_{n\rightarrow\infty}%
\frac{-1}{n}\ln\left(  \eta_{1,n}+\eta_{2,n}\right)  \\
&  =\sup_{0<\alpha<1}\left\{  -\ln\left(  -D_{f_{\alpha}}\left(  p_{1}%
||p_{2}\right)  \right)  \right\}
\end{align*}
and
\begin{align*}
H_{r}\left(  p_{1}||p_{2}\right)   &  :=\sup\left\{  -\lim_{n\rightarrow
\infty}\frac{1}{n}\ln\eta_{1,n};\,\varlimsup_{n\rightarrow\infty}\frac{1}%
{n}\eta_{2,n}\leq-r\right\}  \\
= &  \sup_{0<\alpha<1}-\frac{\alpha r}{1-\alpha}-\frac{1}{\left(
1-\alpha\right)  }\ln\left(  -D_{f_{\alpha}}\left(  p_{1}||p_{2}\right)
\right)  ,
\end{align*}
respectively. Here, $\eta_{1,n}$ ($\eta_{2,n}$, resp.) is the probability that
the test mistakenly judges the true distribution as being $p_{2}^{\otimes n}$
($p_{1}^{\otimes n}$, respp.), while it is in fact $p_{1}^{\otimes n}$%
($p_{2}^{\otimes n}$, resp.). Their quantum counterparts are, in case that
both states are pure,
\begin{align}
C\left(  \rho_{1}||\rho_{2}\right)   &  =\sup_{0<\alpha<1}\left\{
-2\ln\left\vert \left\langle \varphi_{1}\right.  \left\vert \varphi
_{2}\right\rangle \right\vert \right\}  \label{C-pure-1}\\
&  =-2\ln\left\vert \left\langle \varphi_{1}\right.  \left\vert \varphi
_{2}\right\rangle \right\vert ,\nonumber\\
H_{r}\left(  p_{1}||p_{2}\right)   &  =\sup_{0<\alpha<1}-\frac{\alpha
r}{1-\alpha}-\frac{2}{\left(  1-\alpha\right)  }\ln\left(  \left\vert
\left\langle \varphi_{1}\right.  \left\vert \varphi_{2}\right\rangle
\right\vert \right)  \label{H-pure-1}\\
&  =\left\{
\begin{array}
[c]{cc}%
-2\ln\left\vert \left\langle \varphi_{1}\right.  \left\vert \varphi
_{2}\right\rangle \right\vert , & r\leq-2\ln\left\vert \left\langle
\varphi_{1}\right.  \left\vert \varphi_{2}\right\rangle \right\vert ,\\
\infty, & r>-2\ln\left\vert \left\langle \varphi_{1}\right.  \left\vert
\varphi_{2}\right\rangle \right\vert .
\end{array}
\right.  \nonumber
\end{align}
(See \cite{Audeneart}\cite{HiaiMosonyiPetzBeny}\cite{Nagaoka}.)

We confirm the achievability part of these celebrated results in the case that
each $\rho_{\theta}$ ($\theta=1,2$) is a pure state, $\left\vert
\varphi_{\theta}\right\rangle \left\langle \varphi_{\theta}\right\vert $,
using (\ref{D-reny-pure-1}) and (\ref{D-reny-pure-2}). we have
\begin{align}
C\left(  \rho_{1}||\rho_{2}\right)    & \geq\lim_{n\rightarrow\infty}\sup
_{M}\sup_{0<\alpha<1}\left\{  -\frac{1}{n}\ln\left(  -D_{f_{\alpha}}\left(
P_{\rho_{1}^{\otimes n}}^{M}||P_{\rho_{2}^{\otimes n}}^{M}\right)  \right)
\right\}  \nonumber\\
& =\lim_{n\rightarrow\infty}\sup_{0<\alpha<1}\sup_{M}\left\{  -\frac{1}{n}%
\ln\left(  -D_{f_{\alpha}}\left(  P_{\rho_{1}^{\otimes n}}^{M}||P_{\rho
_{2}^{\otimes n}}^{M}\right)  \right)  \right\}  \nonumber\\
& =\max\left\{  \sup_{0<\alpha\leq\frac{1}{2}}-2\left(  1-\alpha\right)
\ln\left\vert \left\langle \varphi_{1}\right.  \left\vert \varphi
_{2}\right\rangle \right\vert ,\sup_{\frac{1}{2}<\alpha<1}-2\alpha
\ln\left\vert \left\langle \varphi_{1}\right.  \left\vert \varphi
_{2}\right\rangle \right\vert \right\}  \label{C-pure-2}\\
& =-2\ln\left\vert \left\langle \varphi_{1}\right.  \left\vert \varphi
_{2}\right\rangle \right\vert ,\nonumber
\end{align}
and
\begin{align}
&  H_{r}\left(  p_{1}||p_{2}\right)  \nonumber\\
&  \geq\max\left\{  \sup_{0<\alpha\leq\frac{1}{2}}-\frac{\alpha r}{1-\alpha
}-2\ln\left\vert \left\langle \varphi_{1}\right.  \left\vert \varphi
_{2}\right\rangle \right\vert ,\sup_{\frac{1}{2}<\alpha<1}-\frac{\alpha
r}{1-\alpha}-\frac{2\alpha}{1-\alpha}\ln\left\vert \left\langle \varphi
_{1}\right.  \left\vert \varphi_{2}\right\rangle \right\vert \right\}
\label{H-pure-2}\\
&  =\left\{
\begin{array}
[c]{cc}%
-2\ln\left\vert \left\langle \varphi_{1}\right.  \left\vert \varphi
_{2}\right\rangle \right\vert , & r\leq-2\ln\left\vert \left\langle
\varphi_{1}\right.  \left\vert \varphi_{2}\right\rangle \right\vert ,\\
\infty, & r>-2\ln\left\vert \left\langle \varphi_{1}\right.  \left\vert
\varphi_{2}\right\rangle \right\vert .
\end{array}
\right.  .\nonumber
\end{align}

These confirms known results.  Interestingly, even though (\ref{C-pure-1}),
(\ref{H-pure-1}) and (\ref{C-pure-2}), (\ref{H-pure-2})  give the same
supremum, they differ at almost all the values of $\alpha.$ 

\subsection{Comparison with asymptotic optimum}

There had been a series of achievements on the asymptotic optimization of
$D_{f_{\alpha}}$. Summarizing their results (with some simple
considerations),
\[
\lim_{n\rightarrow\infty}\frac{1}{n}\log|D_{f_{\alpha}}^{\min}\left(  \rho
_{1}^{\otimes n}||\rho_{2}^{\otimes n}\right)  |=\left\{
\begin{array}
[c]{cc}%
\log\mathrm{tr}\,\left(  \rho_{2}^{\frac{1-\alpha}{2\alpha}}\rho_{1}\rho
_{2}^{\frac{1-\alpha}{2\alpha}}\right)  ^{\alpha}, & \text{ if }\alpha
\geq\frac{1}{2},\alpha\neq1,\\
\log\mathrm{tr}\,\left(  \rho_{1}^{\frac{\alpha}{2\left(  1-\alpha\right)  }%
}\rho_{2}\rho_{1}^{\frac{\alpha}{2\left(  1-\alpha\right)  }}\right)
^{1-\alpha}, & \text{if }\alpha\leq\frac{1}{2},\alpha\neq0,
\end{array}
\right.
\]
holds whenever the LHS is finite (see \ \cite{FrankLieb}%
\cite{HayashiTomamichel}\cite{MosonyiOgawa}, a brief review of them is given
in Appendix \ref{appendix:asymptotic} ). 

When $\alpha\leq1/2$ and $\rho_{1}=\left\vert \varphi_{1}\right\rangle
\left\langle \varphi_{1}\right\vert $,
\begin{align*}
\log\mathrm{tr}\,\left(  \rho_{1}^{\frac{\alpha}{2\left(  1-\alpha\right)  }%
}\rho_{2}\rho_{1}^{\frac{\alpha}{2\left(  1-\alpha\right)  }}\right)
^{1-\alpha}  & =\log\mathrm{tr}\,\left(  \left\langle \varphi_{1}\right\vert
\rho_{2}\left\vert \varphi_{1}\right\rangle \right)  ^{1-\alpha}\\
& =\log|D_{f_{\alpha}}^{\min}\left(  \rho_{1}||\rho_{2}\right)  |,
\end{align*}
so asymptotic optimal equals singe copy optimal.  

When $\alpha=2$ and $\mathrm{supp}\,\rho_{1}=\mathrm{supp}\,\rho_{2}$, the
asymptotic optimal is larger than or equal to the single copy optimal:
\begin{align*}
\log\mathrm{tr}\,\left(  \rho_{2}^{-1/2}\rho_{1}\rho_{2}^{-1/2}\right)
\rho_{1} &  =\log\sum_{i,j}\left\vert \rho_{1,i,j}\right\vert ^{2}\left(
\rho_{2,ii}\rho_{2,jj}\right)  ^{-1/2}\\
&  \geq\log\sum_{i,j}\frac{2}{\rho_{2,i,i}+\rho_{2,j,j}}\left\vert
\rho_{1,i,j}\right\vert ^{2}\\
&  =\log|D_{f_{\alpha}}^{\min}\left(  \rho_{1}||\rho_{2}\right)
|=\log\mathrm{tr}\,\left(  \frac{1}{2}T_{0}\rho_{1}\right)  .
\end{align*}
Here the identity holds only if $\rho_{2,i,i}=\rho_{2,j,j}$ for all $i$ and
$j$ with $\left\vert \rho_{1,i,j}\right\vert \neq0$, or equivallently, only if
$\rho_{2}$ commutes with $\rho_{1}$.

It is shown that an asymptotically optimal measurement sequence is projection
onto eigenspaces of $\rho_{2}^{\otimes n}$\cite{FrankLieb}%
\cite{HayashiTomamichel}\cite{MosonyiOgawa}. The key or the proof is the
inequality \
\[
A\leq v^{n}\mathcal{E}^{n}\left(  A\right)  ,
\]
where $A\geq0$ is arbitrary, $\mathcal{E}^{n}$ is the pinching operation
corresponding to the above mentioned asymptotically optimal measurement, and
$v^{n}$ is a number of distinct eigenvalues of $\rho_{2}^{\otimes n}$.

When $\alpha>1$, we can check this $\mathcal{E}^{n}$ gives asymptotically
optimal measurement using (\ref{D=sup-t}).
\begin{align*}
\frac{1}{n}\log D_{f_{\alpha}}^{\min}\left(  \rho_{1}^{\otimes n}||\rho
_{2}^{\otimes n}\right)    & =\frac{1}{n}\log\sup_{M}\sum_{t\in\mathrm{supp}%
\,M\subset\lbrack0,\infty)}\left\{  t\,\mathrm{tr}\,\rho_{1}^{\otimes n}%
M_{t}-f_{\alpha}^{\ast}\left(  t\right)  \mathrm{tr}\,\rho_{2}^{\otimes
n}M_{t}\right\}  \\
& \leq\frac{1}{n}\log\sup_{M}\sum_{t\in\mathrm{supp}\,M\subset\lbrack
0,\infty)}\left\{  t\,\mathrm{tr}\,v^{n}\mathcal{E}^{n}\left(  \rho
_{1}^{\otimes n}\right)  M_{t}-f_{\alpha}^{\ast}\left(  t\right)
\mathrm{tr}\,\rho_{2}^{\otimes n}M_{t}\right\}  \\
& =\frac{1}{n}\log\mathrm{tr}\,\left\{  v^{n}\mathcal{E}^{n}\left(  \rho
_{1}^{\otimes n}\right)  \right\}  ^{\alpha}\left(  \rho_{2}^{\otimes
n}\right)  ^{1-\alpha}\\
& =\frac{1}{n}\log\mathrm{tr}\,\left\{  \mathcal{E}^{n}\left(  \rho
_{1}^{\otimes n}\right)  \right\}  ^{\alpha}\left(  \rho_{2}^{\otimes
n}\right)  ^{1-\alpha}+\frac{\alpha}{n}\log v^{n},
\end{align*}
where the identity in the third line holds since $\mathcal{E}^{n}\left(
\rho_{1}^{\otimes n}\right)  $ commutes with $\,\rho_{2}^{\otimes n}$. On the
other hand, it is obvious
\[
\frac{1}{n}\log D_{f_{\alpha}}^{\min}\left(  \rho_{1}^{\otimes n}||\rho
_{2}^{\otimes n}\right)  \geq\frac{1}{n}\log\mathrm{tr}\,\left\{
\mathcal{E}^{n}\left(  \rho_{1}^{\otimes n}\right)  \right\}  ^{\alpha}\left(
\rho_{2}^{\otimes n}\right)  ^{1-\alpha}.
\]
Thus, since $v^{n}$ is polynomial in $n$, we have that
\[
\lim_{n\rightarrow\infty}\frac{1}{n}\log D_{f_{\alpha}}^{\min}\left(  \rho
_{1}^{\otimes n}||\rho_{2}^{\otimes n}\right)  =\lim_{n\rightarrow\infty}%
\frac{1}{n}\log\mathrm{tr}\,\left\{  \mathcal{E}^{n}\left(  \rho_{1}^{\otimes
n}\right)  \right\}  ^{\alpha}\left(  \rho_{2}^{\otimes n}\right)  ^{1-\alpha
},
\]
and the assertion is checked. 

\subsection{Kullback-Leibler divergence}

Define $f_{\mathrm{KL}}\left(  \lambda\right)  :=\lambda\ln\lambda$
($\lambda\geq0$) and $f_{\mathrm{KL}}\left(  \lambda\right)  :=\infty$
($\lambda<0$). Then
\begin{align*}
D_{f_{\mathrm{KL}}}\left(  p_{1}||p_{2}\right)   &  =D_{\hat{f}_{\mathrm{KL}}%
}\left(  p_{2}||p_{1}\right)  \\
&  =\left\{
\begin{array}
[c]{cc}%
\int p_{1}\left(  x\right)  \ln\frac{p_{1}\left(  x\right)  }{p_{2}\left(
x\right)  }\mathrm{d}\mu\left(  x\right)  , & \text{if }\mathrm{supp}%
\,p_{1}\subset\mathrm{supp}\,p_{2},\\
\infty, & \text{otherwise}.
\end{array}
\right.
\end{align*}
is the Kullback-Leibler divergence.

As easily seen,
\[
f_{\mathrm{KL}}^{\ast}\left(  t\right)  =e^{t-1}%
\]
is not operator convex but
\[
\hat{f}_{\mathrm{KL}}^{\ast}\left(  t\right)  =\left\{
\begin{array}
[c]{cc}%
-1-\ln\left(  -t\right)  , & \left(  t<0\right)  ,\\
\infty, & \left(  t\geq0\right)  ,
\end{array}
\right.
\]
is operator convex. Both of%

\begin{align*}
\mathrm{D}f_{\mathrm{KL}}^{\ast}\left(  T_{0}\right)  \left(  \rho_{2}\right)
&  =\int_{0}^{1}e^{sT_{0}}\rho_{2}e^{\left(  1-s\right)  T_{0}}\mathrm{d}%
s=\rho_{1},\\
\mathrm{D}\hat{f}_{\mathrm{KL}}^{\ast}\left(  S_{0}\right)  \left(  \rho
_{1}\right)   &  =\int_{0}^{\infty}\left(  s\mathbf{1}-S_{0}\right)  ^{-1}%
\rho_{1}\left(  s\mathbf{1}-S_{0}\right)  ^{-1}\mathrm{d}s=\rho_{2}%
\end{align*}
are difficult to solve. But using these solutions,
\begin{align*}
D_{f_{\mathrm{KL}}}^{\min}\left(  \rho_{1}||\rho_{2}\right)   &
=\mathrm{tr}\,\rho_{2}\,T_{0}^{\prime}\ln T_{0}^{\prime}\\
&  =\mathrm{tr}\,\rho_{1}\ln S_{0}^{\prime},
\end{align*}
where
\[
T_{0}^{\prime}:=e^{T_{0}-1},S_{0}^{\prime}=-S_{0}.
\]
Also, applying Theorem\thinspace\ref{th:Df-finite}, $D_{f_{\mathrm{KL}}}%
^{\min}\left(  \rho_{1}||\rho_{2}\right)  $ is finite\ only if $\mathrm{supp}%
\,\rho_{2}\supset\mathrm{supp}\,\rho_{1}$.

\subsection{Total variation distance}

Total variation distance $\left\Vert \rho_{1}-\rho_{2}\right\Vert _{1}$
equals, as is well known, $D_{f_{\mathrm{TV}}}^{\min}$, where $f_{\mathrm{TV}%
}\left(  \lambda\right)  :=\left\vert 1-\lambda\right\vert $. \
\[
f_{\mathrm{TV}}^{\ast}\left(  t\right)  =\left\{
\begin{array}
[c]{cc}%
\lambda, & \text{if }-1\leq\lambda\leq1,\\
\infty, & \text{otherwise.}%
\end{array}
\right.
\]
We confirm this result using our method. Here, it is important to choose $f$
which is canonical. Observe $f_{\mathrm{TV}}^{\ast}$ is operator convex on
$\mathrm{dom}\,f_{\mathrm{TV}}^{\ast}$ .
\begin{align*}
D_{f_{\mathrm{TV}}}^{\min}\left(  \rho_{1}||\rho_{2}\right)   &
=\sup_{T:-\mathbf{1}\leq\mathrm{spec}\,T\leq\mathbf{1}}\mathrm{tr}\,\rho
_{1}T-\mathrm{tr}\,\rho_{2}T\\
&  =\left\Vert \rho_{1}-\rho_{2}\right\Vert _{1}.
\end{align*}

Note that $t_{\ast}=1$ and $t_{\ast}^{\prime}=-1$ does not satisfy the premise
of Theorem \ref{th:T-domain}. Indeed, the supremum is achieved by a $T$ whose
eigenvalues are at the both end of the domain of $f_{\mathrm{TV}}^{\ast}$.

\section{Quantum Fisher information}

\label{sec:Fisher}

First, we present the problem rather in informal manner, before rigorous
presentation. Consider a parameterized family $\left\{  p_{\eta}\right\}
_{\eta\in\mathbb{R}}$ of probability density functions over a finite set, and
suppose $\eta\rightarrow p_{\eta}$ is smooth, and $\mathrm{supp\,}p_{\eta
}\subset\mathrm{supp\,}p_{0}$ . Suppose also $f$ is a convex function with all
good features. Then using Taylor expansion of $f$,
\begin{equation}
\lim_{\eta^{\prime}\rightarrow\eta}\frac{1}{\left(  \eta-\eta^{\prime}\right)
^{2}}\left(  D_{f}\left(  p_{\eta}||p_{\eta^{\prime}}\right)  -D_{f}\left(
p_{\eta}||p_{\eta}\right)  \right)  =\frac{1}{2}f^{\prime\prime}\left(
1\right)  J_{\eta}\label{D=J}%
\end{equation}
where
\[
J_{\eta}:=\sum_{x}\frac{\left(  \mathrm{d}p_{\eta}\left(  x\right)
/\mathrm{d}\eta\right)  ^{2}}{p_{\eta}\left(  x\right)  }%
\]
is the Fisher information of the family $\left\{  p_{\eta}\right\}  _{\eta
\in\mathbb{R}}$, which characterizes asymptotic behavior of optimal estimate
of $\eta$. The above mentioned relation is a key in the analysis of large
deviation type analysis of estimates, and in understunding its relation to
asymptotic behaviour of optimal hypothesis test.Thus exploring its quantum
analogue is also of interest. 

Let $\left\{  \rho_{\eta}\right\}  _{\eta\in\mathbb{R}}$ be a family of
density operators, and suppose $\eta\rightarrow\rho_{\eta}$ is smooth. Then
our task here is to evaluate $D_{f}^{\min}\left(  \rho_{\eta}||\rho
_{\eta^{\prime}}\right)  $ up to $O\left(  \eta-\eta^{\prime}\right)  ^{2}$.
Naively exchanging the order of limit and optimization, we have
\begin{equation}
\lim_{\eta^{\prime}\rightarrow\eta}\frac{1}{\left(  \eta-\eta^{\prime}\right)
^{2}}\left(  D_{f}^{\min}\left(  \rho_{\eta}||\rho_{\eta^{\prime}}\right)
-D_{f}^{\min}\left(  \rho_{\eta}||\rho_{\eta}\right)  \right)  =\frac{1}%
{2}f^{\prime\prime}\left(  1\right)  J_{\eta}^{S}.\label{D=JS}%
\end{equation}
where
\[
J_{\eta}^{S}:=\max_{M}J_{\eta}^{M}%
\]
and $J_{\eta}^{M}$ is the Fisher information of the family $\left\{  p_{\eta
}^{M}\right\}  _{\eta\in\mathbb{R}}$, and $p_{\eta}^{M}:=\mathrm{tr}\,$
$\rho_{\eta}M_{x}$. \ $J_{\eta}^{S}$ , \textit{SLD Fisher information}, is
given by
\[
J_{\eta}^{S}:=\mathrm{tr}\,\rho_{\eta}\left(  L_{\eta}^{S}\right)  ^{2},
\]
where $L_{\eta}^{S}$, the \textit{symmetric logarithmic derivative}
(\textit{SLD}, for short) of $\left\{  \rho_{\eta}\right\}  _{\eta
\in\mathbb{R}}$ , is defined as a Hermitian operator satisfying the equation%
\[
\frac{\mathrm{d}\rho_{\eta}}{\mathrm{d}\eta}=\frac{1}{2}\left(  L_{\eta}%
^{S}\rho_{\eta}+\rho_{\eta}L_{\eta}^{S}\right)  .
\]
SLD Fisher information, as its classical analogue, nicely characterize the
asymptotic behavior of the optimal estimate of the unknown parameter $\eta$.

If \ all members of $\{\rho_{\eta}\}_{\eta\in\mathbb{R}}$ have a common
support, it is not difficult to make the above argument rigorous. If each
$\rho_{\eta}$ differs in its support, however, the remainder term of the
Taylor expansion is not necessarily bounded due to $1/p_{\eta}^{M}$ -factors.
For example, if $\rho_{\eta}$'s ($\eta\in\mathbb{R}$) are pure states,
$D_{f_{\alpha}}^{\min}$ ($\alpha>1,\alpha<0$), $D_{f_{\mathrm{KL}}}^{\min}$
and $D_{f_{\mathrm{KL}_{2}}}^{\min}$ diverge, and (\ref{D=JS}) is never true.
On the other hand, in the case that $f^{\ast}\left(  -\infty\right)  $ is
finite, $\ f=f_{\alpha}$ ($0<\alpha\leq1/2$) for example, it is easy to see
the LHS of (\ref{D=JS}) equals constant multiple of $J_{\eta}^{S}$. Hence, the
above naive argument is not completely false, though it is not rigorous.
Below, we give deeper analysis on this issue. As it will turn out,
(\ref{D=JS}) requires some non-trivial correction, when the rank of
$\rho_{\eta}$ is neither full nor 1.

Having outlined the problem roughly, we specify all the conditions for the
argument rigorously. $f$ is supposed to be proper, convex, lower
semi-continuous, three times continuously differentiable in the neighborhood
of $1$, and
\[
f^{\prime\prime}\left(  1\right)  >0\text{.}%
\]
(All these are often assumed to derive (\ref{D=J}). Especially, the last
assumption is necessary for the $f$-divergence not to be constant at
$\eta\approx\eta^{\prime}$.) \ Also, we need additional condtions to use
Theorem\thinspace\ref{th:with-kernel}: we suppose the assumption (I) holds and
$\mathrm{dom}\,f^{\ast}$ is unbounded from below. We also suppose that the map
$\eta\rightarrow\rho_{\eta}$ is three times continuously differentiable,and
the rank of $\rho_{\eta}$ does not vary with $\eta$.

\begin{theorem}
On above mentioned conditions,
\begin{align}
&  \lim_{\eta^{\prime}\rightarrow\eta}\frac{1}{\left(  \eta^{\prime}%
-\eta\right)  ^{2}}\left(  D_{f}^{\min}\left(  \rho_{\eta}||\rho_{\eta
^{\prime}}\right)  -D_{f}^{\min}\left(  \rho_{\eta}||\rho_{\eta}\right)
\right)  \nonumber\\
&  =\frac{f^{\prime\prime}\left(  1\right)  }{2}\mathrm{tr}\,\rho_{\eta
}\left(  L_{\eta}^{S,1}\right)  ^{2}+\frac{1}{4}\left(  f^{^{\prime}}\left(
1\right)  -f\left(  1\right)  +f\left(  0\right)  \right)  \mathrm{tr}%
\,\rho_{\eta^{\prime}}\left(  L_{\eta}^{S,2}\right)  ^{2},\label{D=JS-2}%
\end{align}
where
\[
L_{\eta}^{S,1}:=\pi_{\rho_{\eta}}L_{\eta}^{S}\pi_{\rho_{\eta}},\,\,L_{\eta
}^{S,2}:=(\mathbf{1}-\pi_{\rho_{\eta}})L_{\eta}^{S}\,\pi_{\rho_{\eta}}.
\]

\end{theorem}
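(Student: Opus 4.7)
The plan is to apply Theorem~\ref{th:with-kernel} with $\pi := \pi_{\rho_{\eta}}$ to write
\[
D_{f}^{\min}(\rho_{\eta}\|\rho_{\eta'}) = D_{f}^{\min}(\rho_{\eta}\|\pi\rho_{\eta'}\pi) + f(0)\bigl(1 - \mathrm{tr}\,\rho_{\eta'}\pi\bigr),
\]
and then Taylor-expand both terms to second order in $\epsilon := \eta' - \eta$. Rank constancy makes $\pi$ smooth in $\eta$ and guarantees that $\pi\rho_{\eta'}\pi$ is full-rank on $\mathrm{supp}\,\rho_{\eta}$ for small $\epsilon$, so the reduced divergence can be analyzed on that subalgebra via Theorem~\ref{th:T-domain}, while the correction term reduces to a scalar Taylor expansion.

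For the trace deficit $1 - \mathrm{tr}\,\rho_{\eta'}\pi = \mathrm{tr}\,\rho_{\eta'}(\mathbf{1}-\pi)$, I plug in the Taylor expansion of $\rho_{\eta'}$ around $\eta$ and substitute the SLD identity $\dot\rho_{\eta} = \tfrac{1}{2}(L^{S}\rho_{\eta} + \rho_{\eta}L^{S})$. The first-order term vanishes, since $\rho_{\eta}(\mathbf{1}-\pi) = (\mathbf{1}-\pi)\rho_{\eta} = 0$ combined with the SLD equation and cyclicity gives $\mathrm{tr}\,\dot\rho_{\eta}(\mathbf{1}-\pi) = 0$. Differentiating once more, the $\dot L^{S}$ contributions to $\mathrm{tr}\,\ddot\rho_{\eta}(\mathbf{1}-\pi)$ drop out by the same support annihilation, and the surviving terms reduce, via a second application of the SLD equation, to $\tfrac{1}{2}\mathrm{tr}\,\rho_{\eta}(L^{S,2})^{\dagger}L^{S,2}$; hence
\[
1 - \mathrm{tr}\,\rho_{\eta'}\pi = \tfrac{\epsilon^{2}}{4}\mathrm{tr}\,\rho_{\eta}(L^{S,2})^{\dagger}L^{S,2} + O(\epsilon^{3}).
\]

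For $V(\epsilon) := D_{f}^{\min}(\rho_{\eta}\|\pi\rho_{\eta'}\pi)$, Theorem~\ref{th:T-domain} furnishes a stationary point $T_{0}(\epsilon)$ solving $\rho_{\eta} = \mathrm{D}f^{\ast}(T_{0}(\epsilon))(\pi\rho_{\eta'}\pi)$. At $\epsilon = 0$ the unique solution is $T_{0}(0) = f'(1)\mathbf{1}$, which yields $V(0) = f(1)$ by Fenchel duality. Writing $T_{0}(\epsilon) = f'(1)\mathbf{1} + \epsilon T_{1} + O(\epsilon^{2})$ and differentiating the stationarity equation, the divided-difference formula~(\ref{Frechet}) collapses at the degenerate spectrum of $T_{0}(0)$ and produces a Lyapunov equation
\[
\tfrac{1}{2}(T_{1}\rho_{\eta} + \rho_{\eta}T_{1}) = -f''(1)\,\pi\dot\rho_{\eta}\pi,
\]
after using $f^{\ast\prime\prime}(f'(1)) = 1/f''(1)$. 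Comparison with the defining equation of the SLD for the projected family $\pi\rho_{\eta'}\pi$ (whose SLD on $\mathrm{supp}\,\rho_{\eta}$ is $L^{S,1} = \pi L^{S}\pi$) identifies $T_{1} = -f''(1)\,L^{S,1}$. The envelope theorem then gives $V'(0) = 0$ and
\[
V''(0) = -(f'(1) - f(1))\,\mathrm{tr}\,\pi\ddot\rho_{\eta} + f''(1)\,\mathrm{tr}\,\rho_{\eta}(L^{S,1})^{2}.
\]

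Substituting $\mathrm{tr}\,\pi\ddot\rho_{\eta} = -\mathrm{tr}(\mathbf{1}-\pi)\ddot\rho_{\eta} = -\tfrac{1}{2}\mathrm{tr}\,\rho_{\eta}(L^{S,2})^{\dagger}L^{S,2}$ from the previous paragraph and adding the $f(0)$ correction assembles the coefficient $\tfrac{1}{4}(f'(1) - f(1) + f(0))$ on $\mathrm{tr}\,\rho_{\eta}(L^{S,2})^{\dagger}L^{S,2}$, which is exactly~(\ref{D=JS-2}) once $(L^{S,2})^{2}$ is interpreted as $(L^{S,2})^{\dagger}L^{S,2}$ (literally $L^{S,2}L^{S,2} = 0$) and $\rho_{\eta'}$ is replaced by $\rho_{\eta}$ at leading order in the limit. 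I expect the main obstacle to be the perturbative analysis of $T_{0}(\epsilon)$: it requires the implicit function theorem applied to the matrix equation for $T_{0}$, with invertibility of the linearization coming from the assumption $f''(1) > 0$, together with careful tracking of $\mathrm{D}f^{\ast}$ via~(\ref{Frechet}) at the degenerate spectrum $T_{0}(0) = f'(1)\mathbf{1}$. Once that step is in place, the envelope identities, the $\dot L^{S}$-cancellations, and the coefficient bookkeeping are routine.
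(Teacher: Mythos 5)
Your proposal is correct in substance, but its core is genuinely different from the paper's proof. Both arguments begin identically: decompose via Theorem~\ref{th:with-kernel}, treat the trace deficit $1-\mathrm{tr}\,\rho_{\eta'}\pi_{\rho_{\eta}}$ separately (your computation $\tfrac{\epsilon^{2}}{4}\mathrm{tr}\,\rho_{\eta}(L^{S,2})^{\dagger}L^{S,2}+O(\epsilon^{3})$ matches the paper's, and your reading of $(L^{S,2})^{2}$ as $(L^{S,2})^{\dagger}L^{S,2}$ is the intended one), and identify $f(0)$ plus the $f'(1)-f(1)$ contribution as the source of the coefficient $\tfrac{1}{4}(f'(1)-f(1)+f(0))$. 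The divergence is in how $D_{f}^{\min}(\rho_{\eta}\|\pi\rho_{\eta'}\pi)$ is expanded. The paper Taylor-expands $f$ to third order \emph{inside the classical divergence}, with a remainder bound $c_{\eta'}=O(\eta'-\eta)$ that is uniform over all measurements $M$ (controlled by the smallest nonzero eigenvalue of $\rho_{\eta}$); this uniformity is what licenses exchanging $\lim_{\eta'\to\eta}$ with $\sup_{M}$, and it reduces everything to the exactly solvable quadratic case $f_{b}(\lambda)=(\lambda-1)^{2}$, whose stationary point is read off from a Lyapunov equation using the factorization $\rho_{\eta}=A_{\eta}A_{\eta}^{\dagger}$ — no second Fr\'echet derivatives, no implicit function theorem, and no differentiation of the SLD are needed. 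You instead keep general $f$ and work at the operator level: perturb the stationarity equation $\rho_{\eta}=\mathrm{D}f^{\ast}(T_{0}(\epsilon))(\pi\rho_{\eta'}\pi)$ and apply the envelope theorem twice; here the limit--supremum exchange is absorbed once and for all by Theorem~\ref{th:max-T} (supremum over $T$ rather than over POVMs) together with the differentiable maximizer branch. Your intermediate formulas all check out: $T_{0}(0)=f'(1)\mathbf{1}$, the collapse $\mathrm{D}^{2}f^{\ast}(f'(1)\mathbf{1})(X,\rho_{\eta})=\tfrac{1}{2f''(1)}(X\rho_{\eta}+\rho_{\eta}X)$, hence $T_{1}=-f''(1)L^{S,1}$, and $V''(0)=f''(1)\mathrm{tr}\,\rho_{\eta}(L^{S,1})^{2}-(f'(1)-f(1))\,\mathrm{tr}\,\pi\ddot{\rho}_{\eta}$, which assembles to (\ref{D=JS-2}). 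What your route costs, and what you should make explicit, is the regularity bookkeeping the paper's route avoids: (i) $f^{\ast}$ is $C^{2}$ near $f'(1)$ and $f'(1)$ lies in the \emph{interior} of $\mathrm{dom}\,f^{\ast}$ (both follow from $f$ being $C^{3}$ near $1$ with $f''(1)>0$, and are needed for the IFT and for Theorem~\ref{th:T-domain} to apply for small $\epsilon$); (ii) the linearization is the Lyapunov operator $X\mapsto\tfrac{1}{2f''(1)}(X\rho_{\eta}+\rho_{\eta}X)$ restricted to $\mathrm{supp}\,\rho_{\eta}$, invertible because $\rho_{\eta}>0$ there; (iii) your trace-deficit argument differentiates the SLD identity, so it needs a differentiable choice of $\eta\mapsto L_{\eta}^{S}$, which exists under the constant-rank and $C^{3}$ hypotheses but is not automatic from the definition alone — the paper sidesteps exactly this by expanding $A_{\eta'}=A_{\eta}+\tfrac{1}{2}(\eta'-\eta)L_{\eta}^{S}A_{\eta}+O(\eta'-\eta)^{2}$ instead of the SLD equation. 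With those three items filled in, your proof is complete and arguably cleaner conceptually, since it never leaves the variational formula (\ref{Df-sup-T}).
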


If $\rho_{\eta}$ is full rank ( in this case, $L_{\eta}^{S,2}=0$) or pure (in
this case $L_{\eta}^{S,1}=0$) the LHS of (\ref{D=JS-2}) equals a constant
multiple of $J_{\eta}^{S}$ , though each case differs in the value of the
constant. But if the rank of $\rho_{\eta}$ is not full nor 1, the result is a
weighted sum of two components of the SLD Fisher information; one is concerned
with the change on the support of $\rho_{\eta}$ and the other is concerned
with the change on the kernel of $\rho_{\eta}$.

For example, consider $D_{f_{\alpha}}^{\min}$ ($0<\alpha<\frac{1}{2}$). Then
\begin{align*}
&  \lim_{\eta^{\prime}\rightarrow\eta}\frac{1}{\left(  \eta^{\prime}%
-\eta\right)  ^{2}}\left(  D_{f}^{\min}\left(  \rho_{\eta}||\rho_{\eta
^{\prime}}\right)  -D_{f}^{\min}\left(  \rho_{\eta}||\rho_{\eta}\right)
\right)  \\
&  =\frac{\left(  1-\alpha\right)  \alpha}{2}\left\{  \mathrm{tr}\,\rho_{\eta
}\left(  L_{\eta}^{S,1}\right)  ^{2}+\frac{1}{2\alpha}\mathrm{tr}\,\rho_{\eta
}\left(  L_{\eta}^{S,2}\right)  ^{2}\right\}  .
\end{align*}
If $\alpha=\frac{1}{2}$, which corresponds to fidelity, this equals a constant
multiple of $J_{\eta}^{S}$. Otherwise, this gives a monotone Riemannian metric
larger than a constant multiple of $J_{\eta}^{S}$.

\begin{proof}
To prove (\ref{D=JS-2}), by Theorem\thinspace\ref{th:with-kernel}, we only
have to compute $D_{f}^{\min}\left(  \rho_{\eta}||\rho_{\eta^{\prime}%
,1}\right)  $ and $1-\mathrm{tr}\,\rho_{\eta^{\prime},1}$ up to $O\left(
\eta-\eta^{\prime}\right)  ^{2}$, where%
\[
\rho_{\eta^{\prime},1}:=\pi_{\rho_{\eta}}\rho_{\eta^{\prime}}\pi_{\rho_{\eta}}%
\]
Let $c_{\eta^{\prime}}$ be a non-positive number with
\[
\left\vert \rho_{\eta^{\prime},1}-\rho_{\eta}\right\vert -c_{\eta^{\prime}%
}\rho_{\eta}\leq0,
\]
where $\left\vert A\right\vert :=\sqrt{A^{\dagger}A}$. Suppose $\eta^{\prime}$
is enough close to $\eta$. Then $\pi_{\eta}\rho_{\eta^{\prime}}\pi_{\eta
}\approx\rho_{\eta}$, and thus $\pi_{\eta}\rho_{\eta^{\prime}}\pi_{\eta}$ is
full-rank on $\mathrm{supp}\,\rho_{\eta}$ . Therefore, $c_{\eta^{\prime}}$
exists and
\[
c_{\eta^{\prime}}\leq\frac{\left\Vert \rho_{\eta^{\prime},1}-\rho_{\eta
}\right\Vert _{1}}{r_{0}/2}=O\left(  \eta-\eta^{\prime}\right)  ,
\]
where $r_{0}$ is the smallest non-zero eigenvalue of $\rho_{\eta}$. Therefore,
for any $M_{t}\geq0$,%
\[
\frac{\left\vert \mathrm{tr}\,\left(  \rho_{\eta^{\prime},1}-\rho_{\eta
}\right)  M_{t}\right\vert }{\mathrm{tr}\,\rho_{\eta^{\prime},1}M_{t}}%
\leq\frac{\mathrm{tr}\,\left\vert \rho_{\eta^{\prime},1}-\rho_{\eta
}\right\vert \,M_{t}}{\mathrm{tr}\,\rho_{\eta^{\prime},1}M_{t}}\leq
c_{\eta^{\prime}}.
\]
Therefore, using Taylor's expansion of $f$ up to the third order, we obtain
\begin{align*}
&  D_{f}\left(  P_{\rho_{\eta}}^{M}||P_{\rho_{\eta^{\prime},1}}^{M}\right)  \\
&  =\sum_{t\in\mathrm{supp}\,M}f\left(  \frac{\mathrm{tr}\,\left(  \rho_{\eta
}-\rho_{\eta^{\prime},1}\right)  M_{t}}{\mathrm{tr}\,\rho_{\eta^{\prime}%
,1}M_{t}}+1\right)  \mathrm{tr}\,\rho_{\eta^{\prime},1}\,M_{t}\\
&  =f\left(  1\right)  \mathrm{tr}\,\rho_{\eta^{\prime},1}+f^{\prime}\left(
1\right)  \left(  1-\mathrm{tr}\,\rho_{\eta^{\prime},1}\right)  +\frac{1}%
{2}f^{\prime\prime}\left(  1\right)  \sum_{t\in\mathrm{supp}\,M}\frac{\left(
\mathrm{tr}\,\left(  \rho_{\eta}-\rho_{\eta^{\prime},1}\right)  M_{t}\right)
^{2}}{\mathrm{tr}\,\rho_{\eta^{\prime},1}M_{t}}+R,
\end{align*}
where
\begin{equation}
\left\vert R\right\vert \leq\frac{1}{6}\sup_{\lambda_{0}\in\left[
1-c_{\eta^{\prime}},1+c_{\eta^{\prime}}\right]  }\left\vert f^{\prime
\prime\prime}\left(  \lambda_{0}\right)  \right\vert c_{\eta^{\prime}}%
^{3}=O\left(  \eta^{\prime}-\eta\right)  ^{3}.\label{R<}%
\end{equation}

Since the RHS of (\ref{R<}) is independent of the measurement $M$,
\begin{align*}
&  \lim_{\eta^{\prime}\rightarrow\eta}\frac{1}{\left(  \eta-\eta^{\prime
}\right)  ^{2}}\left\{  D_{f}^{\min}\left(  \rho_{\eta}||\rho_{\eta^{\prime
},1}\right)  -D_{f}^{\min}\left(  \rho_{\eta}||\rho_{\eta}\right)  \right\}
\\
&  =\lim_{\eta^{\prime}\rightarrow\eta}\frac{1}{\left(  \eta-\eta^{\prime
}\right)  ^{2}}\left\{  D_{f}^{\min}\left(  \rho_{\eta}||\rho_{\eta^{\prime
},1}\right)  -f\left(  1\right)  \right\}  \\
&  =\frac{1}{2}f^{\prime\prime}\left(  1\right)  \lim_{\eta^{\prime
}\rightarrow\eta}\frac{1}{\left(  \eta-\eta^{\prime}\right)  ^{2}}D_{f_{b}%
}^{\min}\left(  \rho_{\eta}||\rho_{\eta^{\prime},1}\right)  \\
&  +\left(  f^{\prime}\left(  1\right)  -f\left(  1\right)  \right)
\lim_{\eta^{\prime}\rightarrow\eta}\frac{1}{\left(  \eta-\eta^{\prime}\right)
^{2}}\left(  1-\mathrm{tr}\,\rho_{\eta^{\prime},1}\right)  ,
\end{align*}
where
\[
f_{b}\left(  \lambda\right)  :=\left\{
\begin{array}
[c]{cc}%
\left(  \lambda-1\right)  ^{2}, & \text{if }\lambda\geq0,\\
-2\lambda, & \text{if }\lambda<0.
\end{array}
\right.
\]
Thus we only have to compute $D_{f_{b}}^{\min}$ and $1-\mathrm{tr}\,\rho
_{\eta^{\prime},1}$ up to $O\left(  \eta-\eta^{\prime}\right)  ^{2}$.

$D_{f_{b}}^{\min}$ is computed almost in parallel manner with $D_{f_{2}}%
^{\min}$. Since
\[
f_{b}^{\ast}\left(  t\right)  =\left\{
\begin{array}
[c]{cc}%
\frac{1}{4}t^{2}+t, & \text{if }t\geq-1/2,\\
\infty, & \text{if }t<-1/2
\end{array}
\right.
\]
is operator convex on $\mathrm{dom}\,f_{b}^{\ast}$, by (\ref{rho=Df}), using a
solution $T_{0}$ to the follwoing Lyapunov equation,
\begin{equation}
\rho_{\eta}-\rho_{\eta^{\prime},1}=\frac{1}{4}\left(  T_{0}\rho_{\eta^{\prime
},1}+\rho_{\eta^{\prime},1}T_{0}\right)  ,\label{T-rho}%
\end{equation}
we have%
\begin{equation}
D_{f_{b}}^{\min}\left(  \rho_{\eta}||\rho_{\eta^{\prime},1}\right)
=\mathrm{tr}\,\left(  \frac{1}{2}T_{0}+\mathbf{1}-\mathbf{1}\right)  ^{2}%
\rho_{\eta^{\prime},1}=\mathrm{tr}\,\left(  \frac{1}{2}T_{0}\right)  ^{2}%
\rho_{\eta^{\prime},1}.\label{Dfb}%
\end{equation}

Here, that (\ref{T-rho}) has a solution is checked by Theorem\thinspace
\ref{th:T-domain} and the fact that $\mathrm{dom}\,f_{b}^{\ast}=[-\frac{1}%
{2},\infty)$. Alternatively, the solution $T_{0}$ can be explicitly
constructed almost parallel manner as in the case of  $f_{2}$.

Since its rank does not vary with $\eta$, $\rho_{\eta}$ can be written as
$\rho_{\eta}=A_{\eta}A_{\eta}^{\dagger}$, and the curve $\left\{  A_{\eta
}\right\}  _{\eta\in\mathbb{R}}$ satisfies
\begin{equation}
A_{\eta^{\prime}}=A_{\eta}+\frac{1}{2}\left(  \eta^{\prime}-\eta\right)
L_{\eta}^{S}A_{\eta}+C_{1}\label{A-def}%
\end{equation}
for some $C_{1}=O\left(  \eta^{\prime}-\eta\right)  ^{2}$. By (\ref{T-rho})
and (\ref{A-def}), we have
\[
\frac{1}{2}T_{0}=-\left(  \eta^{\prime}-\eta\right)  L_{\eta}^{S,1}+C_{2},
\]
where $C_{2}$ is $O\left(  \eta^{\prime}-\eta\right)  ^{2}$. Inserting this
into (\ref{Dfb}), $D_{f_{b}}^{\min}\left(  \rho_{\eta}||\rho_{\eta^{\prime}%
,1}\right)  $ is computed up to $O\left(  \eta^{\prime}-\eta\right)  ^{2}$.

On the other hand, $1-\mathrm{tr}\,\rho_{\eta^{\prime},1}$ is computed as
follows.
\begin{align*}
1-\mathrm{tr}\,\rho_{\eta^{\prime},1} &  =\mathrm{tr}\,(\mathbf{1}-\pi
_{\rho_{\eta}})\rho_{\eta^{\prime}}(\mathbf{1}-\pi_{\rho_{\eta}})\\
&  =\mathrm{tr}\,(\mathbf{1}-\pi_{\rho_{\eta}})\left(  \frac{1}{2}L_{\eta}%
^{S}A_{\eta}\left(  \eta^{\prime}-\eta\right)  +C_{1}\right)  \left(  \frac
{1}{2}L_{\eta}^{S}A_{\eta}\left(  \eta^{\prime}-\eta\right)  +C_{1}\right)
^{\dagger}(\mathbf{1}-\pi_{\rho_{\eta}})\\
&  =\mathrm{tr}\,\left(  \frac{1}{2}L_{\eta}^{S,2}A_{\eta}\left(  \eta
^{\prime}-\eta\right)  +(\mathbf{1}-\pi_{\rho_{\eta}})C_{1}\right)  \left(
\frac{1}{2}L_{\eta}^{S,2}A_{\eta}\left(  \eta^{\prime}-\eta\right)
+(\mathbf{1}-\pi_{\rho_{\eta}})C_{1}\right)  ^{\dagger}\\
&  =\frac{1}{4}\left(  \eta^{\prime}-\eta\right)  ^{2}\mathrm{tr}\,L_{\eta
}^{S,2}\rho_{\eta}L_{\eta}^{S,2}+O\left(  \eta^{\prime}-\eta\right)  ^{3}.
\end{align*}
After all, we have (\ref{D=JS-2}).
\end{proof}

\section{Summary and questions}

Using tools from convex analysis and matrix analysis, the maximization of the
measured $f$-divergence is rewritten to a simpler form (\ref{Df-sup-T}) on the
assumption that (I) or (II) holds, and derived and proved some closed formulas
and properties of $D_{f}^{\min}$. Some questions are in order. First, what is
the necessary and sufficient condition of $f$ such that (\ref{Df-sup-T})
holds? Second, the condition (I) is written in terms of $f^{\ast}$, but it
would be nicer to have some alternative condition written in terms of $f$
\ itself, since in the study of other versions of quantum $f$-divergence, they
often assume $f$ to be operator convex (and not $f^{\ast}$). Third, it is easy
to obtain a lower bound to $D_{f}^{\min}$ by (\ref{Df-sup-T}), but is there
any good upper bound to the quantity, which is useful for the study of
asymptotic theory?

\appendix

\section{The proof of (\ref{f_0-1}), (\ref{f_0-2}), and (\ref{t0=f'})}

\label{sec:proof-f_0}

To see that $f_{0}$ and $f$ coincide on the positive half-line, observe that
$f^{\ast}$ is monotone non-increasing in the region below $\mathrm{dom}%
\,f_{0}^{\ast}$. Therefore, in that region, $t\rightarrow t\lambda-f^{\ast
}\left(  t\right)  $ is monotone non-decreasing if $\lambda\geq0$. Thus,
\begin{align*}
f\left(  \lambda\right)   &  =\sup_{t\in\mathrm{dom}\,f_{0}^{\ast}}%
t\lambda-f^{\ast}\left(  t\right)  \\
&  =\sup_{t\in\mathrm{dom}\,f_{0}^{\ast}}t\lambda-f_{0}^{\ast}\left(
t\right)  \\
&  =f_{0}\left(  \lambda\right)  .
\end{align*}

Suppose $\lambda<0$. Then in taking supremum of $t\lambda-f_{0}^{\ast}\left(
t\right)  $, the range of $t$ can be limitted to $\mathrm{dom}\,f_{0}^{\ast}$.
Also, $t\lambda-f_{0}^{\ast}\left(  t\right)  $ is monotone decreasing on
$\mathrm{dom}\,f_{0}^{\ast}$ . Therefore,
\begin{align*}
f_{0}\left(  \lambda\right)    & =\sup_{t\in\mathrm{dom}\,f_{0}^{\ast}%
}t\lambda-f_{0}^{\ast}\left(  t\right)  \\
& =t_{0}\lambda-f_{0}^{\ast}\left(  t_{0}\right)  \\
& =t_{0}\lambda+f\left(  0\right)  ,
\end{align*}
where the third equality is by (\ref{t0=inf}).

Thus it remains to show (\ref{t0=f'}) in the $t_{0}>-\infty$-\thinspace case.
To show this, suppose the contrary is true. Since $f_{0}$ is convex,
$t_{0}<f_{+}^{\prime}\left(  0\right)  $. Here note $f_{+}^{\prime}\left(
0\right)  $ is finite since and $f$ is finite at some point on the positive
half line. \ Let $t$ be a real number between $t_{0}$ and $f_{+}^{\prime
}\left(  0\right)  $. Then%
\begin{align*}
f_{0}^{\ast}\left(  t\right)   &  =\max\left\{  \sup_{\lambda>0}\left(
t\lambda-f\left(  \lambda\right)  \right)  ,\sup_{\lambda\leq0}\left(
t\lambda-\left(  \lambda t_{0}+f\left(  0\right)  \right)  \right)  \right\}
\\
&  \leq\max\left\{  \sup_{\lambda>0}\left(  \left(  t-f_{+}^{\prime}\left(
0\right)  \right)  \lambda-f\left(  0\right)  \right)  ,\sup_{\lambda\leq
0}\left(  \left(  t-t_{0}\right)  \lambda-f\left(  0\right)  \right)
\right\}  \\
&  =-f\left(  0\right)  =f^{\ast}\left(  t_{0}\right)  =f_{0}^{\ast}\left(
t_{0}\right)  .
\end{align*}
This contradicts with the requirement that $f_{0}^{\ast}$ should be strictly
monotone increasing in its effective domain. Thus we should have
(\ref{t0=f'}). 

\section{The proof of Theorem\thinspace\ref{th:DfQ-finite}}

\label{sec:proof-th-finite}

By (\ref{Dfmin<DfQ}) and Theorem\thinspace\ref{th:Df-finite}, obviously these
conditions are necessary for $D_{f}^{Q}$ to be finite. Thus, we show the
conditions are sufficient.

Suppose $\mathrm{supp}\rho_{1}\subset\mathrm{supp}\,\rho_{2}$ . Then there is
a number $\lambda\in\left[  0,1\right]  $ $\ $and $\rho_{3}$ with $\rho
_{2}=\lambda\rho_{1}+\left(  1-\lambda\right)  \rho_{3}$. Define $P_{\theta}$
on the set $\left\{  +,-\right\}  $ as follows:%
\begin{align*}
P_{1}\left(  +\right)    & :=1,P_{1}\left(  -\right)  :=0,\\
P_{2}\left(  +\right)    & :=\lambda,P_{2}\left(  -\right)  :=1-\lambda.
\end{align*}
Also, define a CPTP map $\Lambda$ by $\Lambda\left(  \delta_{+}\right)
:=\rho_{1}$ and $\Lambda\left(  \delta_{-}\right)  :=\rho_{3}$. Then
$\Lambda\left(  P_{\theta}\right)  =\rho_{\theta}$ and $\mathrm{supp}%
P_{1}\subset\mathrm{supp}\,P_{2}$. Thus if (\ref{Wf-bounded-1}) holds,
\[
D_{f}^{Q}\left(  \rho_{1}||\rho_{2}\right)  \leq D_{f}\left(  P_{1}%
||P_{2}\right)  <\infty.
\]
Almost parallely, If $\mathrm{supp}\rho_{2}\subset\mathrm{supp}\,\rho_{1}$,
(\ref{Wf-bounded-2}) implies $D_{f}^{\max}\left(  \rho_{1}||\rho_{2}\right)
<\infty$. 

Finally, if  (\ref{Wf-bounded}) holds, for any $P_{\theta}$, $D_{f}\left(
P_{1}||P_{2}\right)  <\infty$. \ Let $P_{1}:=\delta_{+}$ and $P_{2}%
:=\delta_{-}$ . Then,  Also, $\Lambda\left(  P_{\theta}\right)  =\rho_{\theta
}$ holds with  $\Lambda\left(  \delta_{+}\right)  :=\rho_{1}$ and
$\Lambda\left(  \delta_{-}\right)  :=\rho_{2}$. Therefore,  $D_{f}^{Q}\left(
\rho_{1}||\rho_{2}\right)  \leq D_{f}\left(  P_{1}||P_{2}\right)  <\infty$,
and the proof is complete.

\section{The proof of Lemma\thinspace\ref{lem:monotone}}

\label{sec:proof-lem-monotone}

First we show $\mathrm{dom}\,f^{\ast}=\left(  -\infty,a\right)  $ or
$(-\infty,a]$. Suppose the contrary, or equivallently, $\mathrm{dom}\,f^{\ast
}$ is the whole real line. Since $f^{\ast}$ is operator convex, it is
quadratic. Then, since $f$ is canonical, or equivallently $f^{\ast}$ is
monotone increasing, $f^{\ast}\left(  -\infty\right)  =-\infty$,
contradicting  $f^{\ast}\left(  -\infty\right)  >-\infty$. Thus the assertion
holds.  

Suppose $\mathrm{dom}\,f^{\ast}=(-\infty,a]$. Then, $h\left(  t\right)
:=f^{\ast}\left(  -t+a\right)  $ is finite on $[0,\infty)$ and monotone
non-increasing. Also,  $h\left(  \infty\right)  >-\infty$. \ Since $h$ is
monotone non-increasing and proper, $h\left(  \infty\right)  <\infty$, and
\[
\lim_{t\rightarrow\infty}\frac{h\left(  t\right)  }{t}=0.
\]

Therefore, by Proposition\thinspace8.4 of \cite{HiaiMosonyiPetzBeny}, $h$ can
be written as
\[
h\left(  t\right)  =h\left(  0\right)  +\alpha t-\int_{\left(  0,\infty
\right)  }\frac{t}{t+\eta}\mathrm{d}\nu\left(  \eta\right)
\]
using a non-negative measure $\nu$ with $\int_{\left(  0,\infty\right)  }%
\frac{1}{1+\eta}\mathrm{d}\nu\left(  \eta\right)  <\infty.$\ Since
\[
\alpha=\frac{h\left(  t\right)  -h\left(  0\right)  }{t}+\int_{\left(
0,\infty\right)  }\frac{1}{t+\eta}\mathrm{d}\nu\left(  \eta\right)
\]
and
\[
\lim_{t\rightarrow\infty}\int_{\left(  0,\infty\right)  }\frac{1}{t+\eta
}\mathrm{d}\nu\left(  \eta\right)  =0
\]
by Lebesgue's dominated convergence theorem, we have $\alpha=0$. Since for
each $\eta$ the function $t\rightarrow-t/\left(  t+\eta\right)  $ is oprator
monotone decreasing, $h\left(  t\right)  $ is operator monotone decreasing,
implying the assertion.

If $\mathrm{dom}\,f^{\ast}=(-\infty,a)$, due to the above argument, $f^{\ast}$
is operator monotone increasing on $(-\infty,a-\varepsilon]$ for any
$\varepsilon>0$. Suppose the spectrum of $A_{1}$ and $A_{2}$ is a subset of
$(-\infty,a]$, and $A_{1}\geq A_{2}$. Then since $A_{1}-\varepsilon
\mathbf{1}\geq A_{2}-\varepsilon\mathbf{1}$,
\[
f^{\ast}\left(  A_{1}-\varepsilon\mathbf{1}\right)  \geq f^{\ast}\left(
A_{2}-\varepsilon\mathbf{1}\right)  .
\]
Letting $\varepsilon\rightarrow0$, we have $f^{\ast}\left(  A_{1}\right)  \geq
f^{\ast}\left(  A_{2}\right)  $, meaning that $f^{\ast}$ is operator monotone
increasing on $(-\infty,a)$.

\section{Maximization of asymptotic measured relative Renyi entropy}

\label{appendix:asymptotic}

This appendix is a brief review of  \cite{MosonyiOgawa}  and
\cite{HayashiTomamichel}. When $\mathrm{supp}\,\rho_{1}=\mathrm{supp}%
\,\rho_{2}$, we define
\begin{equation}
\tilde{D}_{f_{\alpha}}\left(  \rho_{1}||\rho_{2}\right)  :=\mathrm{sign}%
\left(  \left(  1-\alpha\right)  \alpha\right)  \cdot\mathrm{tr}\,\left(
\rho_{2}^{\frac{1-\alpha}{2\alpha}}\rho_{1}\rho_{2}^{\frac{1-\alpha}{2\alpha}%
}\right)  ^{\alpha}.\label{tildeD-1}%
\end{equation}
$\tilde{D}_{f_{\alpha}}$ is extended to the case of $\mathrm{supp}\,\rho
_{1}\neq\mathrm{supp}\,\rho_{2}$ so that the function is lower semicontinuous.
Namely, we define
\begin{equation}
\tilde{D}_{f_{\alpha}}\left(  \rho_{1}||\rho_{2}\right)  :=\infty
,\label{tildeD-2}%
\end{equation}
for some value of $\alpha$ : we adopt (\ref{tildeD-2}) for the interval
$\alpha\in(1,\infty)$ if $\ker\rho_{2}\cap\mathrm{supp}\,\rho_{1}\neq\left\{
0\right\}  $, and for the interval $\alpha\in(-\infty,0)$, \ if $\ker\rho
_{1}\cap\mathrm{supp}\,\rho_{2}\neq\left\{  0\right\}  $: for the interval
$\alpha\in(0,1)$, we always adopt (\ref{tildeD-1}).

Suppose $\alpha\in\lbrack\frac{1}{2},1)\cup(1,\infty)$. In this case, by
\cite{FrankLieb}, $\tilde{D}_{f_{\alpha}}$ is monotone non-increasing by
CPTP\ maps, or
\[
\tilde{D}_{f_{\alpha}}\left(  \rho_{1}||\rho_{2}\right)  \geq\tilde
{D}_{f_{\alpha}}\left(  \Lambda\left(  \rho_{1}\right)  ||\Lambda\left(
\rho_{2}\right)  \right)
\]
for any CPTP map $\Lambda$. Also, obviously additivity
\[
\log\left\vert \tilde{D}_{f_{\alpha}}\left(  \rho_{1}^{\otimes n}||\rho
_{2}^{\otimes n}\right)  \right\vert =n\log\left\vert \tilde{D}_{f_{\alpha}%
}\left(  \rho_{1}||\rho_{2}\right)  \right\vert
\]
holds and restriction to commutative states equals $D_{f_{\alpha}}$. Thus,
$\tilde{D}_{f_{\alpha}}$ is an upper bound to asymptotic optimal,
\[
\lim_{n\rightarrow\infty}\frac{1}{n}\log\left\vert D_{f_{\alpha}}^{\min
}\left(  \rho_{1}^{\otimes n}||\rho_{2}^{\otimes n}\right)  \right\vert
\leq\log\left\vert \tilde{D}_{f_{\alpha}}\left(  \rho_{1}||\rho_{2}\right)
\right\vert .
\]
The achievability part is shown by \cite{MosonyiOgawa} ($\alpha>1$) and
\cite{HayashiTomamichel} ($\alpha\in(0,1)\cup(1,\infty)$ ).

When $\alpha\in(-\infty,0)\cup(0,\frac{1}{2}]$, we only have to exploit the
fact $D_{f_{\alpha}}\left(  P_{1}||P_{2}\right)  =D_{f_{1-\alpha}}\left(
P_{2}||P_{1}\right)  $.

\begin{remark}
In Remark III.5 of \cite{MosonyiOgawa}, they state that the monotonicity of
$\tilde{D}_{f_{\alpha}}$ ($\alpha>1$) by CPTP maps is not true for the states
with $\ker\rho_{2}\cap\mathrm{supp}\,\rho_{1}\neq\left\{  0\right\}  $. We had
avoided this problem by using the definition (\ref{tildeD-2}).
\end{remark}

\section{}
\end{document}